\pgfplotsset{compat=1.4}
\definecolor{lightgray}{rgb}{0.95,0.95,0.95}
\newtheorem{definition}{Definition}
\newtheorem{example}{Example}
\newtheorem{lemma}{Lemma}
\newtheorem{theorem}{Theorem}
\newtheorem{corollary}{Corollary}
\newtheorem{claim}{Claim}
\newtheorem{proposition}{Proposition}
\theoremstyle{remark}
\newtheorem{remark}{Remark}
\newenvironment{prevproof}[2]{\noindent {\sc {Proof of {#1}~\ref{#2}:}}}{\hfill $\Box$\vskip \belowdisplayskip}
\newcommand{\Lone}[1]{\ensuremath{\|#1\|_1}}
\newcommand{\Linf}[1]{\ensuremath{\|#1\|_{\infty}}}
\newcommand{\cM}{{\cal M}}
\newcommand{\cP}{{\cal P}}
\newcommand{\cT}{{\cal T}}
\begin{document}
\onehalfspacing

\title{Strong Duality for a Multiple-Good Monopolist}

\author {
Constantinos Daskalakis\thanks{Supported by a Sloan Foundation Fellowship, a Microsoft Research Faculty Fellowship, and NSF Awards CCF-0953960 (CAREER) and  CCF-1101491.}\\
EECS, MIT \\
\and
Alan Deckelbaum\thanks{Supported by Fannie and John Hertz Foundation Daniel Stroock Fellowship and NSF Award  CCF-1101491.}\\
Math, MIT\\
\and
Christos Tzamos\thanks{Supported by NSF Award CCF-1101491 and a Simons Award for Graduate Students in TCS.}\\
EECS, MIT\\
}
\maketitle

\begin{abstract}

We characterize optimal mechanisms for the multiple-good monopoly problem and provide a framework to find them. We show that a mechanism is optimal if and only if a measure $\mu$ derived from the buyer's type distribution satisfies certain stochastic dominance conditions. This measure expresses the marginal change in the seller's revenue under marginal changes in the rent paid to subsets of buyer types. As a corollary, we characterize the optimality of grand-bundling mechanisms, strengthening several results in the literature, where only sufficient optimality conditions have been derived. As an application, we show that the optimal mechanism for $n$ independent uniform items each supported on $[c,c+1]$ is a grand-bundling mechanism, as long as $c$ is sufficiently large, extending Pavlov's result for $2$ items~\cite{Pavlov11}. At the same time, our characterization also implies that, for all $c$ and for all sufficiently large $n$, the optimal mechanism for $n$ independent uniform items supported on $[c,c+1]$ {\em is not} a grand bundling mechanism.

\end{abstract}

\noindent \textbf{Keywords:} Revenue maximization, mechanism design, strong duality, grand bundling

\newpage 

\section{Introduction} \label{sec:introduction}
We study the problem of revenue maximization for a multiple-good monopolist. Given $n$ heterogenous goods and a probability distribution $f$ over $\mathbb{R}^n_{\ge 0}$, we wish to design a mechanism that optimizes the monopolist's expected revenue against an additive (linear) buyer whose values for the goods are distributed according to $f$. 

The single-good version of this problem---namely, $n=1$---is well-understood,~going back~to~\cite{riley1981optimal,Myerson81,maskin1984monopoly,riley1983optimal},~where~it is shown that a take-it-or-leave-it offer of the good at some price is optimal, and the optimal price can be easily calculated from $f$. 

For general $n$, it has been known that the optimal mechanism may exhibit much richer structure. Even when the item values are independent, the mechanism may benefit from selling bundles of items or even lotteries over bundles of items~\cite{McAfeeMW89,BakosB99,Thanassoulis04,ManelliV06}. Moreover, no general framework to approach this problem has been proposed in the literature, making it dauntingly difficult both to identify optimal solutions and to certify the optimality of those solutions. As a consequence, seemingly simple special cases (even $n=2$) remain poorly understood, despite much research for a few decades. See, e.g.,~\cite{rochet2003economics} for a comprehensive survey of work spanning our problem, as well as~\cite{ManelliV07} and~\cite{figalli2011multidimensional} for additional references.

We propose a novel framework for revenue maximization based on duality theory. We identify a minimization problem that is dual to revenue maximization and prove that  the optimal values of these problems are always equal. Our framework allows us to identify optimal mechanisms in general settings, and certify their optimality by providing a complementary solution to the dual problem, namely finding a solution to the dual whose objective value equals the mechanism's revenue. Our framework is applicable to arbitrary settings of $n$ and $f$, with mild assumptions such as differentiability. In particular, we strengthen prior work~\cite{ManelliV06,DaskalakisDT13,GiannakopoulosK14}, which identified optimal mechanisms in special cases. We exhibit the practicality of our framework by solving several examples. Importantly, we can leverage our duality theorem to characterize optimal multi-item mechanisms. From a technical standpoint we provide new analytical methodology for multi-dimensional mechanism design by providing extensions to Monge-Kantorovich duality for optimal transportation.
We proceed to discuss our contributions in detail, providing a roadmap to the paper, and conclude this section with a discussion of related work.

\paragraph{Strong Duality.} Our first main result (presented as Theorem~\ref{strongduality}) formulates a dual problem to the optimal mechanism design problem, and establishes strong duality between the two problems. That is, we show that the optimal values of the two optimization problems are identical. Our approach for developing this dual problem is outlined below.

 We start by formulating optimal mechanism design as a maximization problem over convex, non-decreasing and $1$-Lipschitz continuous functions $u$, representing the utility of the buyer as a function of her type, as in~\cite{Rochet1987}. The objective function of this maximization problem can be written as the expectation of $u$ with respect to a signed measure $\mu$ over the type space of the buyer.
Measure $\mu$ is easily derived from the buyer's type distribution $f$ (see Equation~\eqref{transformed}) and expresses the marginal change in the seller's revenue under marginal changes in the rent paid to subsets of buyer types. Our formulation is summarized in Theorem~\ref{setupclaim}, while Section~\ref{examplesetup} illustrates our formulation in the basic setting of independent uniform items.

 In Theorem~\ref{strongduality}, we formulate a dual in the form of an optimal transportation problem, and establish strong duality between the two problems. Roughly speaking, our dual formulation is given the signed measure $\mu$ (from Theorem~\ref{setupclaim}) and solves the following minimization problem: (i)~first, it is allowed to choose any measure $\mu'$ that stochastically dominates $\mu$ with respect to convex increasing functions; (ii) second, it is supposed to find a coupling of the positive part $\mu'_+$ of $\mu'$ with its negative part $\mu'_-$ i.e. find a transportation from $\mu'_+$ to $\mu'_-$;
(iii) if a unit of mass of $\mu'_+$ at $x$ is transported to a unit of mass of $\mu'_-$ at $y$, we are charged $\|x-y\|_1$. The goal is to minimize the cost of the coupling with respect to the decisions in (i) and (ii). 

 While our dual formulation takes a simple form, establishing strong duality is quite technical. At a high level, our proof follows the proof of Monge-Kantorovich duality in~\cite{Villani}, making use of the Fenchel-Rockafellar duality theorem, but the technical aspects of the proof are different due to the convexity constraint on feasible utility functions. The proof is presented in the online appendix, but it is not necessary to understand the other results in this paper. We note that our formulation from Theorem~\ref{setupclaim} defines a convex optimization problem. One would hope then that infinite-dimensional linear programming techniques~\cite{luenberger1968optimization, anderson1987linear} can be leveraged to establish the existence of a strong dual. We are not aware of such an approach, and expect that such formulations will fail to establish existence of interior points in the primal feasible set, which is necessary for strong duality.

 As already emphasized earlier, our identification of a strong dual implies that the optimal mechanism admits a certificate of optimality, in the form of a dual witness, for all settings of $n$ and $f$. Hence, our duality framework can play the role of first-order conditions certifying the optimality of single-dimensional mechanisms. Where optimality of single-dimensional mechanisms can be certified by checking virtual welfare maximization, optimality of multi-dimensional mechanisms is always certifiable by providing dual solutions whose value matches the revenue of the mechanism, and such dual solutions take a simple form: they are transportation maps between measures. 

 Using our framework, we can provide shorter proofs of optimality of known mechanisms. As an illustrating example, we show in Section~\ref{sec:uniform 0-1} how to use our framework to establish the optimality of the mechanism for two i.i.d. uniform $[0,1]$ items proposed by~\cite{ManelliV06}. Then in Section~\ref{sec: uniform non 0-1}, we provide a simple illustration of the power of our framework, obtaining the optimal mechanism for two independent uniform $[4,16]$ and uniform $[4,7]$ items, a setting where the results of~\cite{ManelliV06,Pavlov11,DaskalakisDT13,GiannakopoulosK14} fail to apply. The optimal mechanism has the somewhat unusual structure shown in the diagram in Section~\ref{examplesection}, where types in $Z$ are allocated nothing (and pay nothing),  types in $W$ are allocated the grand bundle (at price $12$), while types in $Y$ are allocated item $2$ and get item $1$ with probability $50\%$ (at price $8$).

\paragraph{Characterization of Optimal Mechanisms.} Substantial effort in the literature has been devoted to studying optimality of mechanisms with a simple structure such as pricing mechanisms; see, e.g., \cite{ManelliV06} and~\cite{DaskalakisDT13} for sufficient conditions under which mechanisms that only price the grand bundle of all items are optimal. Our second main result (presented as Theorem~\ref{bundlingtheorem}) obtains {\em necessary and sufficient} conditions characterizing the optimality of arbitrary mechanisms with a finite menu size. We proceed to describe our characterization result in more detail.

 Suppose that we are given a feasible mechanism $\cal M$ whose set of possible allocations is finite. We can then partition the type set into finitely many subsets (called regions) ${\cal R}_1,\ldots,{\cal R}_k$ of types who enjoy the same price and allocation. The question is this: {\em for what type distributions is ${\cal M}$ optimal?} Theorem~\ref{bundlingtheorem} answers this question with a sharp characterization result: $\cal M$ is optimal {\em if and only if} the measure $\mu$ (derived from the type distribution as described above) satisfies $k$ stochastic dominance conditions, one per region in the afore-defined partition. The type of stochastic dominance that $\mu$ restricted to region ${\cal R}_i$ ought to satisfy depends on the allocation to types from ${\cal R}_i$, namely which set of items are allocated with probability $1$, $0$, or non-$0$/$1$.

Theorem~\ref{bundlingtheorem} is important in that it reduces checking the optimality of mechanisms to checking standard stochastic dominance conditions between measures derived from the type distribution $f$, which is a concrete and easier task than arguing optimality against all possible mechanisms. 

 Theorem~\ref{bundlingtheorem} is a corollary of our strong duality framework (Theorem~\ref{strongduality}), but requires a sequence of technical results. One direction of our characterization result requires turning the stochastic dominance conditions  into dual solutions that can be plugged into Theorem~\ref{strongduality} to establish the optimality of a given mechanism. The other direction requires showing that a dual solution certifying the optimality of a given mechanism also implies that the stochastic dominance conditions of Theorem~\ref{bundlingtheorem} must hold. 

 A particularly simple special case of our characterization result pertains to the optimality of the grand-bundling mechanism. See Theorem~\ref{grandbundlingtheorem}. We show that the mechanism offering the grand bundle at price $p$ is optimal {\em if and only if}  measure $\mu$ satisfies a pair of stochastic dominance conditions. 
In particular, if $Z$ are the types who cannot afford the grand bundle and $W$ the types who can, then offering the grand bundle for $p$ is optimal if and only if the following conditions hold:
\begin{itemize}[label={-},leftmargin=10pt]
\item $\mu_-\vline_Z$, the negative part of $\mu$ restricted to $Z$, stochastically dominates $\mu_+\vline_Z$, the positive part of $\mu$ restricted to $Z$, with respect to all convex increasing functions;
\item $\mu_+\vline_W$  stochastically dominates $\mu_-\vline_W$ with respect to all concave increasing functions.
\end{itemize}
Already our characterization of grand-bundling optimality settles a long line of research which only obtained sufficient conditions for the optimality of grand-bundling.

 In turn, we illustrate the power of our characterization of grand-bundling optimality with Theorems~\ref{nuniform} and~\ref{nuniform-notbundling}, two results that are interesting on their own right. Theorem~\ref{nuniform} generalizes the corresponding result of~\cite{Pavlov11} from two to an arbitrary number of items. We show that, for any number of items $n$, there exists a large enough $c$ such that the optimal mechanism for $n$ i.i.d. uniform $[c,c+1]$ items is a grand-bundling mechanism. While maybe an intuitive claim, we do not see a direct way of proving it. Instead, we utilize Theorem~\ref{grandbundlingtheorem} and construct intricate couplings establishing the stochastic dominance conditions required by the theorem. In view of Theorem~\ref{nuniform}, our companion theorem, Theorem~\ref{nuniform-notbundling}, seems even more surprising. We show that in the same setting of $n$ i.i.d. uniform $[c,c+1]$ items, for any fixed $c$ it holds that, for all sufficiently large $n$, the optimal mechanism {\em is not} (!) a grand-bundling mechanism. See Section~\ref{bundlingsection} for the proofs of these results.

\paragraph{Related Work.} 

There is a rich literature on multi-item mechanism design pertaining to the multiple good monopoly problem that we consider here. We refer the reader to the surveys~\cite{rochet2003economics,ManelliV07,figalli2011multidimensional} for a detailed description, focusing on the work closest to ours. 

Much work has focused on obtaining sufficient conditions for optimality of mechanisms. Hart and Nisan~\cite{hart2014good}, Menicucci et al~\cite{menicucci2015optimality} and Haghpanah and Hartline~\cite{haghpanah2015reverse} provide sufficient conditions for the grand-bundling mechanism to be optimal. Manelli and Vincent~\cite{ManelliV06} provide conditions for the optimality of more complex deterministic mechanisms and, similarly, \cite{DaskalakisDT13,GiannakopoulosK14} provide sufficient conditions for the optimality of general (possibly randomized) mechanisms. Finally, Haghpanah and Hartline~\cite{haghpanah2015reverse} provide an approach for reverse engineering sufficient conditions for a simple mechanism to be optimal. These works on sufficient conditions apply to limited settings of $n$ and $f$. They typically proceed by relaxing some of the truthfulness constraints and are therefore only applicable when the relaxed constraints are not binding at the optimum.

In addition to sufficient conditions, a lot of work has focused on characterizing properties of optimal mechanisms. Armstrong~\cite{armstrong1996multiproduct} has shown that optimal mechanisms always exclude a fraction of buyer types of low value from the mechanism. Thanassoulis~\cite{Thanassoulis04}, Briest et al~\cite{briest2010pricing} and Hart and Nisan~\cite{HartN13} show that randomization is necessary for optimal revenue extraction. In turn, Manelli and Vincent~\cite{ManelliV07} have shown that there exist type distributions for which optimal mechanisms are arbitrarily complex. Hart and Reny provide an interesting example where a product type distribution over two items stochastically dominates another, yet the optimal revenue from the weaker distribution is higher~\cite{hart2015maximal}. Finally, some literature~\cite{armstrong1999price, hart2014good,babaioff2014simple,li2013revenue,cai2013simple} has focused on the revenue guarantees of simple mechanisms, e.g. bundling all items together or selling them separately.

Rochet and Chon\'e~\cite{RochetChone} study a closely related setting, providing a characterization of the optimal mechanism for the multiple good monopoly problem where the monopolist has a (strictly) convex  cost for producing copies of the goods. With strictly convex production costs, optimal mechanism design becomes a strictly concave maximization problem, which allows the use of first-order conditions to characterize optimal mechanisms. Our problem can be viewed as having a production cost that is~$0$ for selling at most one unit of each good and infinity otherwise. While still convex, our production function is not strictly convex and is discontinuous, making first-order conditions less useful for characterizing optimal mechanisms. This motivates the use of duality theory in our setting. From a technical standpoint, optimal mechanism design necessitates the development of new tools in optimal transport theory~\cite{Villani}, extending Monge-Kantorovich duality to accommodate convexity constraints in the dual of the transportation problem. In our setting, the dual of the transportation problem corresponds to the mechanism design problem and these constraints correspond to the requirement that the utility function of the buyer be convex, which is intimately related to the truthfulness of the mechanism~\cite{Rochet1987}. In turn, accommodating the convexity constraints in the mechanism design problem requires the introduction of mean-preserving spreads of measures in its transportation dual, resembling the  ``multi-dimensional sweeping'' of Rochet and Chon\'e. 

Ultimately, our work relies on and develops further a fundamental connection of optimal transportation to designing optimal mechanisms. See Ekeland's notes on Optimal Transportation~\cite{ekeland2010notes} for more connections to mechanism design.

\section{Revenue Maximization as Optimization Program}\label{setupsection}

\subsection{Setting up the Optimization Program}
Our goal is to find the revenue-optimal mechanism $\cM$ for selling $n$ goods to a single additive buyer. An additive buyer has a \emph{type} $x$ specifying his value for each good. The type $x$ is an element of a \emph{type space} $X = \prod_{i=1}^n [x^{\textrm{low}}_i,x^{\textrm{high}}_i]$, where $x^{\textrm{low}}_i,x^{\textrm{high}}_i$ are non-negative real numbers. While the buyer knows his type with certainty, the mechanism designer only knows the probability distribution over $X$ from which $x$ is drawn. We assume that the distribution has a density $f: X \rightarrow \mathbb{R}$ that is continuous and differentiable with bounded derivatives.

 Without loss of generality, by the revelation principle, we consider direct mechanisms. A (direct) mechanism consists of two functions: {\tt (i)} an \emph{allocation function} $\cP : X \rightarrow [0,1]^n$ specifying the probabilities, for each possible type declaration of the buyer, that the buyer will be allocated each good, and {\tt (ii)} a \emph{price function} $\cT : X \rightarrow \mathbb{R}$ specifying, for each declared type of the buyer, the price that he is charged. When an additive buyer of type $x$ declares himself to be of type $x' \in X$, he receives net expected utility $x \cdot \cP(x') - \cT(x')$.

We restrict our attention to mechanisms that are \emph{incentive compatible}, meaning that the buyer must have adequate incentives to reveal his values for the items truthfully, and \emph{individually rational}, meaning that the buyer has an incentive to participate in the mechanism. 

\begin{definition}
Mechanism $\cM = (\cP, \cT)$ over type space $X$ is \emph{incentive compatible (IC)} if and only if  $x\cdot \cP(x) - \cT(x) \geq x \cdot \cP(x') - \cT(x')$ for all $x,x' \in X$.
\end{definition}

\begin{definition}
Mechanism $\cM= (\cP, \cT)$ over type space $X$ is \emph{individually rational (IR)} if and only if $x\cdot \cP(x) - \cT(x) \geq 0$ for all $x \in X$.
\end{definition}

When a buyer truthfully reports his type to a mechanism $\cM = (\cP,\cT)$ (over type space $X$), we denote by $u: X \rightarrow \mathbb{R}$ the function that maps the buyer's valuation to the utility he receives by $\cM$. It follows by the definitions of $\cP$ and $\cT$ that $u(x) = x\cdot \cP(x) - \cT(x)$. It is well-known (see \cite{Rochet1987}, \cite{RochetChone}, and \cite{ManelliV06}), that an IC and IR mechanism has a convex, nonnegative, nondecreasing, and 1-Lipschitz utility function with respect to the $\ell_1$ norm and that any utility function satisfying these properties is the utility function of an IC and IR mechanism with $\cP(x) = \nabla u(x)$ and $\cT(x) =  \cP(x) \cdot x - u(x)$.\footnote{On the measure-0 set on which $\nabla u$ is not defined, we can use an analogous expression for $\cP$ by choosing appropriate values of $\nabla u$ from the subgradient of $u$.}

We clarify that a function $u$ is 1-Lipschitz with respect to the $\ell_1$ norm if  $u(x) - u(y) \leq \Lone{x-y}$ for all $x,y \in X$. This is essentially equivalent to all partial derivatives having magnitude at most 1 in each dimension.

We will formulate the mechanism design problem as an optimization problem over feasible utility functions $u$. We first define the notation:

\begin{itemize}[label={-},leftmargin=10pt]
	\item $\mathcal{U}(X)$ is the set of all continuous, non-decreasing, and convex functions $u: X \rightarrow \mathbb{R}$.
	\item $\mathcal{L}_1(X)$ is the set of all 1-Lipschitz with respect to the $\ell_1$ norm functions $u: X \rightarrow \mathbb{R}$.
\end{itemize} 

\noindent In this notation, a mechanism $\cM$ is IC and IR if and only if its utility function $u$ satisfies $u \geq 0$ and $u \in \mathcal{U}(X) \cap \mathcal{L}_1(X)$. It follows that the optimal mechanism design problem can be viewed as an optimization problem: $$\sup_{\substack{u \in \mathcal{U}(X) \cap \mathcal{L}_1(X)\\ u \ge 0}}\int_{X} [ \nabla u(x) \cdot x - u(x) ] f(x) dx.$$

Notice that for any utility $u$ defining an IC and IR mechanism, the function $\tilde{u}(x) =  u(x)-u(x^{\textrm{low}})$ also defines a valid IC and IR mechanism since $\tilde{u} \in \mathcal{U}(X) \cap \mathcal{L}_1(X)$ and $\tilde{u} \ge 0$. Moreover, $\tilde{u}$ achieves at least as much revenue as $u$, and thus it suffices in the above program to look only at feasible $u$ with $u(x^{\textrm{low}})=0$.

We claim that we can therefore remove the constraint $u\ge0$ and equivalently focus on solving
\begin{align}
\sup_{u \in \mathcal{U}(X) \cap \mathcal{L}_1(X)}\int_{X} [ \nabla u(x) \cdot x - ( u(x) - u(x^{\textrm{low}}) ) ] f(x) dx. \label{eq:revenue opt1}
\end{align}
Indeed, this objective function agrees with the prior one whenever $u(x^{\textrm{low}})=0$. Furthermore, for any $u \in \mathcal{U}(X) \cap \mathcal{L}_1(X)$, the function  $\tilde{u}(x) =  u(x)-u(x^{\textrm{low}})$ is nonnegative and achieves the same objective value. Applying the divergence theorem as in \cite{ManelliV06} we may rewrite the expression for expected revenue in~\eqref{eq:revenue opt1} as follows:
\begin{align}
&\int_{X} [ \nabla u(x) \cdot x - ( u(x) - u(x^{\textrm{low}}) ) ] f(x) dx = \notag\\
&~~~~~\int_{\partial X} u(x)f(x)(x \cdot \hat{n})dx -\int_X u(x) ( \nabla f(x) \cdot x + (n+1)f(x))dx + u(x^{\textrm{low}}) \label{eq:revenue expression 1}
\end{align}
where $\hat{n}$ denotes the outer unit normal field to the boundary $\partial X$. To simplify notation we make the following definition.
\begin{definition}[Transformed measure]
  \label{def:transformed measure}
The \emph{transformed measure} of $f$ is the (signed) measure $\mu$ (supported within $X$) given by the property that
 \begin{align}\label{transformed}
\mu(A) \triangleq \hspace{-5pt} \int_{\partial X} \hspace{-5pt} \mathbb{I}_A(x) f(x)(x \cdot \hat{n})dx - \hspace{-5pt} \int_X \mathbb{I}_A(x) ( \nabla f(x) \cdot x + (n+1)f(x))dx + \mathbb{I}_A(x^{\textrm{low}})
\end{align}
for all measurable sets $A$.\footnote{It follows from boundedness of $f$'s partial derivatives that $\mu$ is a Radon measure. Throughout this paper, all ``measures'' we use will be Radon measures.}
\end{definition}

\begin{mdframed}[style=boxed]
{\bf Interpretation of Transformed Measure:} Given~\eqref{eq:revenue expression 1} and~\eqref{transformed}, the revenue of the seller in Formulation~\eqref{eq:revenue opt1} can be written as $\int_{X} u d\mu$, which is a linear functional of $u$ with respect to the measure $\mu$. Hence, we will maintain the following intuition of what measure $\mu$ represents:\\

\begin{minipage}{13cm}
``Measure $\mu$ quantifies the marginal change in revenue with respect to marginal changes in the rent paid to subsets of buyer types.'' 
\end{minipage}

\bigskip \noindent Moreover, our measure satisfies that $\mu(X) = \int_{X} 1 d\mu = 0.$ Indeed, if we substitute $u(x)=1$ to the left hand side of~\eqref{eq:revenue expression 1}, we have that
$$\int_{X} [ \nabla u(x) \cdot x - ( u(x) - u(x^{\textrm{low}}) ) ] f(x) dx = 0.$$
Furthermore, we have $|\mu|(X)<\infty$, since $f$, $\nabla f$ and $X$ are bounded.
\end{mdframed}

Summarizing the above derivation, we obtain the following theorem.

\begin{theorem}[Multi-Item Monopoly Problem]\label{setupclaim}
The problem of determining the optimal IC and IR mechanism for a single additive buyer whose values for $n$ goods are distributed according to the joint distribution $f:X \rightarrow \mathbb{R}_{\geq 0}$ is equivalent to solving the optimization problem
\vspace{-15pt}
\begin{align}
\sup_{u \in \mathcal{U}(X) \cap \mathcal{L}_1(X)} \int_X ud\mu \label{eq:multigood monopoly formulation}
\end{align}
where $\mu$ is the transformed measure of $f$ given in (\ref{transformed}). 
\end{theorem}

\subsection{Example}\label{examplesetup}
Consider $n$ independently distributed items, where the value of each item $i$ is drawn uniformly from the bounded interval $[a_i,b_i]$ with $0\leq a_i < b_i < \infty$. The support of the joint distribution is the set $X = \prod_i [a_i,b_i]$.

For notational convenience, define $v \triangleq \prod_i (b_i - a_i)$, the volume of $X$. The joint distribution of the  items is given by the constant density function $f$ taking value $1/v$ throughout $X$.
The transformed measure $\mu$ of $f$ is given by the relation
$$
 \mu(A) = \mathbb{I}_A(a_1,\ldots,a_n) + \frac{1}{v} \int_{\partial X} \mathbb{I}_A(x) (x \cdot \hat{n})dx - \frac{n+1}{v} \int_X \mathbb{I}_A(x)dx
$$
for all measurable sets $A$. Therfore, by Theorem~\ref{setupclaim}, the optimal revenue is equal to $\sup_{u \in \mathcal{U}(X) \cap \mathcal{L}_1(X)} \int_X u d\mu$, where $\mu$ is the sum of:
\begin{itemize}
	\item A point mass of $+1$ at the point $(a_1,\ldots,a_n)$.
	\item A mass of $-(n+1)$ distributed uniformly throughout the region $X$.
	\item A mass of $+\frac{b_i}{b_i-a_i}$ distributed uniformly on each surface $\left\{ x \in \partial X : x_i = b_i\right\}$.
	\item A mass of $-\frac{a_i}{b_i-a_i}$ distributed uniformly on each surface $\left\{ x \in \partial X : x_i = a_i\right\}$.
\end{itemize}

\section{The Strong Mechanism Design Duality Theorem}

Thus far, we have compactly formulated the problem facing the multi-item monopolist as an optimization problem with respect to the buyer's utility function; see Formulation~\eqref{eq:multigood monopoly formulation}. Unfortunately, the problem is infinite dimensional and cannot be solved directly. Moreover, the problem is not strictly convex so we cannot characterize its optimum using first order conditions. This is an important point of departure in comparison with the work of Rochet and Chon\'e~\cite{RochetChone}, where the strict convexity of the cost function allowed first order conditions to drive the characterization. For more discussion see Section~\ref{sec:introduction}.

In the absence of strict convexity, our approach is to use duality theory. We are seeking to identify a minimization problem, called ``the dual problem,'' and which is linked to Formulation~\eqref{eq:multigood monopoly formulation}, henceforth called ``the primal problem,'' as follows:
\begin{enumerate}
\item We want that the value of any solution to the dual problem is larger than the revenue achieved by any solution to the primal problem. If a minimization problem satisfies this property, it is called a ``weak dual problem.''
\item Additionally, we want that the optimum of the dual problem matches the optimum of the primal problem. A minimization problem satisfying this property is called a ``strong dual problem.'' It is clear that a strong dual problem is also a weak dual problem. This type of strong dual problem is what we will identify in Theorem~\ref{strongduality} of this section.
\end{enumerate}

The importance of identifying a strong dual problem is the following. Given a candidate optimal mechanism, we are guaranteed that a  solution to the dual problem with a matching objective value exists if and only if the candidate mechanism is indeed optimal. Therefore, solutions to the dual problem constitute ``certificates of optimality'' for solutions to the primal, and strong duality guarantees that such dual certificates are always possible to find for optimal solutions to the primal. Accordingly, we will be seeking solutions to our dual problem from Theorem~\ref{strongduality} to obtain ``certificates, or witnesses, of optimality'' for candidate optimal mechanisms. By this we mean that we will be seeking solutions to the dual that prove (via duality theory) that a candidate optimal mechanism is indeed optimal. Moreover, these dual solutions take the form of optimal transportation maps between submeasures induced by measure $\mu$ of Definition~\ref{def:transformed measure}. This tight connection between optimal mechanisms (primal solutions) and optimal transportation maps (dual solutions) drives our characterization of optimal mechanisms in Theorem~\ref{bundlingtheorem}, as well as the concrete examples we work out in Sections~\ref{examplesection}, \ref{bundlingexamplesection} and \ref{sec:further examples}. Moreover, by ``reverse-engineering the duality theorem'' we provide a framework for identifying optimal mechanisms in Section~\ref{weakstructural}.

Recent work has applied duality theory to identify optimal mechanisms in the same setting as ours~\cite{ManelliV06,DaskalakisDT13,GiannakopoulosK14}, albeit this work is restricted in that they only provide weak dual problems. These approaches remove constraints related to truthfulness from the primal formulation, and identify weak dual formulations to such relaxed primal formulations. As such, they provide no guarantee that they can identify dual certificates of optimality for optimal mechanisms. Indeed, while these techniques suffice in certain settings (namely when the constraints removed from the primal happen not to be binding at the optimum), there are simple examples where they fail to apply. Section~\ref{sec: uniform non 0-1} provides such a two-item example with uniformly distributed values. In contrast to prior work, we achieve strong duality for the (unrelaxed) primal formulation and our approach is always guaranteed to work.

In this section, we show how to pin down the right dual formulation for the problem and prove strong duality. The proof of the result requires many analytical tools from measure theory. We give a rough sketch of the proof in this section and postpone the more technical details to the online appendix.

\subsection{Measure-Theoretic Preliminaries}

We start with some useful measure-theoretic notation:

\begin{itemize}[label={-},leftmargin=10pt]
	\item $\Gamma(X)$ and $\Gamma_+(X)$ denote the sets of signed and unsigned (Radon) measures on $X$.
	\item Given an unsigned measure $\gamma \in \Gamma_+(X \times X)$, we denote by $\gamma_1, \gamma_2$ the two marginals of $\gamma$, i.e. $\gamma_1(A) = \gamma(A \times X)$ and  $\gamma_2(A) = \gamma(X \times A)$ for all measurable sets $A \subseteq X$.
		\item For a (signed) measure $\mu$ and a measurable $A \subseteq X$, we define the {\em restriction of $\mu$ to $A$}, denoted $\mu|_A$, by the property $\mu|_A(S) = \mu(A \cap S)$ for all measurable $S$.
	\item For a signed measure $\mu$, we will denote by $\mu_+, \mu_- $ the positive and negative parts of $\mu$, respectively. That is, $\mu = \mu_+ - \mu_-$, where $\mu_+$ and $\mu_-$ provide mass to disjoint subsets of $X$.
\end{itemize}

We will also be needing certain stochastic dominance properties, namely first- and second-order stochastic dominance as well as the notion of convex dominance.
\begin{definition}
We say that $\alpha$ \emph{first-order} (respectively {\em second-order}) dominates $\beta$ for $\alpha, \beta \in \Gamma(X)$, denoted $\alpha \succeq_{1} \beta$ (respectively $\alpha \succeq_{2} \beta$), if for all non-decreasing continuous (respectively  non-decreasing concave) functions $u: X \rightarrow \mathbb{R}$,
$\int u d\alpha \geq \int u d\beta.$

Similarly, for vector random variables $A$ and $B$ with values in $X$, we say that $A \succeq_{1} B$ (respectively $A \succeq_{2} B$) if $\mathbb{E}[u(A)] \geq \mathbb{E}[u(B)]$ for all non-decreasing continuous (respectively  non-decreasing concave) functions $u: X \rightarrow \mathbb{R}$.
\end{definition}

\begin{definition}
We say that $\alpha$ \emph{convexly dominates} $\beta$ for $\alpha, \beta \in \Gamma(X)$, denoted $\alpha \succeq_{cvx} \beta$, if for all (non-decreasing, convex) functions $u \in \mathcal{U}(X)$,
$\int u d\alpha \geq \int u d\beta.$

Similarly, for vector random variables $A$ and $B$ with values in $X$, we say that $A \succeq_{cvx} B$ if $\mathbb{E}[u(A)] \geq \mathbb{E}[u(B)]$ for all $u \in \mathcal{U}(X)$.
\end{definition}

\begin{mdframed}[style=boxed]
 {\bf Interpretation of Convex Dominance:} For intuition, a measure $\alpha \succeq_{cvx} \beta$ if we can transform $\beta$ to $\alpha$ by doing the following two operations:
\begin{enumerate}
\item sending (positive) mass to coordinatewise larger points: this makes the integral $\int u d\beta$ larger since $u$ is non-decreasing.
\item spreading (positive) mass so that the mean is preserved:
this makes the integral $\int u d\beta$ larger since $u$ is convex.
\end{enumerate}
The existence of a valid transformation using the above operations is equivalent to convex dominance. This follows by Strassen's theorem presented in the online appendix.
\end{mdframed}

\subsection{Mechanism Design Duality}

The main result of this paper is that the mechanism design problem, formulated as a maximization problem in Theorem~\ref{setupclaim}, has a strong dual problem, as follows:
\begin{theorem}[Strong Duality Theorem]\label{strongduality}
Let $\mu \in \Gamma(X)$ be the transformed measure of the probability density $f$ according to Definition~\ref{def:transformed measure}. Then
\begin{align}\sup_{u \in \mathcal{U}(X) \cap \mathcal{L}_1(X)} \int_X u d\mu = \inf_{\substack{\gamma \in \Gamma_+(X \times X) \\ \gamma_1 - \gamma_2 \succeq_{cvx} \mu  }}\int_{X \times X} \Lone{x-y} d\gamma(x,y)\label{eq:strong duality}
  \end{align}
and both the supremum and infimum are achieved. Moreover, the infimum is achieved for some $\gamma^*$ such that $\gamma_1^*(X) = \gamma_2^*(X) = \mu_+(X)$, $\gamma_1^* \succeq_{cvx} \mu_+$, and $\gamma_2^* \preceq_{cvx} \mu_-$.
\end{theorem}

\begin{mdframed}[style=boxed]
{\bf Interpretation of the Strong Dual Problem:} The dual problem of minimizing $\int \Lone{x-y}d\gamma$ is an optimization problem that can be intuitively thought as a two step process:

\textbf{Step 1}: Transform $\mu$ into a new measure $\mu'$ with $\mu'(X) = 0$ such that $\mu' \succeq_{cvx} \mu$. This step is similar to sweeping as defined in \cite{RochetChone} where they transform the original measure by mean-preserving spreads. However, here we are also allowed to perform positive mass transfers to coordinatewise larger points.

\textbf{Step 2}: Find a joint measure $\gamma \in \Gamma_+(X \times X)$ with $\gamma_1 = \mu'_+, \gamma_2 = \mu'_-$ such that $\int \Lone{x-y}d\gamma(x,y)$ is minimized. This is an optimal mass transportation problem where
  the cost of transporting a unit of mass from a point $x$ to a point $y$ is the $\ell_1$ distance $\Lone{x-y}$, and we are asked for the cheapest method of transforming the positive part of $\mu'$ into the negative part of $\mu'$.
Transportation problems of this form have been studied in the mathematical literature. See \cite{Villani}. 

Overall, our goal in the dual problem is to match the positive part of $\mu$ to the negative part of $\mu$ at a minimum cost where some operations come for free, namely we can choose any $\mu' \succeq_{cvx} \mu$ that is convenient to us, foreseeing that transporting $\mu'_+$ to $\mu_{-}'$ comes at a cost equal to the total $\ell_1$ distance
that mass travels.
\end{mdframed}

We remark that establishing that the right hand side of~\eqref{eq:strong duality} is a weak dual for the left hand side is easy. Proving strong duality is significantly more challenging, and relies on non-trivial analytical tools such as the Fenchel-Rockafellar duality theorem. We postpone that proof to the online appendix, and proceed to show weak duality.

\begin{lemma}[Weak Duality]\label{weakduality}
Let $\mu \in \Gamma(X)$. Then
$$\sup_{u \in \mathcal{U}(X) \cap \mathcal{L}_1(X)} \int_X u d\mu \leq \inf_{\substack{  \gamma \in \Gamma_+(X \times X) \\ \gamma_1 - \gamma_2 \succeq_{cvx} \mu   }}\int_{X\times X} \Lone{x-y} d\gamma.$$
\end{lemma}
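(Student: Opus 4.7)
The plan is the standard weak-duality computation: take an arbitrary primal-feasible $u$ and an arbitrary dual-feasible $\gamma$, and show directly that $\int_X u\,d\mu \le \int_{X\times X}\|x-y\|_1\,d\gamma$. Taking the supremum over $u$ on the left and the infimum over $\gamma$ on the right then yields the claim. I expect no serious obstacle here; the only point requiring attention is that convex dominance is applied to the signed measure $\gamma_1-\gamma_2$, but this is exactly how it is defined in the statement of Theorem~\ref{strongduality}.

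Concretely, fix $u\in\mathcal{U}(X)\cap\mathcal{L}_1(X)$ and $\gamma\in\Gamma_+(X\times X)$ with $\gamma_1-\gamma_2\succeq_{cvx}\mu$. First I would apply convex dominance: since $u\in\mathcal{U}(X)$ (non-decreasing and convex) and $\gamma_1-\gamma_2\succeq_{cvx}\mu$, by definition
\[
\int_X u\,d\mu \;\le\; \int_X u\,d(\gamma_1-\gamma_2) \;=\; \int_X u\,d\gamma_1 \;-\; \int_X u\,d\gamma_2.
\]
Next, I would rewrite each marginal integral as an integral on the product space using the definition of the marginals, obtaining
\[
\int_X u\,d\gamma_1 - \int_X u\,d\gamma_2 \;=\; \int_{X\times X} u(x)\,d\gamma(x,y) - \int_{X\times X} u(y)\,d\gamma(x,y) \;=\; \int_{X\times X}\bigl(u(x)-u(y)\bigr)\,d\gamma(x,y).
\]
Finally I would invoke the $\ell_1$-Lipschitz hypothesis: for every $x,y\in X$, $u(x)-u(y)\le \|x-y\|_1$. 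Since $\gamma$ is a nonnegative measure, monotonicity of the integral gives
\[
\int_{X\times X}\bigl(u(x)-u(y)\bigr)\,d\gamma(x,y) \;\le\; \int_{X\times X}\|x-y\|_1\,d\gamma(x,y).
\]

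Chaining the three inequalities produces $\int_X u\,d\mu \le \int_{X\times X}\|x-y\|_1\,d\gamma$ for all primal-feasible $u$ and dual-feasible $\gamma$. Taking the supremum over $u$ and then the infimum over $\gamma$ yields the weak duality inequality. The only technical caveat worth noting is finiteness: $|\mu|(X)<\infty$ was already observed after Theorem~\ref{setupclaim}, $u$ is bounded on the compact box $X$ by $1$-Lipschitzness, and $\|x-y\|_1$ is bounded on $X\times X$, so each integral above is well-defined and finite whenever $\gamma$ has finite total mass (otherwise the right-hand side is $+\infty$ and there is nothing to prove).
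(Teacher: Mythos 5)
Your proposal is correct and is essentially identical to the paper's proof: apply the convex-dominance condition $\gamma_1-\gamma_2\succeq_{cvx}\mu$ to $u$, rewrite the marginal integrals over the product space, and bound $u(x)-u(y)$ by $\Lone{x-y}$ via the $1$-Lipschitz condition. The finiteness remarks you add are harmless but not needed beyond what the paper already observes.
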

\begin{prevproof}{Lemma}{weakduality}
For any feasible $u$ for the left-hand side and feasible $\gamma$ for the right-hand side, we have
$$\int_X u d \mu \leq \int_X u d (\gamma_1 - \gamma_2) = \int_{X \times X} (u(x) - u(y))d\gamma(x,y) \leq \int_{X \times X} \Lone{x-y} d\gamma(x,y)$$
where the first inequality follows from $\gamma_1 - \gamma_2 \succeq_{cvx} \mu$ and the second inequality follows from the 1-Lipschitz condition on $u$.
\end{prevproof}

From the proof of Lemma~\ref{weakduality}, we note the following ``complementary slackness'' conditions that a pair of optimal primal and dual solutions must satisfy.

\begin{corollary}\label{linearintegral}
Let $u^*$ and $\gamma^*$ be feasible for their respective problems above.
  Then $\int u^* d\mu= \int \Lone{x-y} d\gamma^*$ if and only if both of the following conditions hold:
\begin{enumerate}
	\item $\int u^* d(\gamma_1^* - \gamma_2^*) = \int u^* d  \mu$.
	\item $u^*(x) - u^*(y) = \Lone{x-y}$, $\gamma^*(x,y)$-almost surely.
\end{enumerate}
\end{corollary}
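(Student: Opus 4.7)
\begin{prevproof}{Corollary}{linearintegral}
The plan is to revisit the chain of inequalities established in the proof of Lemma~\ref{weakduality} and observe that the equality of its two endpoints is equivalent to equality at each intermediate step. Concretely, for any feasible $u^*$ and $\gamma^*$ one has
\[
\int_X u^* d\mu \;\leq\; \int_X u^* d(\gamma_1^* - \gamma_2^*) \;=\; \int_{X\times X} \bigl(u^*(x) - u^*(y)\bigr)\, d\gamma^*(x,y) \;\leq\; \int_{X\times X} \Lone{x-y}\, d\gamma^*(x,y),
\]
where the middle identity is by definition of marginals, the first inequality uses $u^* \in \mathcal{U}(X)$ together with $\gamma_1^* - \gamma_2^* \succeq_{cvx} \mu$, and the second inequality uses $u^* \in \mathcal{L}_1(X)$ applied pointwise.

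Assuming $\int u^* d\mu = \int \Lone{x-y}\,d\gamma^*$, both inequalities must be equalities. The equality of the first forces $\int u^* d(\gamma_1^* - \gamma_2^*) = \int u^* d\mu$, which is the first bullet. The equality of the second yields
\[
\int_{X\times X} \bigl(\Lone{x-y} - (u^*(x) - u^*(y))\bigr)\, d\gamma^*(x,y) = 0.
\]
Since the integrand is nonnegative everywhere by the $\ell_1$-Lipschitz property of $u^*$, this integral can vanish only if $u^*(x) - u^*(y) = \Lone{x-y}$ for $\gamma^*$-almost every $(x,y)$, which is the second bullet.

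Conversely, assuming both bullets, the first bullet converts the first inequality in the chain into an equality, and the second bullet converts the second inequality into an equality (integrating the pointwise identity against $\gamma^*$). Combining these yields $\int u^* d\mu = \int \Lone{x-y}\,d\gamma^*$, as desired.

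There is no real obstacle here beyond the standard fact that a nonnegative integrand with zero integral must vanish almost everywhere with respect to the integrating measure; the rest is bookkeeping on the weak-duality chain.
\end{prevproof}
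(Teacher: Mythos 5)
Your proposal is correct and follows exactly the argument the paper intends: it spells out the chain of inequalities from the proof of Lemma~\ref{weakduality} and observes that equality of the endpoints is equivalent to tightness of each step, with the almost-sure identity following from the nonnegativity of $\Lone{x-y} - (u^*(x)-u^*(y))$. The paper's own proof is a one-line reference to this same tightness argument, so there is nothing to add.
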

\begin{prevproof}{Corollary}{linearintegral}
The inequalities in the proof of Lemma~\ref{weakduality} are tight precisely when both conditions hold.
\end{prevproof}

\begin{mdframed}[style=boxed]
{\bf Interpretation of the Complementary Slackness Conditions}
\begin{remark}\label{geometricremark}
It is useful to geometrically interpret Corollary~\ref{linearintegral}:

\textbf{Condition 1:} We view $\gamma_1^* - \gamma_2^*$ (denote this by $\mu'$) as a ``shuffled'' $\mu$. Stemming from the $\mu' \succeq_{cvx} \mu$ constraint, the shuffling of $\mu$ into $\mu'$ is obtained via any sequence of the following operations: {\tt (1)} Picking a positive point mass $\delta_{x}$ from  $\mu_+$ and sending it from point $x$ to some other point $y \ge x$ (coordinate-wise). The constraint $\int u^*d\mu' = \int u^* d\mu$ requires that $u^*(x) = u^*(y)$. Recall that $u^*$ is non-decreasing, so $u^*(z) = u^*(x)$ for all $z \in \prod_j [x_j,y_j]$. Thus, if $y$ is strictly larger than $x$ in coordinate $i$, then $(\nabla u^*)_i = 0$ at all points $z$ ``in between'' $x$ and $y$. The other operation we are allowed, called a ``mean-preserving spread,'' is {\tt (2)} picking a positive point mass $\delta_{x}$ from  $\mu_+$, splitting the point mass into several pieces, and sending these pieces to multiple points while preserving the center of mass.  The constraint $\int u^*d\mu' = \int u^* d\mu$ requires that $u^*$ varies \textit{linearly} between $x$ and all points $z$ that received a piece.

\textbf{Condition 2:} The second condition is more straightforward than the first. We view $\gamma^*$ as a ``transport'' map between its component measures $\gamma^*_1$ and $\gamma^*_2$. The condition states that if $\gamma^*$ transports from location $x$ to location $y$, 
then $u^*(x) = u^*(y) + \Lone{x-y}$. If for some coordinate $i$, $x_i < y_i$, then $\Lone{z-y} < \Lone{x-y}$ for $z$ with $z_j = \max(x_j,y_j)$. This leads to a contradiction since $u^*(x) - u^*(y) \le u^*(z) - u^*(y) \le \Lone{z-y} < \Lone{x-y}$.
Therefore, it must be the case that \texttt{(1)} $x$ is component-wise greater than or equal to $y$ and \texttt{(2)} if $x_i > y_i$ in coordinate $i$, then $(\nabla u^*)_i = 1$ at all points ``in between'' $x$ and $y$. That is, the mechanism allocates item $i$ with probability 1 to all those types.
\end{remark}
\end{mdframed}

By Lemma~\ref{weakduality} and Corollary~\ref{linearintegral}, if we can find a ``tight pair'' of $u^*$ and $\gamma^*$, then they are optimal for their respective problems. This is useful since constructing a $\gamma$ that satisfies the conditions of Corollary~\ref{linearintegral} serves as a certificate of optimality for a mechanism. Theorem~\ref{strongduality} shows that this approach always works: for any optimal $u^*$ there always exists a $\gamma^*$ satisfying the conditions of Corollary~\ref{linearintegral}.

\begin{remark} \label{remark:comparison to DDT13}
It is useful to discuss what in our dual formulation in the RHS of~\eqref{eq:strong duality} makes it a strong dual, comparing to the previous work~\cite{DaskalakisDT13,GiannakopoulosK14}. If we were to tighten the $\gamma_1 - \gamma_2 \succeq_{cvx} \mu$ constraint in our dual formulation to a first-order stochastic dominance constraint, we essentially recover the duality framework of~\cite{DaskalakisDT13,GiannakopoulosK14}. Tightening the dual constraint, maintains the weak duality but creates a gap between the optimal primal and dual values. In particular, the dual problem resulting from tightening this constraint becomes a strong dual problem for a relaxed version of the mechanism design problem in which the convexity constraint on $u$ is dropped.
\end{remark}

 \section{Single-Item Applications and Interpretation} \label{sec:intuition}

Before considering multi-item settings, it is instructive to study the application of our strong duality theorem to single-item settings. We seek to relate the task of minimizing the transportation cost in the dual problem from Theorem~\ref{strongduality} to the structure of Myerson's solution~\cite{Myerson81}. 

Consider the task of selling a single item to a buyer whose value $z$ for the item is distributed according to a twice-differentiable regular distribution $F$ supported on $[\underbar{$z$},\bar{z}]$.\footnote{We remind the reader that a differentiable distribution $F$ is {\em regular} when its Myerson virtual value function $\phi(z) = z-{1-F(z) \over f(z)}$ is increasing in its support, where $f$ is the distribution density function.} Since $n=1$, if we were to apply our duality framework to this setting, we would choose $\mu$ according to~\eqref{transformed} as follows:
\begin{align*}
\mu(A) &= \mathbb{I}_A(\underbar{$z$})\cdot (1-f(\underbar{$z$}) \cdot \underbar{$z$}) + \mathbb{I}_A(\bar{z}) \cdot f(\bar{z}) \cdot \bar{z} -\int_{\underbar{$z$}}^{\bar{z}} \mathbb{I}_A(z) ( f'(z) \cdot z + 2 f(z))dz\\
&=\mathbb{I}_A(\underbar{$z$})\cdot (1-f(\underbar{$z$}) \cdot \underbar{$z$}) + \mathbb{I}_A(\bar{z}) \cdot f(\bar{z}) \cdot \bar{z} -\int_{\underbar{$z$}}^{\bar{z}}  \mathbb{I}_A(z) \left(\left( z - {1-F(z) \over f(z)}\right) f(z) \right)' dz
\end{align*}
We can interpret the transportation problem of Theorem~\ref{strongduality}, defined in terms of $\mu$, as:
\begin{itemize}
\item The sub-population of buyers having the right-most type, $\bar{z}$, in the support of the distribution have an excess supply of $f(\bar{z}) \cdot \bar{z}$;
\item The sub-population of buyers with the left-most type, $\underbar{$z$}$, in the support have an excess supply of $1-f(\underbar{$z$}) \cdot \underbar{$z$}$;
\item Finally, the sub-population of buyers at each other type, $z$, have a demand of $$\left(\left( z - {1-F(z) \over f(z)}\right) f(z) \right)' dz$$
\end{itemize}
One way to satisfy the above supply/demand requirements is to have every infinitesimal buyer of type $z$ push mass of $z - {1-F(z) \over f(z)}$ to its left. Since the fraction of buyers at $z$ is $f(z)$, the total amount of mass staying with them is then $\left(\left( z - {1-F(z) \over f(z)}\right) f(z) \right)' dz$ as required. Notice, in particular, that buyers with positive virtual types will push mass to their left, while buyers with negative virtual types will push mass to their right. 

The afore-described transportation map is feasible for our transportation problem as it satisfies all demand/supply constraints. We also claim that this solution is optimal.
To see this consider the mechanism that allocates the item to all buyers with non-negative virtual type at a fixed price $p^*$.  
The resulting utility function is of the form $\max\{z-p^*,0\}$.
We claim that this utility function satisfies the complementary slackness conditions of Remark~\ref{geometricremark} with respect to the transportation map identified above. Indeed,
when $z > p^*$, $u$ is linear with $u'(z)=1$ and mass is sent to the left---which is allowed by Part 2 of the remark, while, when $z < p^*$, $u$ is $0$ with $u'(z)=0$ and mass is sent to the right---allowed by Part 1({\tt 1}) of the remark.

In conclusion, when $F$ is regular, the virtual values dictate exactly how to optimally solve the optimal transportation problem of Theorem~\ref{strongduality}. Each infinitesimal buyer of type $z$ will push mass that equals its virtual value to its left. In particular, the optimal transportation  does not need to use mean-preserving spreads. Moreover, measure $\mu$ can be interpreted as the ``negative marginal normalized virtual value,'' as it assigns measure $-\left(\left( z - {1-F(z) \over f(z)}\right) f(z) \right)'dz$ to the interval $[z,z+dz]$, when $z \neq \underbar{$z$}, \bar{z}$.

When $F$ is not regular, the afore-described transportation map is not optimal due to the non-monotonicity of the virtual values. In this case, we need to pre-process our measure $\mu$ via mean-preserving spreads, prior to the transport, and ironing dictates how to do these mean-preserving spreads. In other words, ironing dictates how to perform the sweeping of the type set prior to transport.

\section{Multi-Item Applications of Duality}\label{examplesection}

We now give two examples of using Theorem~\ref{strongduality} to prove optimality of  mechanisms for selling two uniformly distributed independent items.

\subsection{Two Uniform $[0,1]$ Items}\label{sec:uniform 0-1}

Using Theorem~\ref{strongduality}, we provide a short proof of optimality of the mechanism for two i.i.d. uniform $[0,1]$ items proposed by \cite{ManelliV06} which we refer to as the MV-mechanism:

\begin{example}\label{unifexample01}
The optimal IC and IR mechanism for selling two items whose values are distributed uniformly and independently on the interval $[0,1]$ is the following menu:
\begin{itemize}
	\item buy any single item for a price of $\frac 2 3$; or
	\item buy both items for a price of ${4 - \sqrt{2} \over 3}$.
\end{itemize}
\end{example}

Let $Z$ be the set of types that receive no goods and pay $0$ to the MV-mechanism. Also, let $A$, $B$ be the set of types that receive only goods $1$ and $2$ respectively and $W$ be the set of types that receive both goods.  The sets $A,B, Z,W$ are illustrated in Figure~\ref{uniformfig} and separated by solid lines.

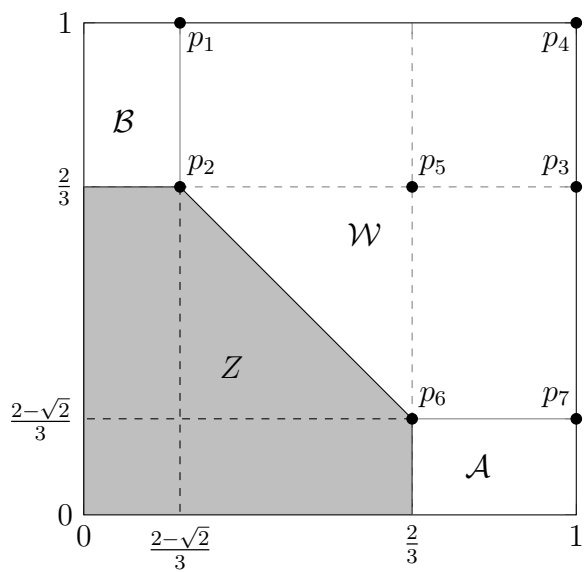
\begin{figure}[!ht]
\centering
\begin{tikzpicture}
\begin{axis}[
    height=3.2in, width=3.2in, ymin=0, ymax=1, xmin=0, xmax=1,
    xtick={0,0.19526214,0.66666,1},
    xticklabels={$0$,$2 - \sqrt{2}\over 3$,$2 \over 3$,$1$},
    ytick={0,0.19526214,0.66666,1},
    yticklabels={$0$,$2 - \sqrt{2}\over 3$,$2 \over 3$,$1$}
]

\addplot[color=black, fill=gray!50, mark=none] coordinates{
(0,0.66666)
(0.19526214,0.66666)
(0.66666,0.19526214)
(0.66666,0)
}\closedcycle;

\addplot[color=gray, mark=none]coordinates{
(0.66666,0.19526214)
(1,0.19526214)};

\addplot[color=black, mark=none, style=dashed] coordinates{
(0.66666,0.19526214)
(0,0.19526214)
};

\addplot[color=gray, mark=none]coordinates{
(0.19526214,0.66666)
(0.19526214,1)
};

\addplot[color=black, mark=none, style=dashed] coordinates{
(0.19526214,0.66666)
(0.19526214,0)
};

\addplot[color=gray, mark=none, style=dashed] coordinates{
(0,0.66666)
(1,0.66666)
};

\addplot[color=gray, mark=none, style=dashed] coordinates{
(0.66666,0)
(0.66666,1)
};

\node at (axis cs:0.08,0.8){$\mathcal{B}$};
\node at (axis cs:0.8,0.1){$\mathcal{A}$};
\node at (axis cs:0.57,0.57){$\mathcal{W}$};
\node at (axis cs:0.3,0.3){$Z$};

\node at (axis cs:0.24, 0.96){${p_1}$};
\node at (axis cs:0.24, 0.71){${p_2}$};
\node at (axis cs:0.96, 0.71){${p_3}$};
\node at (axis cs:0.96, 0.96){${p_4}$};
\node at (axis cs:0.71, 0.71){${p_5}$};
\node at (axis cs:0.71,0.24){${p_6}$};
\node at (axis cs:0.96, 0.24){${p_7}$};

\addplot[color=black, mark=*] coordinates{
(0.19526214, 1)};
\addplot[color=black, mark=*] coordinates{
(0.19526214, 0.666666)};
\addplot[color=black, mark=*] coordinates{
(1, 0.666666)};
\addplot[color=black, mark=*] coordinates{
(1, 1)};
\addplot[color=black, mark=*] coordinates{
(0.666666, 0.666666)};
\addplot[color=black, mark=*] coordinates{
(0.666666, 0.19526214)};
\addplot[color=black, mark=*] coordinates{
(1, 0.19526214)};
\end{axis}
\end{tikzpicture}
\caption{The MV-mechanism for two i.i.d. uniform $[0,1]$ items.}\label{uniformfig}
\end{figure}

Let us now try to prove that the MV mechanism is indeed optimal. As a first step, we need to compute the transformed measure $\mu$ of the uniform distribution on $[0,1]^2$. We have already computed $\mu$ in Section~\ref{examplesetup}. It has a point mass of $+1$ at $(0,0)$, a mass of $-3$ distributed uniformly over $[0,1]^2$, a mass of $+1$ distributed uniformly on the top boundary of $[0,1]^2$, and a mass of $+1$ distributed uniformly on the right boundary. Notice that the total net mass is equal to 0 within each region $Z$, $A$, $B$, or $W$.

To prove optimality of the MV-mechanism, we will construct an optimal $\gamma^*$  for the dual program of Theorem~\ref{strongduality} to match the positive mass $\mu_+$ to the negative $\mu_-$. Our $\gamma^*$ will be decomposed into $\gamma^* = \gamma^Z + \gamma^A + \gamma^B + \gamma^W$ and to ensure that $\gamma^*_1 - \gamma^*_2 \succeq_{cvx} \mu$, we will show that
$$\gamma^Z_1 - \gamma^Z_2 \succeq_{cvx} \mu|_Z; \quad \gamma^A_1 - \gamma^A_2 \succeq_{cvx} \mu|_A; \quad \gamma^B_1 - \gamma^B_2 \succeq_{cvx} \mu|_B; \quad \gamma^W_1 - \gamma^W_2 \succeq_{cvx} \mu|_W.$$
We will also show that the conditions of Corollary~\ref{linearintegral} hold for each of the measures $\gamma^Z$, $\gamma^A$, $\gamma^B$, and $\gamma^W$ separately, namely $\int u^* d(\gamma^S_1 - \gamma^S_2) = \int_S u^* d  \mu$ and $u^*(x) - u^*(y) = \Lone{x-y}$ hold $\gamma^S$-almost surely
for $S$ = $Z$, $A$, $B$,  and $W$.

\begin{itemize}[label={},leftmargin=0pt]
\item {\bf Construction of $\gamma^Z$:} Since $\mu_+|_Z$ is a point-mass at $(0,0)$ and  $\mu_-|_Z$ is distributed throughout a region which is coordinatewise greater than $(0,0)$, we notice that $\mu|_Z \preceq_{cvx} 0$. We set $\gamma^Z$ to be the zero measure, and the relation $\gamma^Z_1 - \gamma^Z_2 = 0 \succeq_{cvx} \mu|_Z$, as well as the two necessary equalities from Corollary~\ref{linearintegral}, are trivially satisfied.

\item {\bf Construction of $\gamma^A$ and $\gamma^B$:} In region $A$, $\mu_+|_A$ is distributed on the right boundary while  $\mu_-|_A$ is distributed uniformly on the interior of $A$. We construct $\gamma^A$ by transporting the positive mass $\mu_+|_A$ to the left to match the negative mass $\mu_-|_A$. Notice that this indeed matches completely the positive mass to the negative since $\mu(A) = 0$ and intuitively minimizes the $\ell_1$ transportation distance. To see that the two necessary equalities from Corollary~\ref{linearintegral} are satisfied, notice that $\gamma^A_1 = \mu_+|_A, \gamma^A_2 = \mu_-|_A$ so the first equality holds. The second inequality holds as we are transporting mass only to the left and thus the measure $\gamma^A$ is concentrated on pairs $(x,y) \in A \times A$ such that $1 = x_1 \ge y_1 \ge \frac 2 3$ and $x_2 = y_2$. Moreover, for all such pairs $(x,y)$, we have that $u(x) - u(y) = (x_1 - \frac 2 3) - (y_1 - \frac 2 3) = x_1 - y_1 = \Lone{x-y}$. The construction of $\gamma^B$ is similar.

\item {\bf Construction of $\gamma^W$} We construct an explicit matching that only matches leftwards and downwards without doing any prior mass shuffling. We match the positive mass on the segment $p_1p_4$ to the negative mass on the rectangle $p_1p_2p_3p_4$ by moving mass downwards. We match the positive mass of the segment $p_3p_7$ to the negative mass on the rectangle $p_3p_5p_6p_7$ by moving mass leftwards. Finally, we match the positive mass on the segment $p_3p_4$ to the negative mass on the triangle $p_2p_5p_6$ by moving mass downwards and leftwards. Notice that all positive/negative mass in region $ W$ has been accounted for, all of $(\mu|_{ W})_+$ has been matched to all of $(\mu|_{ W})_-$ and all moves were down and to the left, establishing  $u(x) - u(y) = (x_1 + x_2 - {4 - \sqrt{2} \over 3}) - (y_1 + y_2 - {4 - \sqrt{2} \over 3}) = x_1+x_2 - y_1 - y_2 = \Lone{x-y}$.
\end{itemize}

\subsection{Two Uniform But Not Identical Items}\label{sec: uniform non 0-1}

We now present an example with two items whose values are distributed uniformly and independently on the intervals $[4,16]$ and $[4,7]$. We note that the distributions are not identical, and thus the characterization of  \cite{Pavlov11} does not apply. In addition, the relaxation-based duality framework of \cite{DaskalakisDT13,GiannakopoulosK14} (see Remark~\ref{remark:comparison to DDT13}) fails in this example: if we were to relax the constraint that the utility function $u$ be convex, the ``mechanism design program'' would have a solution with greater revenue than is actually possible.

\begin{example}\label{unifexample}
The optimal IC and IR mechanism for selling two items whose values are distributed uniformly and independently on the intervals $[4,16]$ and $[4,7]$ is as follows:
\begin{itemize}
	\item If the buyer's declared type is in region $Z$, he receives no goods and pays nothing.
	\item If the buyer's declared type is in region $Y$, he  pays a price of 8 and receives the first good with probability $50\%$ and the second good with probability 1.
	\item If the buyer's declared type is in region $W$, he gets both goods for a price of 12.
\end{itemize}

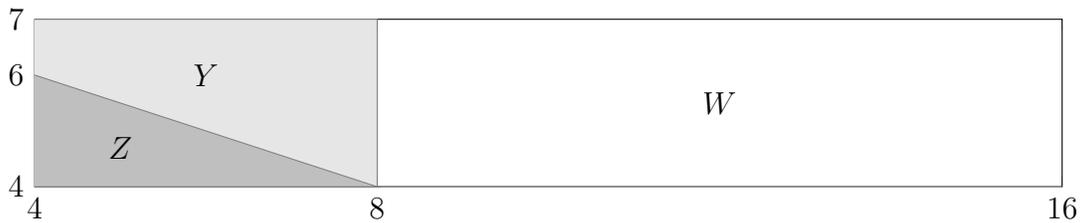
\begin{figure}[!ht]
\begin{center}
\begin{tikzpicture}
\begin{axis}[width=6in, height=1.5in,ymin=4, ymax=7, xmin=4, xmax=16,   ytick pos=left, ytick={4,6,7}, yticklabels={4,6,7}, xtick={4, 8,16},xticklabels={4,$8$,16}]
  \addplot+[color=gray, fill=gray!20, domain=0:8,mark=none]
 {7}
 \closedcycle;
  \addplot+[color=gray, fill=gray!50, domain=0:16,mark=none]
 {8-0.5*x}
 \closedcycle;
\node at (axis cs:5,4.7){$Z$};
\node at (axis cs: 6, 6){$Y$};
\node at (axis cs: 12, 5.5){$W$};
\end{axis}
\end{tikzpicture}
\caption{Partition of $[4,16] \times [4,7]$ into different regions by the optimal mechanism.}\label{fig:unifexample}
\end{center}
\end{figure}
\end{example}

The proof of optimality of our proposed mechanism works by constructing a measure $\gamma = \gamma^Z + \gamma^Y+\gamma^W$ separately in each region. The constructions of $\gamma^W$ and $\gamma^Z$ are similar to the previous example. The construction of $\gamma^Y$, however, is a little more intricate as it requires an initial shuffling of the mass before computing the optimal way to transport the resulting mass. The proof is presented in the online appendix.

\subsection{Discussion}
Our examples in this section serve to illustrate how to use our duality theorem to verify the optimality of our proposed mechanisms, without explaining how we identified these mechanisms. These mechanisms were in fact identified by ``reverse-engineering'' the duality theorem. The next two sections provide tools for performing this reverse-engineering. In particular, Section~\ref{bundlingsection} provides a characterization of mechanism optimality in terms of stochastic dominance conditions satisfied in regions partitioning the type space. Alleviating the need to reverse-engineer the duality theorem, Section~\ref{weakstructural} prescribes a straightforward procedure for identifying optimal mechanisms. We use this procedure to solve several examples in Section~\ref{sec:further examples}.

\section{Characterizing Optimal Finite-Menu Mechanisms}\label{bundlingsection}

To prove the optimality of our mechanisms in the examples of Section~\ref{examplesection}, we explicitly constructed a measure $\gamma$ \emph{separately} for each subset of types enjoying the same allocation in the optimal mechanism, establishing that the conditions of Corollary~\ref{linearintegral} are satisfied for each such subset of types separately. In this section, we show that decomposing the solution $\gamma$ of the optimal transportation dual of Theorem~\ref{strongduality} into ``regions'' of types enjoying the same allocation in the optimal solution $u$ of the primal, and working on these regions separately to establish the complementary slackness conditions of Corollary~\ref{linearintegral} is guaranteed to work. 

Even with this understanding of the structure of  dual witnesses, it may still be non-trivial work to identify a witness certifying the optimality of a given mechanism. We thus develop a more usable framework for certifying the optimality of mechanisms, which does not involve finding dual witnesses at all. In particular, we show in Theorem~\ref{bundlingtheorem} that a given mechanism $\cM$ is optimal for some $f$ {\em if and only if} appropriate stochastic dominance conditions are satisfied by the restriction of the transformed measure $\mu$ of Definition~\ref{def:transformed measure} to each region of types enjoying the same allocation under $\cM$. We thus provide conditions that are both \emph{necessary} and \emph{sufficient} for a given mechanism $\cM$ to be optimal, a characterization result.

To describe our characterization, we define the intuitive notion a ``menu'' that a certain mechanism offers.
\begin{definition}
  The {\em menu of a mechanism} $\cM = (\cP, \cT)$ is the set 
  $$\textrm{Menu}_{\cM} = \{(p,t): \exists x \in X, (p,t) = (\cP(x),\cT(x))\}.$$
\end{definition}
\noindent Clearly, an IC mechanism allocates to every type $x$ the option in the menu that maximizes that type's utility. Figure~\ref{fig:Voronoi} shows an example of a menu and the corresponding partition of the type set into subsets of types that prefer each option in the menu.

\begin{figure}[htbp]
	\centering
		\includegraphics[width=0.6\textwidth]{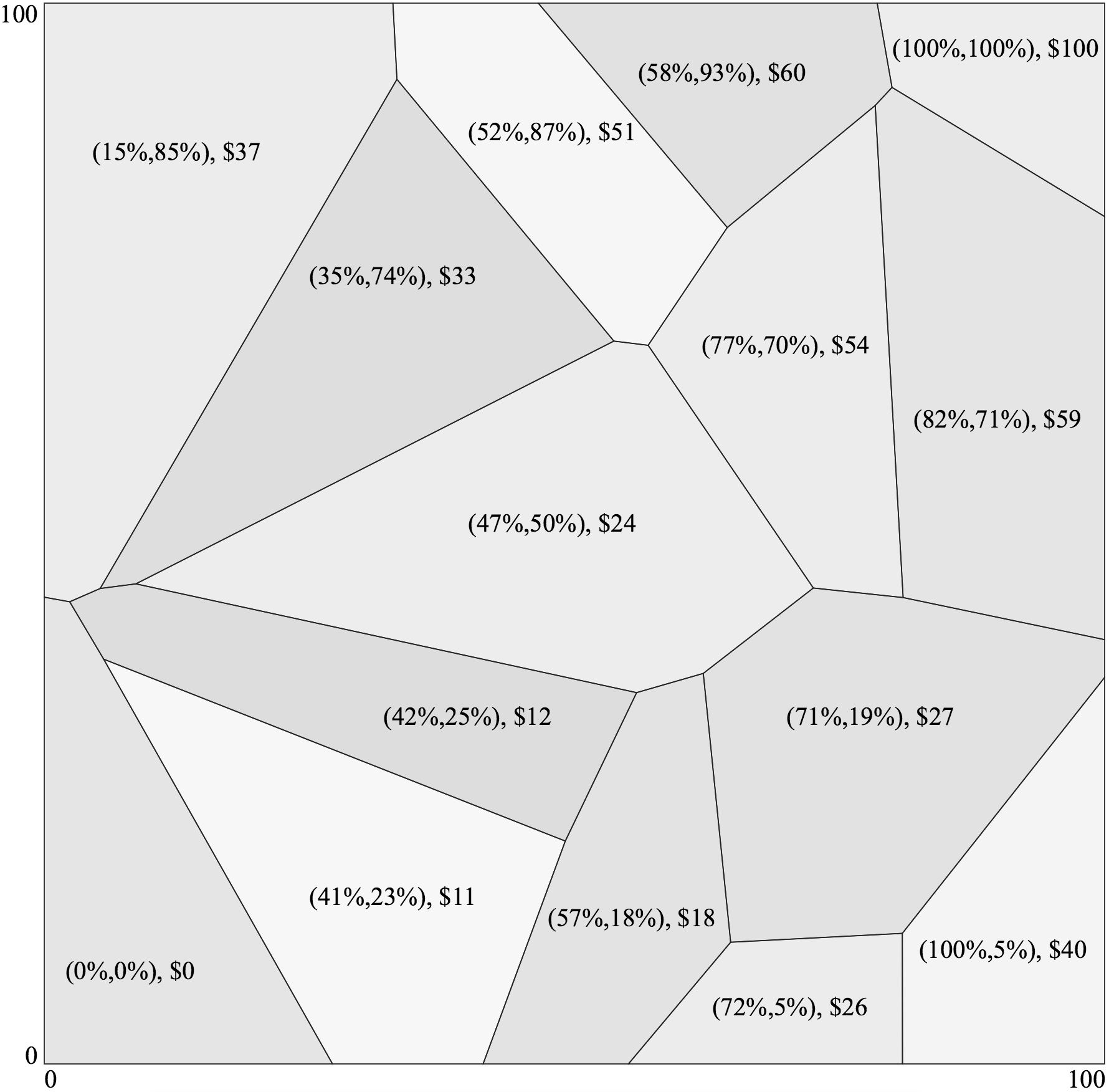}
	\caption{Partition of the type set $X = [0,100]^2$ induced by some menu of lotteries.}
	\label{fig:Voronoi}
\end{figure}

The revenue of a mechanism with a finite menu-size comes from choices in the menu that are bought with strictly positive probability. The menu might contain options that are only bought with probability $0$, but we can get another mechanism that gives identical revenue by removing all those options. We call this the {\em essential form } of a mechanism.
\begin{definition}
  A mechanism ${\cM}$ is in {\em essential form } if for all options $(p,t) \in \textrm{Menu}_{\cM}$, $\Pr_f[ \{ x \in X : (p,t) = (\cP(x),\cT(x))\} ] > 0$.
\end{definition}

We will now show our main result of this section under the assumption that the menu size is finite. We expect that our tools can be used to extend the results to the case of infinite menu size with a more careful analysis. We stress that the point of our result is not to provide sufficient conditions to certify optimality of mechanisms, as in \cite{ManelliV06,DaskalakisDT13,GiannakopoulosK14}, but to provide necessary and sufficient conditions. In particular, we show  that verifying optimality is \emph{equivalent} to checking a collection of measure-theoretic inequalities, and this applies to arbitrary mechanisms with a finite menu-size. The proof of our result is intricate, requiring several technical lemmas, so it is postponed to the online appendix. The most crucial component of the proof establishes that the optimal dual solution $\gamma$ in Theorem~\ref{strongduality} never convexly shuffles mass across regions of types that enjoy different allocations. (I.e. to obtain $\mu'=\gamma_1-\gamma_2$ from $\mu$ we never need to move mass across different regions.)
Similarly, we argue that the optimal $\gamma$ never transports mass across regions.

Before formally stating our result, it is helpful to provide some intuition behind it. Consider a region $R$ corresponding to a menu choice $(\vec p,t)$ of an optimal mechanism $\cal M$. As we have already discussed, we can establish that the dual witness $\gamma$, which witnesses the optimality of $\cal M$, does not transport  mass between regions and, likewise, the associated ``convex shuffling'' transforming $\mu$ to $\mu'=\gamma_1-\gamma_2$ doesn't shuffle across regions. Given this, our complementary slackness conditions of Corollary~\ref{linearintegral} imply then that $\mu_+|_R$ can be transformed to $\mu_-|_R$ using the following (intra-region $R$) operations:
\begin{itemize}
\item spreading positive mass within $R$ so that the mean is preserved
\item sending (positive) mass from a point $x \in R$ to a coordinatewise larger point $y\in R$ if for all coordinates where $y_i > x_i$ we have that the corresponding probability of the menu choice satisfies $p_i = 0$
\item sending (positive) mass from a point $x\in R$ to a coordinatewise smaller point $y \in R$ if for all coordinates where $y_i < x_i$ we have that $p_i = 1$
\end{itemize}

Our characterization result involves stochastic dominance conditions that are slightly more general than the standard notions of first, second and convex dominance. 
We need the following definition, which extends the notion of convex dominance.

\begin{definition}\label{firstorderdef}
We say that a function $u: X \rightarrow \mathbb{R}$ is $\vec v$-monotone for a vector $\vec v \in \{-1,0,+1\}^n$ if it is non-decreasing in all coordinates $i$ for which $v_i=1$ and non-increasing in all coordinates $i$ for which $v_i=-1$.

A measure $\alpha$ \emph{convexly dominates} a measure $\beta$ \emph{with respect to}  a vector $\vec v \in \{-1,0,+1\}^n$, denoted $\alpha \succeq_{\textrm{cvx}(\vec v)} \beta$, if for all convex $\vec v$-monotone functions $u \in {\cal U}(X)$:
$$\int u d\alpha \geq \int u d\beta.$$

Similarly, for vector random variables $A$ and $B$ with values in $X$, we say that $A \succeq_{\textrm{cvx}(\vec v)} B$ if $\mathbb{E}[u(A)] \geq \mathbb{E}[u(B)]$ for all convex $\vec v$-monotone functions $u \in {\cal U}(X)$.
\end{definition}
The definition of convex dominance presented earlier coincides with convex dominance with respect to the vector $\vec 1$. Moreover, convex dominance with respect to the vector $-\vec 1$ is related to second-order stochastic dominance as follows: 
$$\alpha \succeq_{\textrm{cvx}(- \vec 1)} \beta \Leftrightarrow \beta \succeq_{2} \alpha.$$
Measures satisfying the dominance condition of Definition~\ref{firstorderdef} must have equal mass.

\begin{proposition}\label{prop:equal-mass}
Fix two measures $\alpha, \beta \in \Gamma(X)$ and a vector $v \in \{-1,0,1\}^n$. 
If it holds that $\alpha \succeq_{\textrm{cvx}(\vec v)} \beta$, then $\alpha(X) = \beta(X)$.
\end{proposition}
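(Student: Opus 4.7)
\begin{prevproof}{Proposition}{prop:equal-mass}
The plan is to exploit the fact that constant functions belong to the test class used to define $\succeq_{\textrm{cvx}(\vec v)}$, regardless of $\vec v$. A constant function $u \equiv c$ is trivially convex, bounded, and $\vec v$-monotone for every $\vec v \in \{-1,0,+1\}^n$ (it is both non-decreasing and non-increasing in each coordinate). Hence such $u$ is a valid test function in Definition~\ref{firstorderdef}.

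Applying the definition with $u \equiv 1$ gives $\int 1\, d\alpha \geq \int 1\, d\beta$, i.e., $\alpha(X) \geq \beta(X)$. Applying it with $u \equiv -1$ gives $-\alpha(X) \geq -\beta(X)$, i.e., $\alpha(X) \leq \beta(X)$. Combining the two inequalities yields $\alpha(X) = \beta(X)$, as claimed.
\end{prevproof}

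There is no real obstacle here: the only thing to verify is that constants are indeed admissible test functions under Definition~\ref{firstorderdef}, and this is immediate from the definitions of convexity and $\vec v$-monotonicity. The same argument would go through for any class of test functions that contains both $+1$ and $-1$, so the result is essentially a restatement of the fact that convex dominance classes defined via a linear-in-$u$ inequality enforce equality of total mass whenever the class is closed under negation of constants.
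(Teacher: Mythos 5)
Your proof is correct: constant functions $\pm 1$ are bounded, convex, and trivially $\vec v$-monotone for every $\vec v$, so testing against both immediately forces $\alpha(X)=\beta(X)$, and this is exactly the (trivial) argument the paper has in mind when it states the proposition without proof. Nothing further is needed.
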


We are now ready to describe our main characterization theorem. Our characterization, stated below as Theorem~\ref{bundlingtheorem} and proven in the online appendix, is given in terms of the conditions of Definition~\ref{optconditions}.

\begin{definition}[Optimal Menu Conditions]
\label{optconditions}
A mechanism $\cM$ {\em satisfies the optimal menu conditions with respect to $\mu$} if for all
menu choices $(p,t) \in \textrm{Menu}_{\cM}$ we have
$$ \mu_+|_{R} \preceq_{cvx(\vec v)} \mu_-|_{R}$$
where $R = \left\{x \in X : (\cP(x),\cT(x))=(p,t) \right\}$ is the subset of types that receive $(p,t)$  and $\vec v$ is the vector whose $i$-th coordinate $v_i$ takes value 1 if $p_i = 0$, value $-1$ if $p_i = 1$ or value $0$ if $p_i \in (0,1)$.

\end{definition}

\begin{theorem}[Optimal Menu Theorem]\label{bundlingtheorem}
Let $\mu$ be the transformed measure of a probability density $f$ as per Definition~\ref{def:transformed measure}. Then a mechanism $\cM$ with finite menu size is an optimal IC and IR mechanism for a single additive buyer whose values for $n$ goods are distributed according to the joint distribution $f$ \textbf{if and only if} its essential form satisfies the optimal menu conditions with respect to $\mu$.
\end{theorem}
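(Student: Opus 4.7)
The plan is to reduce the theorem to an application of strong duality (Theorem~\ref{strongduality}) together with its complementary slackness Corollary~\ref{linearintegral}, and the localization lemma (Lemma~\ref{conditionallem}) that decouples the dual problem across the regions of $\cM$. The key structural fact to exploit is that the utility function $u^*$ of $\cM$ is piecewise affine: on each region $R = \{x : (\cP(x),\cT(x)) = (p,t)\}$ we have $u^*(x) = p \cdot x - t$ and $\nabla u^* \equiv p$. Because $u^*$ is affine on each region, the three operation types allowed by Remark~\ref{geometricremark} inside $R$---mean-preserving spreads, upward shifts along coordinates with $p_i = 0$, and downward transport along coordinates with $p_i = 1$---are precisely the operations that leave $u^*$ unchanged or change it by exactly the $\ell_1$ distance of the move. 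These three operation types correspond exactly to the sign vector $\vec v$ prescribed in Definition~\ref{optconditions}.

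\textbf{Sufficiency.} Assume the optimality conditions hold. For each region $R$ with menu option $(p,t)$ and associated $\vec v$, I would invoke a $\vec v$-variant of Strassen's theorem (extending Lemma~\ref{strassen}) to realize the inequality $\mu_+|_R \preceq_{\textrm{cvx}(\vec v)} \mu_-|_R$ as an explicit chain of the three allowed operation types carrying $\mu_+|_R$ to $\mu_-|_R$. Encode the downward transport moves as a coupling $\gamma^R \in \Gamma_+(R \times R)$ and set $\gamma^* = \sum_R \gamma^R$. The constraint $\gamma_1^* - \gamma_2^* \succeq_{\textrm{cvx}} \mu$ then holds region-by-region, since the spreads and upward shifts along $p_i{=}0$ axes are exactly the shuffles permitted in Step~1 of the dual formulation. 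Both complementary slackness conditions of Corollary~\ref{linearintegral} are satisfied: the equality $\int u^* \, d(\gamma_1^* - \gamma_2^*) = \int u^* \, d\mu$ follows from affinity of $u^*$ on each $R$, and $u^*(x) - u^*(y) = p \cdot (x-y) = \|x-y\|_1$ on the support of $\gamma^R$ since that support consists of pairs with $x \ge y$ coordinatewise and $x_i > y_i$ only where $p_i = 1$. Corollary~\ref{linearintegral} then certifies that $\cM$ is optimal.

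\textbf{Necessity.} Conversely, assume $\cM$ is optimal with utility $u^*$. By Theorem~\ref{strongduality} there is an optimal dual witness $\gamma^*$ with $\int u^* \, d\mu = \int \|x-y\|_1 \, d\gamma^*$, so the complementary slackness of Corollary~\ref{linearintegral} holds for $(u^*, \gamma^*)$ as well as for the implicit shuffling $\mu \mapsto \gamma_1^* - \gamma_2^*$. Lemma~\ref{conditionallem} now intervenes: both the shuffling and the transport restrict to each region, so $\gamma^*$ decomposes as $\gamma^* = \sum_R \gamma^R$ with $\gamma^R$ supported in $R \times R$, and the shuffling within $R$ transforms $\mu_+|_R$ into $\mu_-|_R$. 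The slackness constraints, combined with $\nabla u^* = p$ on $R$, force every elementary step in the composite $\mu_+|_R \to \mu_-|_R$ to be of one of the three allowed types. Integrating any convex $\vec v$-monotone bounded test function $u$ along this chain, each elementary operation can only increase $\int u$, yielding $\int u \, d\mu_-|_R \ge \int u \, d\mu_+|_R$, which is exactly $\mu_+|_R \preceq_{\textrm{cvx}(\vec v)} \mu_-|_R$.

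\textbf{Main obstacle.} The principal technical hurdle is Lemma~\ref{conditionallem}: at an optimum, neither transport nor shuffling needs to cross region boundaries. A priori one might worry that matching mass across regions could be cheaper. The proof strategy I expect is an exchange argument exploiting that $u^*$ is affine with distinct gradients on distinct regions, so the complementary slackness identity $u^*(x) - u^*(y) = \|x-y\|_1$ forces any cross-region move to be tangent to the shared boundary and replaceable by within-region operations at no greater cost. A secondary technical point is the $\vec v$-variant of Strassen's theorem used in sufficiency, which has to accommodate simultaneous monotone shifts along selected axes together with mean-preserving spreads; this is a mild but nontrivial generalization of the classical result in Lemma~\ref{strassen}.
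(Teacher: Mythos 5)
Your sufficiency direction is essentially the paper's: region by region, one realizes $\mu_+|_R \preceq_{cvx(\vec v)} \mu_-|_R$ via the extended Strassen theorem (stated in the paper as Lemma~\ref{strassenv}), reads off a coupling supported on pairs $x \ge y$ that differ only in coordinates with $p_i=1$ (the paper's concrete choice is $\hat C=\min(\mathbb{E}[\hat B\mid \hat A],\hat A)$ and $\gamma^R$ the law of $(\hat A,\hat C)$), and verifies both conditions of Corollary~\ref{linearintegral} using that $u^*$ is affine with gradient $p$ on $R$. That part is fine.

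The necessity direction has a genuine gap. You assert that ``Lemma~\ref{conditionallem} now intervenes: both the shuffling and the transport restrict to each region, so $\gamma^*$ decomposes as $\sum_R\gamma^R$ supported in $R\times R$.'' Lemma~\ref{conditionallem} says nothing of the sort; it is a Jensen-type coupling statement (a chosen subgradient at $\hat A$ is also a subgradient at $\hat B$, the conditional mean is coordinatewise larger, etc.). The localization across regions is precisely the hard part, and your fallback ``exchange argument'' would not go through as stated: complementary slackness does not force cross-region moves to be tangent to shared boundaries. If two adjacent regions both allocate item $i$ with probability $1$, a transport decreasing only coordinate $i$ can cross their common boundary while satisfying $u^*(x)-u^*(y)=\Lone{x-y}$ exactly; likewise a mean-preserving spread can straddle two regions along any direction on which the two affine pieces of $u^*$ have equal slope, without disturbing $\int u^*\,d\mu'=\int u^*\,d\mu$. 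So there is no pointwise obstruction to cross-region activity, and the paper never claims one. What it actually does is: (a) prove Lemma~\ref{lem:zero-mass}, that $\mu_-$ assigns zero mass to the indifference set $Z$ (this is exactly where the finite-menu and essential-form hypotheses are used --- hypotheses your argument never invokes, which is a symptom of the missing step); (b) apply Lemma~\ref{conditionallem} twice, once to $(\gamma_2+\mu_+,\,\gamma_1+\mu_-)$ with $u$ after carefully selecting the subgradient $g$ on $Z$ to avoid every allocation vector of the menu, and once in a flipped form to $(\gamma_1,\gamma_2)$ with the non-increasing convex function $u(x)-\sum_i x_i$, obtaining for each region only the weaker relations $\gamma_2|_{\mathrm{int}(R)}+\mu_+|_{\mathrm{int}(R)}\preceq_{cvx(\vec v)}\mu_-|_{\mathrm{int}(R)}+\gamma_1|_{\mathrm{int}(R)}+\xi_R$ and $\gamma_1|_{\mathrm{int}(R)}\preceq_{cvx(\vec v)}\gamma_2|_{\mathrm{int}(R)}+\xi'_R$, where the non-negative measures $\xi_R,\xi'_R$ record the mass that leaks out of the region; and (c) kill the leakage in aggregate by mass counting, using Proposition~\ref{prop:equal-mass}, $\mu_+(X)=\mu_-(X)$ and $\mu_-(Z)=0$ to conclude $\mu_+(Z)=\sum_R(\xi_R(X)+\xi'_R(X))=0$, whence $\mu_+|_R\preceq_{cvx(\vec v)}\mu_-|_R$. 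Without something playing the role of steps (a)--(c), your necessity argument does not close.
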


\begin{mdframed}[style=boxed]
{\bf Interpretation of the Optimal Menu Conditions:} A simple interpretation of the optimal menu conditions that Theorem~\ref{bundlingtheorem} claims are necessary and sufficient for the optimality of mechanisms is this. Take some region $R$ of the type set $X$ corresponding to the types that are allocated a specific menu choice $(p,t)$ by optimal mechanism~$\cM$. Let us consider the revenue $\int_R u^* d\mu$ extracted by $\cM$ from the types in region $R$. Is it possible to extract more revenue from these types? We claim that the optimal menu condition for region $R$ guarantees that no mechanism can possibly extract more from the types in region $R$. Indeed, consider any utility function $u$ induced by some other mechanism.
The revenue extracted by this other mechanism in region $R$ is $\int_R u d\mu = \int_R u^* d\mu + \int_R (u - u^*) d\mu \le \int_R u^* d\mu$. That $\int_R (u - u^*) d\mu \le 0$ follows directly from the optimal menu condition for region $R$. Indeed, since $u^*(x) = p \cdot x - t$ in region $R$, it follows that, whatever choice of $u$ we made, $u - u^*$ is a convex $\vec v$-monotone function in region $R$, where $\vec v$ is the vector defined by $p$ as per Definition~\ref{optconditions}. Our condition in region $R$ reads $\mu|_R \preceq_{\textrm{cvx}(\vec v)} 0$, hence $\int_R (u - u^*) d\mu \le 0$. Our line of argument implies the sufficiency of the optimal menu conditions, as they imply that for each region separately no mechanism can beat the revenue extracted by $\cM$. The more surprising part (and harder to prove) is that the conditions are also necessary, implying that optimal mechanisms are locally optimal for every region $R$ of types that they allocate the same menu choice to.
\end{mdframed}

A particularly simple special case of our characterization result, pertains to the optimality of the grand-bundling mechanism. Theorem~\ref{bundlingtheorem} implies that the mechanism that offers the grand bundle at price $p$ is optimal {\em if and only if} the  transformed measure $\mu$ satisfies a pair of stochastic dominance conditions.
In particular, we obtain the following theorem:

\begin{theorem}[Grand Bundling Optimality]\label{grandbundlingtheorem}
For a single additive buyer whose values for $n$ goods are distributed according to the joint distribution $f$, the mechanism that only offers the bundle of all items at  price $p$ is optimal if and only if the transformed measure $\mu$ of $f$ satisfies $\mu|_{\mathcal{W}} \succeq_2 0 \succeq_{\textrm{cvx}} \mu|_{\mathcal{Z}}$, where $\mathcal{W}$ is the subset of types that can afford the
grand bundle at price $p$, and $\mathcal{Z}$ the subset of types who cannot.
\end{theorem}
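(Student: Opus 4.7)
The plan is to derive Theorem~\ref{grandbundlingtheorem} as an essentially immediate specialization of the Optimal Menu Theorem (Theorem~\ref{bundlingtheorem}), once the two stochastic dominance conditions are correctly identified with the sign conventions of Definition~\ref{optconditions}. Indeed, the grand-bundling mechanism at price $p$ has, in essential form, precisely two menu options: the null option $(\vec{0},0)$, chosen by all types in $\mathcal{Z}$, and the grand-bundle option $(\vec{1},p)$, chosen by all types in $\mathcal{W}$. (We may assume both $\mathcal{Z}$ and $\mathcal{W}$ have positive probability, as otherwise the mechanism degenerates either to giving everything away for free or to always selling nothing, and the claim reduces to a trivial check.) Applying Theorem~\ref{bundlingtheorem}, the mechanism is optimal if and only if the optimal menu condition holds on both $\mathcal{Z}$ and $\mathcal{W}$.

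First I would handle the region $\mathcal{Z}$. Since $p_i=0$ for every coordinate $i$ in the null option, Definition~\ref{optconditions} assigns the vector $\vec{v} = \vec{1}$, and the required condition is $\mu_+|_{\mathcal{Z}} \preceq_{\textrm{cvx}(\vec{1})} \mu_-|_{\mathcal{Z}}$. Unwinding the definition, this is exactly the statement that $\int u\,d\mu|_{\mathcal{Z}} \leq 0$ for every non-decreasing, convex, bounded $u$, which is by definition the relation $0 \succeq_{\textrm{cvx}} \mu|_{\mathcal{Z}}$.

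Next I would handle the region $\mathcal{W}$. Since $p_i=1$ for every coordinate in the grand-bundle option, Definition~\ref{optconditions} assigns the vector $\vec{v} = -\vec{1}$, so the required condition is $\mu_+|_{\mathcal{W}} \preceq_{\textrm{cvx}(-\vec{1})} \mu_-|_{\mathcal{W}}$. Using the correspondence $\alpha \succeq_{\textrm{cvx}(-\vec{1})} \beta \Leftrightarrow \beta \succeq_2 \alpha$ noted right after Definition~\ref{firstorderdef}, this is equivalent to $\mu_+|_{\mathcal{W}} \succeq_2 \mu_-|_{\mathcal{W}}$, which is the signed-measure statement $\mu|_{\mathcal{W}} \succeq_2 0$.

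Conjoining the two equivalences yields the compact form $\mu|_{\mathcal{W}} \succeq_2 0 \succeq_{\textrm{cvx}} \mu|_{\mathcal{Z}}$, which is exactly the claimed necessary and sufficient condition. Because Theorem~\ref{bundlingtheorem} is already an if-and-only-if statement, no further work is needed beyond this translation of conventions. I anticipate no substantive obstacle; the only care required is to verify that the notion of $\succeq_2$ stated in the theorem really is the one induced by $\succeq_{\textrm{cvx}(-\vec{1})}$ on signed measures, and to note (via Proposition~\ref{prop:equal-mass}) that the resulting conditions force $\mu_+|_{\mathcal{Z}}(X)=\mu_-|_{\mathcal{Z}}(X)$ and $\mu_+|_{\mathcal{W}}(X)=\mu_-|_{\mathcal{W}}(X)$, which is consistent with the standing identity $\mu(X)=0$ and the essentially disjoint union $\mathcal{Z}\cup\mathcal{W}=X$.
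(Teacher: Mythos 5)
Your proposal is correct and matches the paper's own treatment: the paper derives Theorem~\ref{grandbundlingtheorem} as an immediate specialization of Theorem~\ref{bundlingtheorem} to the two-option menu $\{(\vec{0},0),(\vec{1},p)\}$, with the region $\mathcal{Z}$ condition becoming $0 \succeq_{\textrm{cvx}} \mu|_{\mathcal{Z}}$ and the region $\mathcal{W}$ condition becoming $\mu|_{\mathcal{W}} \succeq_2 0$ via the stated equivalence between $\succeq_{\textrm{cvx}(-\vec{1})}$ and second-order dominance. Your translation of the sign conventions and the mass-balance remark via Proposition~\ref{prop:equal-mass} are exactly the steps the paper relies on.
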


\noindent Next, we explore implications of our characterization of grand bundling optimality.

\subsection{Example Applications of Grand Bundling Optimality}\label{bundlingexamplesection}

We now present an example application of our characterization result to determine the optimality of mechanisms that make a take-it-or-leave-it offer of the grand bundle of all items at some price.  Our result applies to a setting with arbitrarily many items, which is relatively rare in the  literature. More specifically, we consider a setting with $n$ iid goods whose values are uniformly distributed on $[c,c+1]$. It is easy to see that the ratio of the revenue achievable by grand bundling to the social welfare goes to 1 when either $n$ or $c$ goes to infinity.\footnote{This follows by setting a price for the grand-bundle equal to $(c + \frac 1 2) n - \sqrt{n} \log{c n}$ and noting that a straightforward application of Hoeffding's inequality gives that the bundle is accepted with probability close to 1.} This implies that grand-bundling is optimal or close to optimal for large values of $n$ and $c$. Indeed, the following theorem shows that, for every $n$, grand bundling is the optimal mechanism for large values of $c$.

\begin{theorem}\label{nuniform}
For any integer $n > 0$ there exists a $c_0$ such that for all $c \ge c_0$, the optimal mechanism for selling $n$ iid goods whose values are uniform on $[c,c+1]$ is a take-it-or-leave-it offer for the grand bundle.
\end{theorem}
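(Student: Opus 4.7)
By Theorem~\ref{grandbundlingtheorem} it suffices, for every sufficiently large $c$, to exhibit a grand-bundling price $p$ such that the transformed measure $\mu$ satisfies $\mu|_{\mathcal{W}} \succeq_2 0 \succeq_{\textrm{cvx}} \mu|_{\mathcal{Z}}$, where $\mathcal{W} = \{x:\sum_i x_i \ge p\}$ and $\mathcal{Z} = X\setminus\mathcal{W}$. Specializing the computation in Section~\ref{examplesetup} to $a_i=c$, $b_i=c+1$, $f\equiv 1$, the measure $\mu$ is the sum of a $+1$ point mass at $\vec c = (c,\ldots,c)$, a uniform interior density $-(n+1)$, a uniform surface density $+(c+1)$ on each upper face $\{x_i=c+1\}$, and a uniform surface density $-c$ on each lower face $\{x_i=c\}$.

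Write $p = nc + \tilde p$. By Proposition~\ref{prop:equal-mass} either dominance condition forces $\mu(\mathcal{Z})=0$, and for $\tilde p \in (0,1)$ (so that no upper face of $X$ meets $\mathcal{Z}$) this reduces to the balance equation $(n+1)F_n(\tilde p)+nc\,F_{n-1}(\tilde p)=1$, where $F_k$ denotes the Irwin--Hall CDF of a sum of $k$ i.i.d.\ $U[0,1]$ variables. The left-hand side is strictly increasing in $\tilde p$ from $0$ at $\tilde p=0$ to a quantity of order $n^2 c/n!$ at $\tilde p=1$, so for $c$ sufficiently large a unique root $\tilde p(c) \in (0,1)$ exists; moreover the identity forces $nc\,F_{n-1}(\tilde p(c)) \le 1$, hence $F_{n-1}(\tilde p(c)) = O(1/c)$ and $\tilde p(c) \to 0$ as $c\to\infty$. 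Choose $c_0$ large enough that this root exists and satisfies $\tilde p(c) < 1$ for every $c \geq c_0$ (the case $n=1$ reduces to Myerson, where $p=c$ is optimal and $\mathcal{Z}=\emptyset$ trivializes both conditions).

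The $\mathcal{Z}$-condition is then immediate: because $\tilde p(c)<1$, $\mu_+|_{\mathcal{Z}}$ is exactly the point mass $\delta_{\vec c}$, while $\mu_-|_{\mathcal{Z}}$ is a nonnegative measure of total mass $1$ supported in $\mathcal{Z}\subseteq[c,c+1]^n$, so its centroid $\bar y$ satisfies $\bar y \geq \vec c$ componentwise. For any convex non-decreasing $u$, monotonicity and Jensen's inequality give $\int u\,d\mu_+|_{\mathcal{Z}} = u(\vec c)\leq u(\bar y)\leq \int u\,d\mu_-|_{\mathcal{Z}}$, i.e., $\int u\,d\mu|_{\mathcal{Z}}\leq 0$ as required.

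The main obstacle is the $\mathcal{W}$-condition $\mu|_{\mathcal{W}} \succeq_2 0$, which asks via Strassen's representation of the increasing concave order that $\mu_-|_{\mathcal{W}}$ be obtainable from $\mu_+|_{\mathcal{W}}$ by a combination of mean-preserving spreads and coordinatewise downward translations (both of which weakly decrease $\int u$ for concave non-decreasing $u$). The natural symmetric per-direction coupling---for each $i$ and each cross-section $\vec y = (x_j)_{j\ne i}$, ship $c$ units of the $(c+1)\,d\vec y$ upper-face mass at $(c+1,\vec y)$ down to $(c,\vec y)$ whenever the latter lies in $\mathcal{W}$, and spread the remaining $d\vec y$ uniformly along the direction-$i$ column in $\mathcal{W}$---conserves total mass by the balance equation and matches the lower faces exactly, but leaves a pointwise interior imbalance: at a ``deep'' point where every cross section $S^{(i)} = \sum_{j\ne i}(x_j-c) \geq \tilde p$, summing across the $n$ directions deposits only density $n$ versus the required $n+1$, with the compensating surplus concentrated on the thin ``Case B'' boundary strip where $S^{(i)} < \tilde p$ and the whole $(c+1)\,d\vec y$ must stay inside the $\mathcal{W}$-portion of the column. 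The hard technical step is then to route the $O(\tilde p^{\,n-1})$ boundary surplus into the deep interior by a second pass of mean-preserving spreads coupled with downward moves that respect the $\mathcal{W}$-constraint; I expect the construction to exploit heavily the symmetry under coordinate permutations and the vanishing of $\tilde p(c)$, so that this correction is a perturbative redistribution whose validity reduces to checking a multivariate Strassen-type conditional-expectation inequality against the explicitly written-down symmetric coupling.
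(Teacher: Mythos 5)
Your setup is fine: the reduction via Theorem~\ref{grandbundlingtheorem}, the explicit form of $\mu$, the balance equation defining the price $p = nc+\tilde p(c)$ with $\tilde p(c)\to 0$, and the $\mathcal{Z}$-condition via Jensen's inequality against the point mass at $\vec c$ all match what the paper does (the paper defines $p^*(c)$ implicitly by $\mu^c(Z(p^*(c)))=0$ rather than through the Irwin--Hall CDF, but this is the same argument). The problem is the $\mathcal{W}$-condition, which is the entire technical content of the theorem, and there your proposal has a genuine gap. You propose a per-direction column coupling (ship $c$ units of each upper-face element straight down to the opposite lower face when that point lies in $\mathcal{W}$, spread the remaining unit along the column), correctly observe that it fails pointwise --- the deep interior receives density $n$ instead of $n+1$, with the compensating mass stranded on the thin strip where the column exits through $\mathcal{Z}$ --- and then defer the repair (``route the boundary surplus into the deep interior by a second pass of mean-preserving spreads coupled with downward moves'') to an unproved ``multivariate Strassen-type'' inequality that you only expect to hold. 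Note also that this correction is not perturbative in the relevant sense: the interior deficit has total mass of constant order (roughly $1$), matched by a boundary surplus of constant order concentrated on a set of vanishing volume, and it must be redistributed deep into the cube using only spreads and \emph{downward} moves that never re-enter $\mathcal{Z}$; nothing in the proposal shows this can be done, and this is exactly the hard step.

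The paper resolves this with a different construction. It proves Lemma~\ref{matching}: for each permutation $\sigma$, an explicit bijection $\varphi_\sigma$ from a simplicial region $P'_\sigma$ of the top face $\{x_{\sigma(1)}=1\}$ onto the bottom-face region $N'_\sigma = \{y_{\sigma(1)}\ge\cdots\ge y_{\sigma(n)}=0\}$ that moves every point componentwise downward and scales surface measure by exactly $\rho=(c+1)/c$, so that $\mu^c_+$ on $P_\sigma$ is transported onto $\mu^c_-$ on $N_\sigma = N'_\sigma\setminus Z_c$ by downward moves alone. The leftover positive surface mass $P_0$ (which for large $c$ lies entirely in the corner region where all coordinates exceed $3/4$) is then matched against the entire interior negative mass $N_0$ by a centroid argument: as $c\to\infty$ the normalized centroid of $\mu^c_-|_{N_0}$ tends to the center $(1/2,\ldots,1/2)$, which is componentwise below every point of $P_0$, so Strassen's theorem with the independent coupling gives $\mu^c_+|_{P_0}\preceq_{cvx(-\vec 1)}\mu^c_-|_{N_0}$. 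In other words, the interior mass is absorbed by the corner mass on the boundary via a single mean-preserving-spread-plus-downward step, not by distributing upper-face mass along columns; your column-based scheme would need a new argument to be completed, and as written the proof is incomplete at precisely the point the theorem is difficult.
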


\begin{remark}
\cite{Pavlov11} proved the above result for two items, and explicitly solved for $c_0 \approx 0.077$. In our proof, for simplicity of analysis, we do not attempt to exactly compute $c_0$ as a function of $n$.
\end{remark}

Our proof of Theorem~\ref{nuniform} uses the following lemma, which enables us to appropriately match regions on the surface of a hypercube. The proof of this lemma and of Theorem~\ref{nuniform} appears in the online appendix.
\begin{lemma}\label{matching}
For $n \geq 2$ and $\rho > 1$,  define the $(n-1)$-dimensional subsets of $[0,1]^n$:
\begin{align*}
A &= \left\{x:1 = x_1 \geq x_2 \geq \cdots \geq x_n \textrm{ and } x_n \leq 1-   \left(\frac{\rho-1}{\rho} \right)^{1/(n-1)} \right\}\\
 B &= \left\{ y:y_1 \geq \cdots \geq y_n = 0 \right\}.
\end{align*}
There exists a continuous bijective map $\varphi : A \rightarrow B$ such that
\begin{itemize}
\item For all $x\in A$, $x$ is componentwise greater than or equal to $\varphi(x)$
\item For subsets $S \subseteq A$ which are measurable under the $(n-1)$-dimensional surface Lebesgue measure $v(\cdot)$, it holds that $\rho \cdot v(S) = v(\varphi(S))$.
\item For all $\epsilon > 0$, if $\varphi_1(x) \leq \epsilon$ then $x_n \geq 1- \left( \frac{\epsilon^{n-1} +\rho-1}{\rho}  \right)^{1/(n-1)}$.
\end{itemize}
\end{lemma}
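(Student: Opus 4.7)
The plan is to give an explicit formula for $\varphi$ and then verify each bullet by direct computation. Let $\theta = ((\rho-1)/\rho)^{1/(n-1)}$, and define $G:[\theta,1]\to[0,1]$ by
$$G(t) = \left(\rho\, t^{n-1} - (\rho-1)\right)^{1/(n-1)}.$$
Then $G(\theta)=0$, $G(1)=1$, $G$ is strictly increasing with $G(t)\le t$ on $[\theta,1]$, and differentiating $G^{n-1}(t)=\rho t^{n-1}-(\rho-1)$ yields the ODE $G'(t)\cdot (G(t)/t)^{n-2}=\rho$. For $x\in A$ (so $x_1=1$), set $K(x_n) = G(1-x_n)/(1-x_n)\in[0,1]$ and define
$$\varphi(x) = (w_1,\dots,w_{n-1},0), \qquad w_i = K(x_n)\cdot(x_i - x_n), \ \ i=1,\dots,n-1.$$
In particular $w_1 = G(1-x_n)$ depends only on $x_n$, and the same scaling $K$ multiplies every shift $x_i-x_n$.

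Membership in $B$ and the componentwise domination are then straightforward from the ansatz. The monotonicity $1-x_n\ge x_2-x_n\ge\cdots\ge x_{n-1}-x_n\ge 0$ (inherited from $1=x_1\ge x_2\ge\cdots\ge x_n$) together with $K\ge 0$ yields $1\ge w_1\ge w_2\ge\cdots\ge w_{n-1}\ge 0$, so $\varphi(x)\in B$. Since $G(t)\le t$ forces $K\le 1$, we get $w_i=K(x_i-x_n)\le x_i-x_n\le x_i$ for $i\ge 2$, while $w_1\le 1=x_1$ and $w_n=0\le x_n$ are immediate. The third bullet is also immediate: $\varphi_1(x)=G(1-x_n)\le\epsilon$ rearranges to $\rho(1-x_n)^{n-1}\le\epsilon^{n-1}+(\rho-1)$, i.e.\ $x_n\ge 1-((\epsilon^{n-1}+\rho-1)/\rho)^{1/(n-1)}$.

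The main work is the Jacobian computation. Viewing $\varphi$ as a map $(x_2,\dots,x_n)\mapsto(w_1,\dots,w_{n-1})$, the row for $w_1$ has its only nonzero entry, $-G'(1-x_n)$, in the last column (the $x_n$-column); for $i=2,\dots,n-1$, the row for $w_i$ has $K(x_n)$ on the $x_i$-column, a nonzero but irrelevant entry in the $x_n$-column, and zeros elsewhere. Cofactor expansion along the first row leaves a diagonal $(n-2)\times(n-2)$ minor equal to $K^{n-2}\,I$, so $|\det|=G'(1-x_n)\cdot K(x_n)^{n-2}$, which equals $\rho$ by the ODE. The change-of-variables formula then gives $v(\varphi(S))=\rho\,v(S)$ for every measurable $S\subseteq A$. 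Bijectivity is obtained by inversion: for $w\in B$ with $w_1>0$, strict monotonicity of $G$ recovers $x_n = 1-G^{-1}(w_1)$, and then $x_i = x_n+(1-x_n)w_i/w_1$; one checks this lies in $A$. The singular face $\{x_n=1-\theta\}\subset A$ collapses onto $0\in B$, but these are $(n-1)$-dimensional null sets, so we have a bijection almost everywhere; since the lemma is used downstream only to build a coupling, this is all that is required (and, if strict bijectivity is needed, it can be restored by a separate measure-zero rearrangement).

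The main obstacle is discovering the correct ansatz. A naive scaling $w_i=K\,x_i$ breaks the domination $w\le x$ near $\{x_n=0\}$, while a pure coordinate shift has Jacobian $1$ and therefore cannot surject onto $B$. The shifted scaling $w_i=K(x_i-x_n)$ threads both needles: the Jacobian reduces to the single-variable ODE $G'\,(G/t)^{n-2}=\rho$, whose closed-form solution has its unique zero at precisely $t=\theta$ when $\theta=((\rho-1)/\rho)^{1/(n-1)}$—which is exactly how the threshold in the definition of $A$ enters.
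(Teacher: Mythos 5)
Your proposal is correct and follows essentially the same route as the paper: your map is literally the paper's map (your $w_1=G(1-x_n)=\bigl[1-\rho\bigl(1-(1-x_n)^{n-1}\bigr)\bigr]^{1/(n-1)}$ and $w_i=\frac{x_i-x_n}{1-x_n}\,w_1$), and the verifications---componentwise domination via $G(t)\le t$, a triangular Jacobian whose determinant reduces to $G'(t)\,(G(t)/t)^{n-2}=\rho$, and the algebraic rearrangement for the third bullet---mirror the paper's computations. Your explicit treatment of the degenerate face $x_n=1-\bigl(\tfrac{\rho-1}{\rho}\bigr)^{1/(n-1)}$ (which collapses to $0\in B$) is a welcome extra care that the paper's own bijectivity argument glosses over, and it is harmless for the downstream measure-theoretic use.
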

\begin{figure}[!ht]
\begin{center}
\includegraphics[scale=0.2]{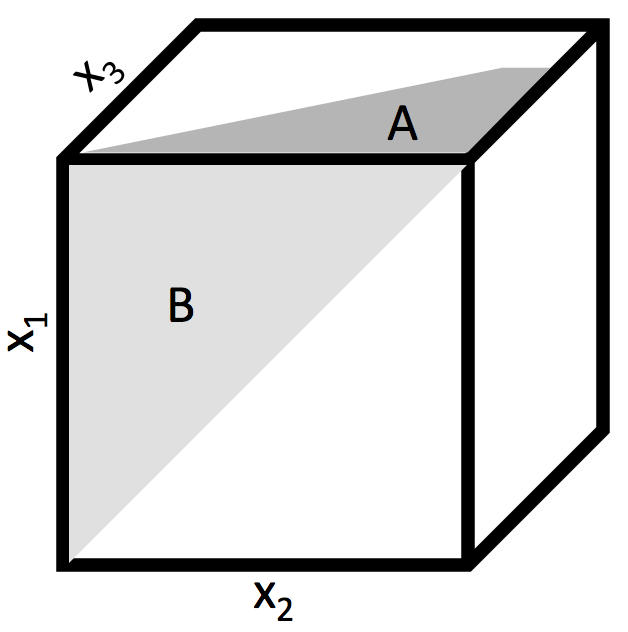}
\end{center}
\caption{The regions of Lemma~\ref{matching} for the case $n = 3$.}
\end{figure}

The main difficulty in proving Theorem~\ref{nuniform} is verifying the necessary stochastic dominance relations above the grand bundling hyperplane. Our proof appropriately partitions this part of the hypercube into $2(n!+1)$ regions and uses Lemma~\ref{matching} to show a desired stochastic dominance relation holds for an appropriate pairing of regions. The proof of Theorem~\ref{nuniform} is in the online appendix.

We now consider what happens when $n$ becomes large while $c$ remains fixed. In this case, in contrast to the previous result, we show using our strong duality theorem that grand bundling is \emph{never} the optimal mechanism for sufficiently large values of $n$.

\begin{theorem}\label{nuniform-notbundling}
For any $c \ge 0$ there exists an integer $n_0$ such that for all $n \ge n_0$, the optimal mechanism for selling $n$ iid goods whose values are uniform on $[c,c+1]$ is {\bf not}  a take-it-or-leave-it offer for the grand bundle.
\end{theorem}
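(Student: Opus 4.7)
The plan is to use Theorem~\ref{grandbundlingtheorem} to derive a contradiction: if grand bundling at some price $p$ were optimal, the necessary condition $0 \succeq_{\textrm{cvx}} \mu|_{\mathcal{Z}}$ would force $\int_{\mathcal{Z}} u \, d\mu \leq 0$ for every continuous non-decreasing convex $u \in \mathcal{U}(X)$. I will exhibit such a $u$ for which the integral is strictly positive. Since grand-bundling revenue $p \cdot \Pr[\sum_i x_i \geq p]$ is maximized at a price $p^*$ characterized by a first-order condition, and the optimality of grand bundling at any price in particular implies optimality at (any) revenue-maximizing $p^*$, it suffices to refute optimality at $p^*$. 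Writing $q^* = p^* - nc$ and $F_n, I_n$ for the Irwin--Hall CDF and density of order $n$ on $[0,n]$, the first-order condition reads $(nc + q^*)\, I_n(q^*) = 1 - F_n(q^*)$.

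The key preliminary is that $q^* > 1$ for all $n$ exceeding some threshold $n_0(c)$. If instead $q^* \leq 1$, then on $[0,1]$ one has $I_n(q) = q^{n-1}/(n-1)!$ and $F_n(q) = q^n/n!$, so the FOC forces $(nc + q^*)(q^*)^{n-1}/(n-1)! = 1 - (q^*)^n/n!$; the left side is at most $(nc + 1)/(n-1)!$, which tends to $0$ as $n \to \infty$, while the right side is at least $1 - 1/n!$, which tends to $1$ -- a contradiction for large $n$. Consequently, for $n \geq n_0(c)$, each top face $\{x_i = c+1\}$ meets $\mathcal{Z}$ in a set of strictly positive $(n-1)$-dimensional Lebesgue measure $F_{n-1}(q^* - 1)$. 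This boundary mass is the geometric source of the obstruction.

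For the witness, take $u_N(x) = (x_1 - c)^N$ with $N$ a large integer. As a non-negative, non-decreasing, convex function of $x_1$ alone, bounded by $1$ on $X$, it belongs to $\mathcal{U}(X)$ and is admissible. Expanding $\int_{\mathcal{Z}} u_N \, d\mu$ against the four pieces of $\mu$ listed in Section~\ref{examplesetup}, the corner atom at $(c,\dots,c)$ contributes $0$; the bulk $-(n+1)$ density, the top faces $\{x_i = c+1\}$ for $i \neq 1$, and every bottom face $\{x_i = c\}$ each integrate $(x_1 - c)^N$ against a density that is continuous and bounded in $x_1$ on $[c, c+1]$, so by the Laplace-type estimate $\int_0^1 y^N g(y)\, dy = O(1/N)$ each of these contributions vanishes as $N \to \infty$. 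The sole surviving term is the face $\{x_1 = c+1\}$ itself, where $u_N \equiv 1$: its contribution equals $(c+1)\, F_{n-1}(q^* - 1) > 0$. Hence choosing $N$ large enough makes $\int_{\mathcal{Z}} u_N \, d\mu > 0$, contradicting $0 \succeq_{\textrm{cvx}} \mu|_{\mathcal{Z}}$ and thereby ruling out grand-bundling optimality for $n \geq n_0(c)$. The main obstacle is the bound $q^* > 1$: intuitively it expresses that for large $n$ the optimal reserve is pushed well above $nc+1$ so that the face $\{x_1 = c+1\}$ lies inside $\mathcal{Z}$ -- the test function $(x_1 - c)^N$ then simply isolates this ``stuck'' surface mass, which cannot be matched away by mean-preserving spreads (forced to preserve $\mathbb{E}[x_1] = c+1$) nor by moving mass to coordinate-wise larger points (none exist in coordinate $1$).
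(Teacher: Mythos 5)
Your proof is correct, but it disposes of the low-price case differently from the paper. The paper considers an arbitrary candidate bundle price $p$ and splits into two cases: if $p>nc+1$ it invokes the (discontinuous) test function $\mathbbm{1}_{x_1=c+1}$ to violate $0\succeq_{cvx}\mu|_{Z_p}$, and if $p\le nc+1$ it uses a mass-counting argument — $\mu(Z_p)=0$ forces $\mu_-(Z_p)=1$, yet $\mu_-(Z_{nc+1})=\frac{n+1}{n!}+\frac{nc}{(n-1)!}<1$ once $n$ is large — which also yields an explicit threshold $n_0(c)$. You instead eliminate the low-price case up front by reducing to a revenue-maximizing bundle price $p^*$ (valid, since grand bundling at $p^*$ earns at least as much as at any $p$, so it too would be optimal) and using the Irwin--Hall first-order condition to show $q^*=p^*-nc>1$ for large $n$; you then carry out, rigorously, the same ``stuck surface mass'' violation via the continuous witnesses $(x_1-c)^N$ and a limiting argument. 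Your route buys two things: it only ever needs the face-mass obstruction (no second case), and by using continuous convex increasing test functions it avoids the paper's slight informality of testing $\succeq_{cvx}$ against a discontinuous indicator, which strictly speaking is not in $\mathcal{U}(X)$; the paper's route buys independence from any characterization of the optimal bundle price, an explicit $n_0$, and no need to justify differentiability/interiority of the maximizer. Two small points you should make explicit: the maximizer of $R(t)=(nc+t)(1-F_n(t))$ on $[0,n]$ is interior (e.g.\ $R'(0)=1>0$ for $n\ge 2$ and $R(n)=0$), so the FOC indeed holds at $q^*$; and the one-line justification of the reduction to $p^*$ stated above.
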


\begin{proof}
Given $c$, let $n$ be large enough so that
$$\frac{n+1}{n!}+\frac{nc}{(n-1)!}<1.$$
To prove the theorem, we will assume that an optimal grand bundling price $p$ exists and reach a contradiction. 

As shown in Section~\ref{examplesetup}, under the transformed measure $\mu$ the hypercube has mass $-(n+1)$ in the interior, $+1$ on the origin, $c+1$ on every positive surface $x_i=c+1$, and $-c$ on every negative surface $x_i = c$.

According to Theorem~\ref{bundlingtheorem}, for grand bundling at price $p$ to be optimal it must hold that $\mu|_{Z_p} \preceq_{cvx} 0$ for the region $Z_p = \{x :\Lone{x} \leq p\}$. If $p>n c + 1$ this could not happen, since for the function $\mathbbm{1}_{x_1 = c + 1}(x)$ (which is increasing and convex in $[c,c+1]^n$) we have that $\int_{Z_p} \mathbbm{1}_{x_1 = c + 1} \, d\mu = \mu(Z_p \cap \{x_1 = c + 1\}) = \mu_+(Z_p \cap \{x_1 = c + 1\}) > 0$ which violates the $\mu|_{Z_p} \preceq_{cvx} 0$ condition.

To complete the proof, we now consider the case that $p \le n c+1$ and will derive a contradiction. For the necessary condition $\mu|_{Z_p} \preceq_{cvx} 0$ to hold, it must be that $\mu(Z_p)=0$. Since $p \le n c + 1$, none of the positive outer surfaces of the cube have nontrivial intersection with $Z_p$, so all the positive mass in $Z_p$ is located at the origin. Therefore, $\mu_{+} (Z_p) = 1$ which means that $\mu_{-} (Z_p) = 1$ as well. Moreover, since $p \le n c+1 \Rightarrow Z_p \subseteq Z_{n c + 1}$, we also have that $\mu_{-} (Z_{n c + 1}) \ge \mu_{-} (Z_p) = 1$. 

To reach a contradiction, we will show that $\mu_-(Z_{nc+1})<1$. We observe that we can compute $\mu_{-}(Z_{n c + 1})$ directly by summing the $n$-dimensional volume of the negative interior with the $(n-1)$-dimensional volumes of each of the $n$ negative surfaces enclosed in $Z_{n c + 1}$.\footnote{The geometric intuition of this step of the argument is that, for large enough $n$, the fraction of the $n$-dimensional hypercube $[0,1]^n$ which lies below the diagonal $||x||=1$ goes to zero, and similarly the fraction of $(n-1)$-dimensional surface area on the boundaries which lies below the diagonal also goes to zero as $n$ gets large.} 
The first is equal to:
\begin{align*}
(n+1) \times \textrm{Vol}\left[ \{x \in (c,c+1)^n :\Lone{x} \leq n c + 1 \} \right] &= \\ (n+1) \times \textrm{Vol}\left[ \{x \in (0,1)^n: \Lone{x} \leq 1 \} \right] &=
\frac { (n+1)} {n!}
\end{align*}
while the latter is equal to:
\begin{align*}
n \times c \times \textrm{Vol}\left[ \{x \in (c,c+1)^{n-1} :\Lone{x} + c \leq n c + 1 \} \right] &= \\ 
n \times c \times \textrm{Vol}\left[ \{x \in (0,1)^{n-1}:\Lone{x} \leq 1 \} \right] &=
\frac { n c} {(n-1)!}
\end{align*}
Therefore, we get that $1 \le \mu_{-} (Z_{n c + 1}) = \frac { (n+1)} {n!} + \frac{ n c } {(n-1)!}$ which is a contradiction since we chose $n$ to be sufficiently large to make this quantity less than 1.
\end{proof}
  
\section{Constructing Optimal Mechanisms}\label{weakstructural}

\subsection{Preliminaries}

The results of the previous section characterize optimal mechanisms and give us the tools to check if a mechanism is optimal. In this section, we show how to use the optimal menu conditions we developed to identify candidate mechanisms. In particular, Theorem~\ref{bundlingtheorem} implies that (in the finite menu case) to find an optimal mechanism we need to identify a set of choices for the menu, such that for every region $R$ that corresponds to a menu outcome it holds that $\mu_+|_{R} \preceq_{cvx(\vec v)} \mu_-|_{R}$ for the appropriate vector $\vec v$. This implies that $\mu_+({R}) = \mu_-({R})$, so at the very least the total positive and the total negative mass in each region need to be equal. This property immediately helps us exclude a large class of mechanisms and guides us to identify potential candidates. We note that in this section we will develop techniques which apply not just to finite-menu mechanisms but to mechanisms with infinite menus as well.

We will restrict ourselves to a particularly useful class of mechanisms defined completely by the set of types that are excluded from the mechanism, i.e. they receive no items and pay nothing. We call this set of types the \emph{exclusion set} of a mechanism.  The exclusion set gives rise to a mechanism where the utility of a buyer is equal to the $\ell_1$ distance between the buyer's type and the closest point in the exclusion set. All known instances of optimal mechanisms for independently distributed items fall under this category. We proceed to define these concepts formally.

\begin{definition}[Exclusion Set]
Let $X = \prod_{i=1}^n [x^{\textrm{low}}_i,x^{\textrm{high}}_i]$. An {\em exclusion set $Z$ of $X$} is a convex, compact, and decreasing\footnote{A decreasing subset $Z \subset X$ satisfies the property that for all $a, b \in X$ such that $a$ is component-wise less than or equal to $b$, if $b \in Z$ then $a \in Z$ as well.} subset of $X$ with nonempty interior. 
\end{definition}

\begin{definition}[Mechanism of an Exclusion Set] \label{def:mechanism from exclusion set}
Every exclusion set $Z$ of $X$ induces a mechanism whose utility function $u_Z: X \rightarrow \mathbb{R}$ is defined by:
$$u_Z(x) = \min_{z \in Z} \Lone{z-x}.$$
\end{definition}
Note that, since the exclusion set $Z$ is closed, for any $x \in X$ there exists a $z \in Z$ such that $u_Z(x) = \Lone{z-x}$. Moreover, we show below that any such utility function $u_Z$ satisfies the constraints of the mechanism design problem. That is, the mechanism corresponding to $u_Z$ is IC and IR. The proof of the following claim is straightforward casework and appears in the online appendix. 

\begin{claim}\label{zerosetICIR}
Let $Z$ be an exclusion set of $X$. Then $u_Z$ is non-negative,  non-decreasing, convex, and has Lipschitz constant (with respect to the $\ell_1$ norm) at most $1$. In particular, $u_Z$ is the utility function of an incentive compatible and individually rational mechanism.
\end{claim}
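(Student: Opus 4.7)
The plan is to verify each of the four analytic properties of $u_Z$ in turn, after which the IC and IR claims follow immediately from the Rochet characterization invoked earlier in the paper (which states that any convex, nonnegative, nondecreasing, $\ell_1$-Lipschitz utility function is realized by an IC and IR mechanism).

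Non-negativity is immediate since $\|z-x\|_1 \geq 0$. For the 1-Lipschitz property, fix $x,y \in X$ and let $z^* \in Z$ achieve $u_Z(y) = \|z^* - y\|_1$ (such a minimizer exists because $Z$ is compact). Then by the triangle inequality $u_Z(x) \leq \|z^* - x\|_1 \leq \|z^* - y\|_1 + \|y - x\|_1 = u_Z(y) + \|x-y\|_1$, and swapping $x$ and $y$ gives $|u_Z(x)-u_Z(y)| \leq \|x-y\|_1$. For convexity, fix $x_1,x_2 \in X$ and $\lambda \in [0,1]$, and let $z_1,z_2 \in Z$ be minimizers for $x_1,x_2$. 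Since $Z$ is convex, $\lambda z_1 + (1-\lambda) z_2 \in Z$, so
\[
u_Z(\lambda x_1 + (1-\lambda) x_2) \leq \|\lambda(z_1 - x_1) + (1-\lambda)(z_2 - x_2)\|_1 \leq \lambda u_Z(x_1) + (1-\lambda) u_Z(x_2),
\]
where the last step uses the triangle inequality for $\|\cdot\|_1$.

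The step requiring the \emph{decreasing} hypothesis on $Z$ is monotonicity, which I expect to be the main (mild) obstacle. Suppose $x \leq y$ coordinate-wise in $X$, and let $z^* \in Z$ attain $u_Z(y) = \|z^* - y\|_1$. Define $z' \in \mathbb{R}^n$ coordinate-wise by $z'_i = \min(z^*_i, x_i)$. Then $z' \in X$ (each coordinate lies between $x^{\text{low}}_i$ and $x^{\text{high}}_i$) and $z' \leq z^*$ componentwise, so by the decreasing property of $Z$ we have $z' \in Z$. In each coordinate, if $z^*_i \leq x_i$ then $|z'_i - x_i| = x_i - z^*_i \leq y_i - z^*_i = |z^*_i - y_i|$ (since $y_i \geq x_i \geq z^*_i$), and if $z^*_i > x_i$ then $|z'_i - x_i| = 0 \leq |z^*_i - y_i|$. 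Summing, $u_Z(x) \leq \|z' - x\|_1 \leq \|z^* - y\|_1 = u_Z(y)$.

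Having established that $u_Z$ is non-negative, non-decreasing, convex, and $1$-Lipschitz with respect to $\|\cdot\|_1$, the ``In particular'' part of the claim follows directly from the characterization of IC and IR mechanism utility functions discussed around Definition~\ref{def:mechanism from exclusion set} (with allocations given by $\nabla u_Z$ where defined and a subgradient selection elsewhere, and payments $\cT(x) = \nabla u_Z(x) \cdot x - u_Z(x)$).
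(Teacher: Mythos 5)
Your proposal is correct and follows essentially the same route as the paper's proof: non-negativity is trivial, Lipschitzness and convexity follow from the triangle inequality together with compactness and convexity of $Z$, and monotonicity uses exactly the paper's construction of taking a minimizer for the larger point and projecting it down coordinatewise via the decreasing property of $Z$ (your $z'_i = \min(z^*_i, x_i)$ is the paper's $z_y$ up to renaming), with the IC/IR conclusion then following from the Rochet-type characterization. The only cosmetic difference is that you prove convexity for general $\lambda$ while the paper checks midpoint convexity, which suffices given continuity.
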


\subsection{Constructing Optimal Mechanisms for 2 Items}

To provide sufficient conditions for $u_Z$ to be optimal for the case of 2 items, we define the concept of a canonical partition. A canonical partition divides $X$ into regions such that the mechanism's allocation function within each region has a similar form. Roughly, the canonical partition separates $X$ based on which direction (either ``down,'' ``left,'' or ``diagonally'') one must travel to reach the closest point in $Z$. While the definition is involved, the geometric picture of Figure~\ref{canonicalfig} is straightforward.

\begin{definition}[Critical price, Critical point, Outer boundary functions] \label{def:outer boundary}
Let $Z$ be an exclusion set of $X$. Denote by $P$ the maximum value $P = \max \{x+y : (x,y) \in Z\}$, we call $P$ the {\em critical price}.
We now define the {\em critical point} $(x_{\textrm{crit}},y_{\textrm{crit}})$, such that 
$$x_{\textrm{crit}} = \min \{x : (x,P-x) \in Z\} \, \textrm{  and  } \, y_{\textrm{crit}} = \min \{y : (P-y,y) \in Z\}$$
We define the {\em outer boundary functions of $Z$} to be the functions $s_1, s_2$ given by $$s_1(x) = \max \{y : (x,y) \in Z\} \textrm{ and } s_2(y) = \max \{x : (x,y) \in Z\},$$
with domain $[0,x_{\textrm{crit}}]$ and $[0,y_{\textrm{crit}}]$ respectively.
\end{definition}

\begin{definition}[Canonical partition]\label{canonicalpartitiondef}
Let $Z$ be an exclusion set of $X$ with critical point $(x_{\textrm{crit}},y_{\textrm{crit}})$ as in Definition~\ref{def:outer boundary}.
We define {\em the canonical partition of $X$ induced by $Z$} to be the partition of $X$ into $Z \cup \mathcal{A} \cup \mathcal{B} \cup \mathcal{W}$, where
$$\mathcal{A} = \{(x,y) \in X : x < x_{\textrm{crit}}\} \setminus Z; \quad \mathcal{B} = \{(x,y) \in X : y < y_{\textrm{crit}}\} \setminus Z; \quad \mathcal{W} = X \setminus (Z \cup \mathcal{A} \cup \mathcal{B}),$$
as shown in Figure~\ref{canonicalfig}.
\end{definition}

Note that the outer boundary functions $s_1, s_2$ of an exclusion set $Z$ are concave and thus are differentiable almost everywhere on $[0,c_1]$ and have non-increasing derivatives.

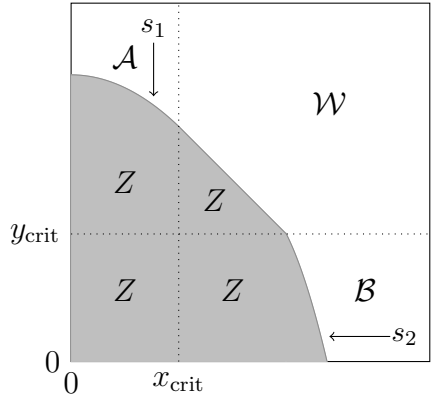
\begin{figure}[!ht]\centering
\begin{tikzpicture}
\begin{axis}[width=2.5in, height=2.5in,ymin=0, ymax=10, xmin=0, xmax=10,   ytick pos=left, ytick={3.56,0} , yticklabels={$y_{\textrm{crit}}$,$0$}, xtick={3,0},xticklabels={$x_{\textrm{crit}}$,$0$}, xtick pos=left]
  
  \addplot+[color=gray!50, fill=gray!50, domain=0:3,mark=none]
 {8-.16*x*x}
 \closedcycle;
   \addplot+[color=gray!50, fill=gray!50, domain=3:6,mark=none]
 {9.56-x}
 \closedcycle;
\addplot+[color=gray!50,fill=gray!50,domain=6:7.135,mark=none]
{4.56-(x-5)*(x-5)} \closedcycle;

  \addplot+[color=gray!90, fill=gray!50, domain=0:3,mark=none]
 {8-.16*x*x};
   \addplot+[color=gray!90, fill=gray!50, domain=3:6,mark=none]
 {9.56-x};
\addplot+[color=gray!90,fill=gray!50,domain=6:7.135,mark=none, solid]
{4.56-(x-5)*(x-5)};
 
 \addplot+[color=black, domain=0:10,samples=2,  mark=none, dotted]
 {3.56} 
 ;
 \addplot[color=black, , dotted, mark=none] coordinates{
(3,0)
(3,11)
};
\node at (axis cs:1.5,5){$Z$};
\node at (axis cs:4.5,2){$Z$};
\node at (axis cs:1.5,2){$Z$};
\node at (axis cs:4,4.5){$Z$};
\node at (axis cs:8.2,2){$\mathcal{B}$};    
\node at (axis cs:7.2,7.2){$\mathcal{W}$};
\node at (axis cs:1.5,8.5){$\mathcal{A}$};
\node (dest4) at (axis cs: 2.3, 7.1){};
\node (lab4) at (axis cs: 2.3,9.2){};
\node (dest5) at (axis cs: 6.9,0.7){};
\node (lab5) at (axis cs: 9.2,0.7){};
\draw [->] (lab4)--(dest4);
\draw [->] (lab5)--(dest5);
\node at (axis cs: 2.3,9.3){\small{$s_1$}};
\node at (axis cs: 9.3,0.7){\small{$s_2$}};
\end{axis}
\end{tikzpicture}
\caption{The canonical partition}\label{canonicalfig}
\end{figure}

We  now restate the utility function $u_Z$ of a mechanism with exclusion set $Z$ in terms of a canonical partition.

\begin{claim}
Let $Z$ be an exclusion set of $X$ with outer boundary functions $s_1, s_2$ and critical price $P$, and let $Z \cup \mathcal{A} \cup \mathcal{B} \cup \mathcal{W}$ be its canonical partition. Then for all $(v_1, v_2) \in X$, the utility function $u_Z$ of the mechanism with exclusion set $Z$ is given by:
$$
u_Z(v_1,v_2) = 
\begin{cases}
0 & \mbox{if } (v_1,v_2) \in Z \\
v_2 - s_1(v_1) & \mbox{if } (v_1,v_2) \in \mathcal{A}\\
v_1 - s_2(v_2) & \mbox{if } (v_1,v_2) \in \mathcal{B} \\
v_1 + v_2 - P & \mbox{if } (v_1,v_2) \in \mathcal{W}.
\end{cases}
$$
\end{claim}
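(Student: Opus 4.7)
Since $Z$ is decreasing, for any $v \in X$ the minimizer of $\|v-z\|_1$ over $z\in Z$ may be taken with $z \le v$ componentwise: replacing any coordinate $z_i > v_i$ by $v_i$ keeps $z$ in $Z$ (by the decreasing property) and does not increase the $\ell_1$ distance. Hence
\[
u_Z(v) \;=\; v_1 + v_2 - M(v), \qquad M(v) \;\triangleq\; \max\{\, z_1 + z_2 : z \in Z,\; z \le v\,\},
\]
and the claim reduces to computing $M$ on each of the four parts of the canonical partition.

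The cases $v \in Z$ and $v \in \mathcal{W}$ are straightforward. For $v \in Z$, take $z = v$ to get $M(v) = v_1 + v_2$ and $u_Z(v) = 0$. For $v \in \mathcal{W}$, I would first verify $v_1 + v_2 > P$: the three points $(x_{\textrm{crit}}, y_{\textrm{crit}})$, $(P - y_{\textrm{crit}}, y_{\textrm{crit}})$, $(x_{\textrm{crit}}, P - x_{\textrm{crit}})$ all lie in $Z$ (the first because it is componentwise below another, and the other two by definition of $x_{\textrm{crit}}, y_{\textrm{crit}}$), so convexity places the triangle $\{x \ge x_{\textrm{crit}},\, y \ge y_{\textrm{crit}},\, x + y \le P\}$ inside $Z$, and hence $v \notin Z$ together with $v_1 \ge x_{\textrm{crit}},\, v_2 \ge y_{\textrm{crit}}$ forces $v_1 + v_2 > P$. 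The diagonal segment $\{(x, P - x) : x \in [x_{\textrm{crit}}, P - y_{\textrm{crit}}]\} \subseteq Z$ then contains a point $z \le v$, because $[P - v_2, v_1]$ meets $[x_{\textrm{crit}}, P - y_{\textrm{crit}}]$; combining this with the universal bound $z_1 + z_2 \le P$ valid on $Z$ yields $M(v) = P$ and $u_Z(v) = v_1 + v_2 - P$.

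The main technical step is $v \in \mathcal{A}$ (with $\mathcal{B}$ symmetric), where $v_1 < x_{\textrm{crit}}$ and $v_2 > s_1(v_1)$. Any $z \in Z$ with $z_1 \le v_1$ satisfies $z_2 \le s_1(z_1)$ by the definition of $s_1$, so $M(v) = \max_{x \in [0, v_1]} \bigl(x + s_1(x)\bigr)$. The candidate $x = v_1$ gives $v_1 + s_1(v_1)$, and the key claim is that $f(x) \triangleq x + s_1(x)$ is non-decreasing on $[0, v_1]$. I would establish this by noting that $f$ is concave on $[0, x_{\textrm{crit}}]$ (since $s_1$ is), uniformly bounded above by $P$ (since $(x, s_1(x)) \in Z$ and $P = \max\{z_1 + z_2 : z \in Z\}$), and attains $P$ at $x = x_{\textrm{crit}}$. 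Any concave function on an interval whose supremum is attained at the right endpoint is non-decreasing: for $x < y \le x_{\textrm{crit}}$, the three-slope inequality gives $\tfrac{f(y) - f(x)}{y - x} \ge \tfrac{f(x_{\textrm{crit}}) - f(y)}{x_{\textrm{crit}} - y} \ge 0$. Consequently $M(v) = v_1 + s_1(v_1)$ and $u_Z(v) = v_2 - s_1(v_1)$. The formula on $\mathcal{B}$ follows by swapping the roles of the two coordinates and running the identical argument with $s_2$ in place of $s_1$ (the critical price $P$ is the same).
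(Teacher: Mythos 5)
Your proposal is correct and follows essentially the same route as the paper's proof: reduce to candidate points of $Z$ lying componentwise below $v$, and for region $\mathcal{A}$ use that $x + s_1(x)$ is concave on $[0,x_{\textrm{crit}}]$ and maximized at $x_{\textrm{crit}}$, so its maximum over $[0,v_1]$ is attained at $v_1$, giving $u_Z(v)=v_2-s_1(v_1)$. The only difference is cosmetic: you phrase the computation via $u_Z(v)=v_1+v_2-M(v)$ and spell out the $\mathcal{W}$ case, which the paper leaves as ``similar'' casework.
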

\begin{proof}
The proof is fairly straightforward casework. {We prove one of the cases here, and the remaining cases are similar.}

Pick any $v = (v_1, v_2) \in \mathcal{A}$. We will show that the closest $z \in Z$ is the point $z^* = (v_1, s_1(v_1))$. Pick $z' = (z_1', z_2') \in Z$ such that $u_Z(v) = \Lone{v-z'}$. It must be the case that $z_1' \leq v_1$, since otherwise $(v_1, z_2')$ would be in $Z$ (as $Z$ is decreasing) and strictly closer to $v$. 

We now have that $\Lone{v-z'} \ge \Lone{v}-\Lone{z'} \ge \Lone{v}-\max_{x \in [0,v_1]} (x + s_1(x))$.
Since the less restricted maximization problem, $\max_{x \in [0,x_{\textrm{crit}}]} (x + s_1(x))$ is maximized at $x_{\textrm{crit}}$ and the function $(x + s_1(x))$ is concave, the maximum of the more constrained version is achieved at $x = v_1$. Thus, we have that, $\Lone{v-z'} \ge \Lone{v} - v_1 - s_1(v_1) = v_2 - s_1(v_1) = \Lone{v-z^*}$.
\end{proof}

We now describe sufficient conditions under which  $u_Z$ is optimal.

\begin{definition}[Well-formed canonical partition]\label{wellformeddef}
Let $Z \cup \mathcal{A} \cup \mathcal{B} \cup \mathcal{W}$ be a canonical partition of $X$ induced by exclusion set $Z$ and let $\mu$ be a signed Radon measure on $X$ such that $\mu(X) = 0$.  We say that the canonical partition is {\em well-formed with respect to $\mu$} if the following conditions are satisfied:
\begin{enumerate}
	\item\label{canonicalcondition1} $\mu|_Z \preceq_{cvx} 0$ and $\mu|_{\mathcal{W}} \succeq_{2} 0$, and
	\item\label{canonicalcondition2} for all $v \in X$ and all $\epsilon > 0$: 
    \begin{itemize}
      \item $\mu|_\mathcal{A}\left([v_1, v_1+\epsilon] \times [v_2,\infty) \right) \geq 0$, with equality whenever $v_2 = 0$
      \item $\mu|_\mathcal{B}\left([v_1, \infty) \times [v_2, v_2 + \epsilon]\right) \geq 0$, with equality whenever $v_1 = 0$
    \end{itemize}
\end{enumerate}
\end{definition}

We point out the similarities between a well-formed canonical partition and the sufficient conditions for menu optimality of Theorem~\ref{bundlingtheorem}. Condition~\ref{canonicalcondition1} gives exactly the stochastic dominance conditions that need to hold in regions $Z$ and $\mathcal{W}$.
  We interpret Condition~\ref{canonicalcondition2} as saying that $\mu|_\mathcal{A}$ (resp. $\mu|_\mathcal{B}$) allows for the positive mass in any vertical (resp. horizontal) ``strip'' to be matched to the negative mass in the strip by only transporting ``downwards'' (resp. ``leftwards''). These conditions, guarantee (single-dimensional) first order dominance of the  measures along each strip which is stronger requirement than the convex dominance conditions of Theorem~\ref{bundlingtheorem}. In practice, when $\mu$ is given by a density function, we verify these conditions by analyzing the integral of the density function along appropriate vertical or horizontal lines. Even though Theorem~\ref{bundlingtheorem} applies only for mechanisms with finite menus, we prove in Theorem~\ref{canonicalpartitiontheorem} that a mechanism induced by an exclusion set is optimal for a 2-item instance if the canonical partition of its exclusion set is well-formed. Refer back to Figure~\ref{canonicalfig} to visualize such a mechanism.

\begin{theorem}\label{canonicalpartitiontheorem}
Let $\mu$ be the transformed measure of a probability density function $f$. If there exists an exclusion set $Z$ inducing a canonical partition $Z \cup \mathcal{A} \cup \mathcal{B} \cup \mathcal{W}$ of $X$ that is well-formed with respect to $\mu$, then the optimal IC and IR mechanism for a single additive buyer whose values for two goods are distributed according to the joint distribution $f$ is the mechanism induced by exclusion set $Z$. In particular, the mechanism uses the following allocation and price for a buyer with reported type $(x,y) \in X$:
\begin{itemize}
	\item if $(x,y) \in Z$, the buyer receives no goods and is charged $0$;
	\item if $(x,y) \in \mathcal{A}$, the buyer receives item $1$ with probability $-s'_1(x)$, item $2$ with probability $1$, and is charged $s_1(x) - x s_1'(x)$;
	\item if $(x,y) \in \mathcal{B}$, the buyer receives item $2$ with probability $-s'_2(y)$, item $1$ with probability $1$, and is charged $s_2(y) - y s'_2(y)$;
	\item if $(x,y) \in \mathcal{W}$, the buyer receives both goods with probability $1$ and is charged $P$;
\end{itemize}
where $s_1, s_2$ are the boundary functions and $P$ is the critical price  as in Definition~\ref{def:outer boundary}.
\end{theorem}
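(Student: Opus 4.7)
The plan is to apply Theorem~\ref{strongduality} with primal candidate $u^* = u_Z$ and construct a dual witness $\gamma^*$ that certifies optimality via Corollary~\ref{linearintegral}. Following the region-wise strategy of Section~\ref{examplesection}, I will decompose $\gamma^* = \gamma^Z + \gamma^{\mathcal{A}} + \gamma^{\mathcal{B}} + \gamma^{\mathcal{W}}$, each summand supported in the indicated piece of the canonical partition, and verify for every $S \in \{Z,\mathcal{A},\mathcal{B},\mathcal{W}\}$ that (i) $\gamma^S_1 - \gamma^S_2 \succeq_{cvx} \mu|_S$, (ii) $\int u^*\, d(\gamma^S_1 - \gamma^S_2) = \int_S u^*\, d\mu$, and (iii) $u^*(x) - u^*(y) = \Lone{x-y}$ holds $\gamma^S$-almost surely. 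Summing over $S$ then produces a tight primal-dual pair, and the claimed allocation $\cP = \nabla u_Z$ and payment $\cT = \nabla u_Z \cdot x - u_Z$ are read off region by region.

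On $Z$ the function $u_Z$ vanishes identically, so I take $\gamma^Z := 0$: conditions (ii) and (iii) are vacuous, and (i) reduces to $0 \succeq_{cvx} \mu|_Z$, which is the first half of Condition~\ref{canonicalcondition1} of Definition~\ref{wellformeddef}. On $\mathcal{A}$ (and symmetrically on $\mathcal{B}$), $u_Z(v_1,v_2) = v_2 - s_1(v_1)$ is affine along every vertical fiber, and in particular $u_Z(x) - u_Z(y) = x_2 - y_2 = \Lone{x-y}$ whenever $x_1 = y_1$ and $x_2 \ge y_2$. Disintegrating $\mu|_\mathcal{A}$ over its $x_1$-coordinate, Condition~\ref{canonicalcondition2} of Definition~\ref{wellformeddef} says that within every vertical strip the total signed mass is zero and all upper tails of $\mu|_\mathcal{A}$ are non-negative; this is exactly one-dimensional first-order stochastic dominance of the positive part over the negative part in each fiber, so I can build $\gamma^\mathcal{A}$ fiberwise as a purely downward transport with marginals $\gamma^\mathcal{A}_1 = \mu|_{\mathcal{A},+}$ and $\gamma^\mathcal{A}_2 = \mu|_{\mathcal{A},-}$. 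Then (i) holds with equality, (iii) holds by the chosen transport direction, and (ii) follows from (i).

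The crux of the proof is region $\mathcal{W}$, where $u_Z(v) = v_1 + v_2 - P$ is globally affine. Two features of this affinity are essential: first, $u_Z(x) - u_Z(y) = \Lone{x-y}$ for any $x$ coordinatewise above $y$, so any down-and-leftward coupling is tight for (iii); second, $\int u_Z\, d\nu$ depends on $\nu$ only through its total mass and barycenter, so (ii) is insensitive to mean-preserving spreads of the positive mass. These are exactly the two operations the dual of Theorem~\ref{strongduality} permits through the $\succeq_{cvx}$ relaxation. The input is the second half of Condition~\ref{canonicalcondition1}, namely $\mu|_\mathcal{W} \succeq_2 0$, equivalently $0 \succeq_{\textrm{cvx}(-\vec 1)} \mu|_\mathcal{W}$. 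Invoking a Strassen-type representation in the spirit of Lemma~\ref{strassen} for this dominance, I obtain an intermediate measure $\tilde\mu_+$ produced from $\mu|_{\mathcal{W},+}$ by mean-preserving spreads, together with a monotone down-left coupling of $\tilde\mu_+$ with $\mu|_{\mathcal{W},-}$; this coupling is taken as $\gamma^\mathcal{W}$.

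The hardest technical point is this last Strassen-style step: one must convert the second-order dominance into an actual measure-valued coupling whose support respects the componentwise $x \succeq y$ ordering and whose intermediate positive marginal $\tilde\mu_+$ stays inside $\mathcal{W}$, so that the four pieces add up to a valid global dual witness without spurious cross-region mass movements. Once this coupling is in hand, the affinity of $u_Z$ on $\mathcal{W}$ makes (i)–(iii) immediate, summing the four regional witnesses recovers a global $\gamma^*$ complementary to $u^* = u_Z$, and Theorem~\ref{strongduality} together with Corollary~\ref{linearintegral} closes the argument.
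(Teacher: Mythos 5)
Your architecture coincides with the paper's: take $u^*=u_Z$ (feasible by Claim~\ref{zerosetICIR}), build $\gamma^*=\gamma^Z+\gamma^{\mathcal{A}}+\gamma^{\mathcal{B}}+\gamma^{\mathcal{W}}$ region by region, and check the two conditions of Corollary~\ref{linearintegral} separately on each region; your treatment of $Z$ ($\gamma^Z=0$ from $\mu|_Z\preceq_{cvx}0$) and of $\mathcal{A},\mathcal{B}$ (Condition~\ref{canonicalcondition2} of Definition~\ref{wellformeddef} read as fiberwise one-dimensional first-order dominance, hence a purely downward/leftward transport with marginals $\mu_+|_{\mathcal{A}},\mu_-|_{\mathcal{A}}$) is exactly what the paper does. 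The gap is in region $\mathcal{W}$, precisely the step you flag as the crux and do not carry out. You propose to put the $\succeq_{cvx}$ slack on the positive side: spread $\mu_+|_{\mathcal{W}}$ by mean-preserving spreads into an intermediate $\tilde\mu_+$ and then couple $\tilde\mu_+$ down-and-leftwards to $\mu_-|_{\mathcal{W}}$. But the ``Strassen-type representation'' you invoke is not what Lemma~\ref{strassen} (or Lemma~\ref{strassenv}) gives: from $\mu_+|_{\mathcal{W}}\preceq_{cvx(-\vec 1)}\mu_-|_{\mathcal{W}}$ those lemmas produce a coupling $(\hat A,\hat B)$ of the (normalized) measures with $\mathbb{E}[\hat B\mid\hat A]\le\hat A$ componentwise; they do not produce a spread-first decomposition. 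Turning that coupling into your $\tilde\mu_+$ (e.g.\ via $\hat A+\hat B-\mathbb{E}[\hat B\mid\hat A]$) can push mass outside $\mathcal{W}$, and even outside $X$, whenever positive mass sits near the upper boundary; this would violate dual feasibility ($\gamma\in\Gamma_+(X\times X)$) and would break the region-wise equality $\int u_Z\,d(\gamma^{\mathcal{W}}_1-\gamma^{\mathcal{W}}_2)=\int_{\mathcal{W}}u_Z\,d\mu$, since $u_Z$ agrees with the affine function $v_1+v_2-P$ only on $\mathcal{W}$. So as written, the existence of your $\gamma^{\mathcal{W}}$ is unproved.

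The paper closes this step by putting the slack on the negative side instead, which needs no spreading at all: exactly as in the sufficiency direction of Theorem~\ref{bundlingtheorem} with $\vec v=-\vec 1$, take $\gamma^{\mathcal{W}}$ to be the law of $\bigl(\hat A,\ \mathbb{E}[\hat B\mid\hat A]\bigr)$. Then $\gamma^{\mathcal{W}}_1=\mu_+|_{\mathcal{W}}$ exactly and $\gamma^{\mathcal{W}}_2\preceq_{cvx}\mu_-|_{\mathcal{W}}$ by conditional Jensen, so $\gamma^{\mathcal{W}}_1-\gamma^{\mathcal{W}}_2\succeq_{cvx}\mu|_{\mathcal{W}}$; the second marginal stays in (the closure of) $\mathcal{W}$ because conditional expectations lie in the convex hull and $\mathcal{W}$ is convex; the transport is componentwise downward, so $u_Z(x)-u_Z(y)=\Lone{x-y}$ holds $\gamma^{\mathcal{W}}$-a.s.; and since $u_Z$ is affine on $\mathcal{W}$, the tower property gives $\int u_Z\,d\gamma^{\mathcal{W}}_2=\int u_Z\,d\mu_-|_{\mathcal{W}}$, which yields the required integral equality. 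Your spread-first route corresponds to the other decomposition of the increasing-concave order and could presumably be made to work with extra care about the boundary, but that extra work is exactly what your proposal omits, and it is unnecessary given the contraction-based construction.
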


\begin{proof}
We will show that $u_Z$ maximizes $\sup_{u \in \mathcal{U}(X) \cap \mathcal{L}_1(X)} \int_X u d\mu$.
By Corollary~\ref{linearintegral}, it suffices to provide a $\gamma \in \Gamma_+(X \times X)$ such that $\gamma_1 - \gamma_2 \succeq_{cvx} \mu$, $\int u_Z d(\gamma_1 - \gamma_2) = \int u_Z d\mu$, and $u_Z(x)  - u_Z(y) = \Lone{x-y}$ holds $\gamma$-almost surely.
The $\gamma$ we construct will never transport mass between regions. That is, $\gamma = \gamma_Z + \gamma_\mathcal{W} + \gamma_{\mathcal{A}} + \gamma_{\mathcal{B}}$ where\footnote{We chose this notation for simplicity, where $\gamma_Z \in \Gamma_+(Z \times Z)$, $\gamma_\mathcal{W} \in \Gamma_+(\mathcal{W} \times \mathcal{W})$, and so on.}
\begin{itemize}
	\item $\gamma_Z = 0$. We notice that $(\gamma_Z)_1 - (\gamma_Z)_2 = 0 \succeq_{cvx} \mu|_Z$.
	\item $\gamma_\mathcal{W}$ is constructed such that $(\gamma_\mathcal{W})_1-(\gamma_\mathcal{W} )_2 \succeq_{cvx} \mu|_\mathcal{W}$ and the component-wise inequality $x \geq y$ holds $\gamma_\mathcal{W}(x,y)$ almost surely.\footnote{As in Example~\ref{unifexample} and as discussed in Remark~\ref{geometricremark}, we aim for $\gamma_\mathcal{W}$ to transport ``downwards and leftwards'' since both items are allocated with probability 1 in $\mathcal{W}$.} As in our proof of Theorem~\ref{bundlingtheorem}, the existence of such a $\gamma_\mathcal{W}$ is guaranteed by Strassen's theorem  for second order dominance (presented in the appendix).

	\item {$\gamma_\mathcal{A} \in \Gamma_+(\mathcal{A} \times \mathcal{A})$ will be constructed to have respective marginals $\mu_+|_\mathcal{A}$ and $\mu_-|_\mathcal{A}$, and so that, $\gamma_\mathcal{A}(x,y)$ almost surely, it holds that $x_1 = y_1$ and $x_2 \geq y_2$. Thus, $(\gamma_\mathcal{A})_1 - (\gamma_\mathcal{A})_2 = \mu|_\mathcal{A}$, and $\gamma_\mathcal{A}$ sends positive mass ``downwards.''\footnote{{Once again, the intuition for this construction follows Remark~\ref{geometricremark}.}} We claim that such a map can indeed be constructed, by noticing that Property~\ref{canonicalcondition2} of Definition~\ref{wellformeddef} guarantees that, restricted to any vertical strip inside ${\cal A}$, $\mu_+$ first-order stochastically dominates $\mu_-$.\footnote{Indeed, as $\epsilon \rightarrow 0$, 
  Property~\ref{canonicalcondition2} states exactly the one-dimensional equivalent condition for first-order stochastic dominance in terms of cumulative density functions.} Hence, Strassen's theorem for first-order dominance guarantees that restricted to that strip $\mu_+$ can be coupled with $\mu_-$ so that, with probability $1$, mass is only moved downwards.}
	
Measure $\gamma_{\cal A}$ satisfies $x_1 = y_1$, $\gamma_\mathcal{A}(x,y)$ almost surely,  and hence also
$$u_Z(x) - u_Z(y) = (x_2 - s(x_1)) - (y_2 - s(y_1)) = x_2 - y_2 = \Lone{x-y}.$$
	
	\item $\gamma_\mathcal{B} \in \Gamma_+(\mathcal{B} \times \mathcal{B})$ is constructed analogously to $\gamma_\mathcal{A}$, except sending mass ``leftwards.'' That is, $\gamma_\mathcal{B}(x,y)$ almost-surely, the relationships $x_1 \geq y_1$ and $x_2 = y_2$ hold.
\end{itemize}
It follows by our construction that $\gamma =  \gamma_Z + \gamma_\mathcal{W} + \gamma_{\mathcal{A}} + \gamma_{\mathcal{B}}$ satisfies all necessary properties to certify optimality of $u_Z$.
\end{proof}
 
\section{Applying Theorem~\ref{canonicalpartitiontheorem} to find optimal mechanisms}
\label{sec:further examples}

In this section, we provide example applications of Theorem~\ref{canonicalpartitiontheorem}. A technical difficulty is verifying the stochastic dominance relation  $\mu|_\mathcal{W} \succeq_2 0$ required to apply the theorem. {In our examples, we will have the stronger condition $\mu|_\mathcal{W} \succeq_1 0$, which is easier to verify, yet still imposes technical difficulties.}  In Section~\ref{verifyingfirstorder}, we present a useful tool, Lemma~\ref{regionthm}, for verifying first-order stochastic dominance. In Section~\ref{weakexamples} we then provide example applications of Theorem~\ref{canonicalpartitiontheorem} and Lemma~\ref{regionthm} to solve for optimal mechanisms.

\subsection{Verifying First-Order Stochastic Dominance }\label{verifyingfirstorder}

A useful tool for verifying first order dominance between measures is the following.\footnote{The lemma also appeared as Theorem 7.4 of \cite{DaskalakisDT13} without a proof. We provide a detailed proof in the online appendix.}

\begin{lemma}\label{regionthm}
Let $\mathcal{C} = [p_1, q_1) \times [p_2, q_2)$ where $q_1$ and $q_2$ are possibly infinite and let $R$ be a decreasing nonempty subset of $\mathcal{C}$. Consider two measures $\kappa, \lambda \in \Gamma_+(\mathcal{C})$ 
 with bounded integrable density functions $g, h: \mathcal{C} \rightarrow \mathbb{R}_{\geq 0}$ respectively that satisfy the conditions:
\begin{itemize}
	\item $g(x,y) = h(x,y) = 0$ for all $(x,y) \in R$.
	\item $\int_{\mathcal{C}}g(x,y)dxdy = \int_{\mathcal{C}}h(x,y)dxdy$.
	\item For any basis vector $e_i \in \{ e_1 \equiv (1,0), e_2 \equiv (0,1) \}$ and any point $z \in R$:
	$$\int_0^{q_i - z_i} g(z + \tau e_i) - h(z + \tau e_i)d\tau \leq 0.$$
	\item There exist non-negative functions $\alpha: [p_1, q_1) \rightarrow \mathbb{R}_{\geq 0}$ and $\beta : [p_2, q_2) \rightarrow \mathbb{R}_{\geq 0}$, and an increasing function $\eta : \mathcal{C} \rightarrow \mathbb{R}$ such that for all $(x,y) \in \mathcal{C} \setminus R$:
	$$g(x,y) - h(x,y) = \alpha(x)\cdot \beta(y) \cdot \eta(x,y)$$
\end{itemize}
Then $\kappa \succeq_1 \lambda$.
\end{lemma}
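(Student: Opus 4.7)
The plan is to reduce the first-order dominance $\kappa \succeq_1 \lambda$ to the equivalent statement that $\int_U (g-h)\,dx\,dy \geq 0$ for every measurable upper set $U \subseteq \mathcal{C}$ (via the standard equivalence between first-order stochastic dominance and integration against indicators of upper sets), and then verify that inequality by combining the product factorization with the boundary conditions on $R$. Substituting $g-h = \alpha(x)\beta(y)\eta(x,y)\mathbbm{1}_{\mathcal{C}\setminus R}$, the target becomes
\[
\int_{U\setminus R}\alpha(x)\beta(y)\eta(x,y)\,dx\,dy \;\geq\; 0.
\]

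To isolate the core mechanism, I would first consider the simpler situation in which $R$ is contained in a set of two-dimensional Lebesgue measure zero. Then $d\sigma := \alpha(x)\beta(y)\,dx\,dy$ is a product measure on $\mathcal{C}$ modulo a null set, and since $\eta$ and $\mathbbm{1}_U$ are both coordinate-wise non-decreasing, the FKG (Chebyshev-type) inequality for product measures yields $\sigma(\mathcal{C})\int\eta\,\mathbbm{1}_U\,d\sigma \geq \int\eta\,d\sigma\cdot\sigma(U)$. The equal-mass hypothesis forces $\int\eta\,d\sigma = \int(g-h)\,dx\,dy = 0$, whence $\int_U\eta\,d\sigma \geq 0$ immediately. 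When $R$ carries positive two-dimensional mass, the measure $\mathbbm{1}_{\mathcal{C}\setminus R}\alpha\beta\,dx\,dy$ ceases to be a product and FKG no longer suffices; the boundary conditions on $R$ must then enter.

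For general $R$ my plan is to slice vertically. Write $U = \{(x,y) : y \geq y_0(x)\}$ for a non-increasing $y_0$, let $\phi_2(x) := \sup\{y : (x,y) \in R\}$ (well-defined and non-increasing on an initial interval $[p_1,x_0]$ on which the vertical slice meets $R$, and set to $p_2$ otherwise), and put $G(x,a) := \int_a^{q_2}\beta(y)\eta(x,y)\,dy$. Fubini then gives
\[
\int_U(g-h)\,dx\,dy \;=\; \int_{p_1}^{q_1} \alpha(x)\,G\bigl(x,\max(y_0(x),\phi_2(x))\bigr)\,dx.
\]
Because $\eta(x,\cdot)$ is non-decreasing and $\beta \geq 0$, the map $a \mapsto G(x,a)$ is first non-decreasing and then non-increasing on $[p_2,q_2]$ with $G(x,q_2) = 0$; combined with the vertical boundary condition $\alpha(x)G(x,\phi_2(x)) \leq 0$, valid for $x \leq x_0$, this shape forces $G(x,a) \geq G(x,\phi_2(x))$ for every $a \geq \phi_2(x)$, yielding a favorable lower bound on the integrand on $[p_1,x_0]$. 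The main obstacle is the complementary range $x \in (x_0,q_1]$, where the vertical slice misses $R$ and the vertical boundary condition supplies no information. My plan there is to derive, by the symmetric horizontal slicing (using $\phi_1(y) := \sup\{x : (x,y) \in R\}$ and the horizontal boundary condition), an analogous lower bound, and then to reconcile the two partial estimates on their region of overlap using the global identity $\int(g-h)\,dx\,dy = 0$, so that any residual piece of $\mathcal{C}$ on which neither one-dimensional condition directly applies is absorbed. This interlocking of the vertical and horizontal slicings via the equal-mass identity is the technical heart of the proof.
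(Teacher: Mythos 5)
Your reduction to upper sets is the same first move as the paper (Claim~\ref{increasingsetssuffice}), and your unimodality observation for $a \mapsto G(x,a)$ is essentially the single-crossing structure the paper exploits in Claims~\ref{claim11} and~\ref{signflip}; those parts are sound, as is the FKG argument in the special case where $R$ is Lebesgue-null (modulo an unaddressed integrability point: the hypotheses bound $\int \alpha\beta|\eta|$, not $\int\alpha\beta$, so the product measure you normalize for FKG need not be finite). The problem is that the step you yourself call ``the technical heart'' is never carried out, and it is precisely where the lemma is hard. Your vertical estimate applies only to columns $x \le x_0$ that meet $R$, and your horizontal estimate only to rows that meet $R$; these two families overlap on the region lying both above and to the right of $R$ (so adding the estimates double-counts signed mass there), and they leave completely uncontrolled the ``far corner'' $\{x > x_0,\ y > y_R\}$ where neither boundary condition provides an anchor for the single-crossing argument. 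On a slice that does not meet $R$, unimodality of the tail integral by itself gives no sign information, so some inequality must be imported from elsewhere; asserting that the residual region ``is absorbed'' by the identity $\int_{\mathcal C}(g-h)=0$ is a hope, not an argument, because that identity is a single scalar constraint while the deficit depends on how $U$ cuts through the overlap and the far corner.

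By contrast, the paper resolves exactly this interaction by a different mechanism: it first reduces (Lemma~\ref{finiteunions}, via a discretization of increasing sets) to sets that are finite unions of bases $B_{z^{(i)}}$, and then inducts on the number of corners, peeling off one corner at a time using the dichotomy of Claim~\ref{signflip} — in one case the extra strip has nonnegative mass and is discarded, in the other the corner can be lowered and the inductive hypothesis applied — with the total-mass-zero condition entering only in the base case after two successive single-crossing steps (vertical, then horizontal). Some inductive or iterative bookkeeping of this kind appears unavoidable; if you want to salvage your slicing framework you would need to make the ``interlocking'' precise, e.g.\ by showing how an anchor inequality propagates to slices missing $R$ from neighboring controlled slices, which is in effect reconstructing the paper's induction. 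As it stands, the proposal is an incomplete proof with the decisive step missing.
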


Lemma~\ref{regionthm} provides a sufficient condition for a measure to stochastically dominate another in the first order. Its proof is given in the online appendix and is an application of a claim which states that an equivalent condition for first-order stochastic dominance is that one measure has more mass than the other on all sets that are  unions of \textit{finitely many} ``increasing boxes.'' When the conditions of Lemma~\ref{regionthm} are satisfied, we can induct on the number of boxes by removing one box at a time. {We note that Lemma~\ref{regionthm} is applicable even to distributions with unbounded support.}

\begin{mdframed}[style=boxed]
{\bf Interpreting the Conditions of Lemma~\ref{regionthm}:} 
Lemma~\ref{regionthm} is applicable whenever two {density functions,} $g$ and $h$, are nonzero  on some set $\mathcal{C} \setminus R$, where $R$ is a decreasing subset of some two-dimensional box $\mathcal{C}$. This setting is motivated by Figure~\ref{canonicalfig} and Theorem~\ref{canonicalpartitiontheorem}. Recall that, in order to apply Theorem~\ref{canonicalpartitiontheorem}, we need to check a second order stochastic dominance condition in region $\cal W$, namely $\mu|_\mathcal{W} \succeq_2 0$. 

While Theorem~\ref{canonicalpartitiontheorem} demands checking a second order stochastic dominance condition, an easier and sufficient goal is to check first order stochastic dominance, namely $\mu|_\mathcal{W} \succeq_1 0$. To do this, we can readily use Lemma~\ref{regionthm}, by taking $\mathcal{C} = [x_{\textrm{crit}}, \infty) \times [y_{\textrm{crit}}, \infty)$, $R = \mathcal{C} \cap Z$, and $g, h$ the densities corresponding to measures $\mu_+|_\mathcal{W}$ and $\mu_-|_\mathcal{W}$. The way region $\mathcal{W}$ is defined in Theorem~\ref{canonicalpartitiontheorem} guarantees that the two measures have equal mass, so the first two conditions of the lemma will be satisfied automatically. For the third condition, we need to verify that, if we integrate $g-h$ along either a vertical or a horizontal line outwards starting from any point in $R$, the result is \emph{non-positive}. 
The last condition of Lemma~\ref{regionthm} requires that the density function of the measure $\mu|_{\cal W}$, i.e. $g-h$, have an appropriate form. If the  values of the buyer for the two items are independently distributed according to distributions with densities $f_1$ and $f_2$, then the density of measure $\mu$ in the interior according to Equation~\ref{transformed} can be written as $- f_1(x) f_2(y) \left( \frac {f'_1(x) x} {f_1(x)} + \frac {f'_2(y) y} {f_2(y)} + 3\right)$. The last condition of the lemma is thus satisfied if the functions $\frac {f'_1(x) x} {f_1(x)}$ and $\frac {f'_2(y) y} {f_2(y)}$ are decreasing, a condition that is easy to verify.
\end{mdframed}

\subsection{Examples }\label{weakexamples}

We apply Theorem~\ref{canonicalpartitiontheorem} to obtain optimal mechanisms in several two-item settings. In Section~\ref{sec:beta}, we consider two independent items distributed according to beta distributions. We find the optimal mechanism, showing that it actually offers an uncountably infinite menu of lotteries. 
We conclude with Section~\ref{sec:examples unbounded support} where we discuss extensions of Theorem~\ref{canonicalpartitiontheorem} to distributions with infinite support, providing the optimal mechanism for two arbitrary independent exponential items, as well as the optimal mechanism for an instance with two independent power-law items.

\subsubsection{An Optimal Mechanism with Infinite Menu Size: Two Beta Items}\label{sec:beta}

In this section, we will use Theorem~\ref{canonicalpartitiontheorem} to calculate the optimal mechanism for two items distributed according to Beta distributions. In doing so we illustrate a general approach for finding closed-form descriptions of optimal mechanisms  via the following steps:
{\tt (i)} definition of the sets $S_{\rm top}$ and $S_{\rm right}$, {\tt (ii)} computation of a critical price $p^*$, {\tt (iii)} definition of a canonical partition in terms of {\tt (i)} and {\tt (ii)}, and {\tt (iv)} application of Theorem~\ref{canonicalpartitiontheorem}. Our approach succeeds in pinning down optimal mechanisms in all examples considered in Sections~\ref{sec:beta}---\ref{sec:examples unbounded support}, and we expect it to be broadly applicable. Finally, it is noteworthy that the optimal mechanism for the setting studied in this section offers the buyer a menu of uncountably infinitely many lotteries to choose from. Using our approach we can nevertheless compute and succinctly describe the optimal mechanism. We also note in Remark~\ref{remark:uniqueness of Beta mechanism} that our identified mechanism is essentially unique, hence the uncountability of the menu is inevitable.

Consider two items whose values are distributed independently according to the distributions $\textrm{Beta}(a_1,b_1)$ and $\textrm{Beta}(a_2,b_2)$, respectively. That is, the distributions are given by to the following two density functions on $[0,1]$:
$$
f_1(x) = \frac{1}{B(a_1,b_1)} x^{a_1-1}(1-x)^{b_1-1}; \qquad
f_2(y) = \frac{1}{B(a_2,b_2)}y^{a_2-1}(1-y)^{b_2-1}.
$$
To find the optimal mechanism for our example setting, we first compute the measure $\mu$ induced by $f$. Notice that
\begin{align*}
-\nabla f(x,y) \cdot (x,y) &- 3  f(x,y) = -xf_2(y)\frac{ \partial f_1(x)}{\partial x} - yf_1(x)\frac{ \partial f_2(y)}{\partial y}  -3f_1(x)f_2(y)\\
&= -(a_1-1)f_1(x)f_2(y) + (b_1-1)\frac{x}{1-x}f_1(x)f_2(y)\\
&\quad -(a_b-1)f_1(x)f_2(y) + (b_2-1)\frac{y}{1-y}f_1(x)f_2(y) - 3f_1(x)f_2(y)\\
&= f_1(x)f_2(y)\left(\frac{b_1-1}{1-x}+\frac{b_2-1}{1-y} +(1-a_1-b_1-a_2-b_2) \right)
\end{align*}
where the last equality used the identity $\frac{x}{1-x} = \frac{1}{1-x}-1$.
We also observe that $f_1(x)x = 0$ whenever $x=0$ or $x=1$ (as long as $b_1>1$), and an analogous property holds for $y$. Thus, the transformed measure $\mu$ is comprised of:
\begin{itemize}
	\item a point mass of +1 at the origin; and
	\item mass distributed on $[0,1]^2$ according to the density function
	$$f_1(x)f_2(y) \left(\frac{b_1-1}{1-x}+\frac{b_2-1}{1-y} +(1-a_1-b_1-a_2-b_2) \right).$$
\end{itemize}
Note that in the case $b_i=1$, our analysis still holds, except there is also positive mass on the boundary $x_i=1$.

\paragraph{Deriving the Optimal Mechanism for a Concrete Setting of Parameters.} 
We now analyze a concrete example of two independent Beta distributed items where $a_1=a_2=1$ and $b_1=b_2=2$. That is, we consider two items whose values are distributed independently according to the following two density functions on $[0,1]$:
$$
f_1(x) = 2 (1-x); \qquad
f_2(y) = 2 (1-y).
$$

As discussed above, the transformed measure $\mu$ comprises:
\begin{itemize}
	\item a point mass of +1 at the origin; and
	\item mass distributed on $[0,1]^2$ according to the density function
	$$f_1(x)f_2(y) \left(\frac{1}{1-x} + \frac{1}{1-y}  - 5 \right).$$
\end{itemize}
{Note that the density of $\mu$ is positive on $\mathcal{P} = \left\{ (x,y) \in (0,1)^2 :\frac{1}{1-x} + \frac{1}{1-y} > 5\right\} \cup \{ \vec{0}\}$ and non-positive on $ \mathcal{N} = \left\{ (x,y) \in [0,1)^2 \setminus \{\vec{0}\}:\frac{1}{1-x} + \frac{1}{1-y} \leq 5  \right\}$, and that $\mathcal{N} \cup \{\vec{0}\}$ is a decreasing set.}

\noindent \textbf{Step (i).} We first attempt to identify candidate functions for $s_1$ and $s_2$ that will lead to a well-formed canonical partition. We do this by defining two sets $S_{\textrm{top}},S_{\textrm{right}} \subset [0,1)^2$. We require that $(x,y) \in S_{\textrm{top}}$ iff $ \int_{y}^1 \mu(x,t) dt = 0.$
That is, starting from any point $z\in S_{\textrm{top}}$ and integrating the density of $\mu$
``upwards'' from $t=y$ to $t=1$ yields zero. Since ${\cal N} \cup \{\vec{0}\}$ is a decreasing set, it follows that $S_{\textrm{top}} \subset \mathcal{N}$ and that integrating $\mu$ upwards starting from any point above $S_{\rm top}$ yields a positive integral.
Similarly, we say that $(x,y) \in S_{\textrm{right}}$ iff
$ \int_{x}^1 \mu(t,y) dt = 0,$ noting that $S_{\textrm{right}} \subset {\cal N}$. $S_{\textrm{top}}$ and $S_{\textrm{right}}$ are shown in Figure~\ref{betafig}.

We analytically compute  that $(x, y) \in  S_{\textrm{top}}$ if and only if
$y = \frac{2 - 3 x}{4 - 5 x}$.
Similarly, $(x,y) \in S_{\textrm{right}}$ if and only if
$x = \frac{2 - 3 y}{4 - 5 y}.$ 

In particular, for any $x \le 2/3$ there exists a $y$ such that $(x,y) \in   S_{\textrm{top}}$, and there does not exist such a $y$ if $x > 2/3$. Furthermore, it is easy to verify by computing the second derivative of $\frac {\partial^2}{\partial x^2} \frac{2 - 3 x}{4 - 5 x} = -\frac{20}{(4 - 5 x)^3} < 0$ that the region below $S_{top}$ and the region below $S_{right}$ are strictly convex.

\medskip
\noindent \textbf{Step (ii).} We now need to calculate the critical point and the critical price. To do this we set the critical price $p^* \approx .5535$  as the intercept of the $45^\circ$ line in Figure~\ref{betafig} which causes $\mu(Z) = 0$ for the set $Z \subset [0,1]^2$ lying below $S_{\rm top}$, $S_{\rm right}$ and the $45^\circ$ line. 
We can also compute the critical point $(x_{\textrm{crit}},y_{\textrm{crit}}) \approx (.0618,.0618)$ by finding the intersection of the critical price line with the sets $S_{\rm top}$ and $S_{\rm bottom}$. Moreover, by the definition of the sets $S_{\rm top}$ and $S_{\rm bottom}$, we know that the candidate boundary functions are $s_1(x) =\frac{2 - 3 x}{4 - 5 x}$ and $s_2(y) = \frac{2 - 3 y}{4 - 5 y}$, with domain $[0,x_{\textrm{crit}})$ and $[0,y_{\textrm{crit}})$ respectively.

\medskip
\noindent \textbf{Step (iii).}
We can now compute the canonical partition and decompose $[0,1]^2$ into the following regions:
$$\mathcal{A} =  \{(x,y): x \in [0,x_{\textrm{crit}}) \textrm{ and } y \in [s_1(x),1]\}; \mathcal{B} =  \{(x,y): y \in [0,y_{\textrm{crit}}) \textrm{ and } x \in [s_2(y),1]\}$$
$$\mathcal{W} =  \{(x,y) \in [x_{\textrm{crit}},1] \times [y_{\textrm{crit}},1]: x+y \ge p^*\}; \,\, Z =  [0,1]^2 \setminus \left( \mathcal{W} \cup \mathcal{A} \cup \mathcal{B}  \right)$$
as illustrated in Figure~\ref{betafig}.
\begin{figure}[!ht]
\centering
\begin{tikzpicture}
\begin{axis}[height=3.2in, width=3.2in, ymin=0, ymax=1, xmin=0, xmax=1,
  xtick pos=left, xtick={0.5,0.6666666}, xticklabels={$1/2$, $2/3$}, ytick pos=left, ytick={0.5,0.6666666666}, yticklabels={$1/2$,$2/3$}]
  
\addplot+[color=black, mark=none, style=densely dashed, domain=0:0.66666666666] {(2-3*x)/(4-5*x)};
\addplot+[color=black, mark=none, style=dotted, domain=0:0.5] {(2-4*x)/(3-5*x)};
\addplot[color=black, fill=gray!50, mark=none, samples=200, domain=0:0.5] 
{min( (2-3*x)/(4-5*x), (2-4*x)/(3-5*x), 0.5534938-x)}\closedcycle;

\legend{$S_{\textrm{top}}$, $S_{\textrm{right}}$};

\addplot[color=gray, mark=none]coordinates{
(.5534938-0.06187679,0.06187679)
(1,0.06187679)
};
\addplot[color=gray, mark=none]coordinates{
(0.06187679,.5534938-0.06187679)
(0.06187679,1)
};
\node at (axis cs:0.03,0.85){$\mathcal{A}$};
\node at (axis cs:0.85,0.03){$\mathcal{B}$};
\node at (axis cs:0.6,0.65){$\mathcal{W}$};
\node at (axis cs:0.2,0.2){$Z$};
\end{axis}
\end{tikzpicture}
\caption{The well-formed canonical partition for $ f_1(x) = 2 (1-x)$ and $f_2(y) = 2 (1-y)$.}\label{betafig}
\end{figure}
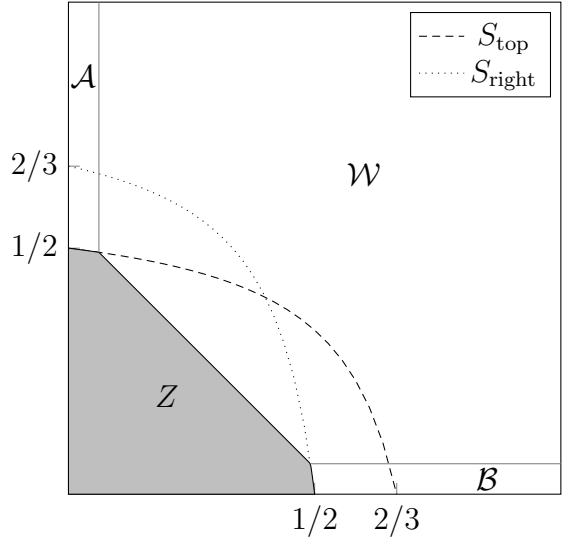

\noindent \textbf{Step (iv).}
We claim that the canonical partition $Z \cup \mathcal{A} \cup \mathcal{B} \cup \mathcal{W}$ is well-formed with respect to $\mu$. Condition~\ref{canonicalcondition2} is satisfied by construction of $S_{\rm top}$ and $S_{\rm right}$ and the corresponding discussion in Step (i). To check for Condition~\ref{canonicalcondition1}, note that given the definition of $p^*$, it holds that for all regions $R = Z, \mathcal{A}, \mathcal{B} \textrm{ and } \mathcal{W}$, we have $\mu(R) = 0$. Recall that $S_{\textrm{top}}, S_{\textrm{right}} \subset \mathcal{N}$ and, since ${\cal N} \cup \{\vec{0}\}$ is a decreasing set, $\mu$ has negative density along these curves and all points below either curve, other than at the origin. Hence, $\mu_-|_Z \succeq_{1} \mu_+|_Z$ which implies that $\mu|_Z \preceq_{cvx} 0$.  Hence, the only non-trivial condition of Definition~\ref{wellformeddef}  that we need to verify is  $\mu|_\mathcal{W} \succeq_2 0$. In fact, we can apply Lemma~\ref{regionthm}  to conclude the stronger dominance relation $\mu|_\mathcal{W} \succeq_1 0$. See the online appendix. Having verified all conditions of Definition~\ref{wellformeddef} we apply Theorem~\ref{canonicalpartitiontheorem} to conclude the following.

\begin{example}\label{betaexample}
The optimal mechanism for selling two independent items whose values are distributed according to $ f_1(x) = 2 (1-x)$ and  $f_2(y) = 2 (1-y)$  has the following outcome for a buyer of type $(x,y)$:
\begin{itemize}
	\item If $(x,y) \in Z$, the buyer receives no goods and is charged 0.
	\item If $(x,y) \in \mathcal{A}$, the buyer receives item 1 with probability $-s'_1(x) = \frac{2}{(4-5x)^2}$, item 2 with probability 1, and is charged $s_1(x) - xs'_1(x) = \frac{2 - 3 x}{4 - 5 x} + \frac{2 x}{(4-5x)^2}$.
	\item If $(x,y) \in \mathcal{B}$, the buyer receives item 2 with probability $-s'_2(y) = \frac{2}{(4-5y)^2}$, item 1 with probability 1, and is charged $s_2(y) - ys'_2(y) = \frac{2 - 3 y}{4 - 5 y} + \frac{2 y}{(4-5y)^2}$.
	\item If $(x,y) \in \mathcal{W}$, the buyer receives both items and is charged $p^* \approx .5535$.
\end{itemize}
\end{example}

\begin{remark} \label{remark:uniqueness of Beta mechanism} Note that the mechanism identified in Example~\ref{betaexample} offers an uncountably large menu of lotteries. One could wonder whether there exists a different optimal mechanism offering a finite menu. Using our duality theorem we can easily argue that the utility function induced by every optimal mechanism equals the utility function $u(x)$ induced by our mechanism in Example~\ref{betaexample}. Hence, up to the choice of subgradients at the measure-zero set of types where $\nabla u(x)$ is discontinuous, the allocations offered by any optimal mechanism must agree with those of our mechanism in Example~\ref{betaexample}. Therefore, every optimal mechanism must offer an uncountably large menu. The proof of uniqueness is given in the online appendix.

\end{remark}

\paragraph{Summary of Beta Distributions.}
Example~\ref{betaexample} shows that the optimal mechanism for two Beta distributed items offers a continuum of lotteries, thereby having infinite menu-size complexity \cite{HartN13}. Still, using our techniques we can obtain a succinct and easily-computable description of the mechanism.

Working similarly to Example~\ref{betaexample}, we can obtain the optimal mechanism for broader settings of parameters. Figure~\ref{fig:Betas} illustrates the optimal mechanism for two items distributed according to Beta distributions with different parameters. The reader can experiment with different settings of parameters at~\cite{BetaLink}.

\begin{figure}[htbp]
	\centering
		\includegraphics[width=0.8\textwidth]{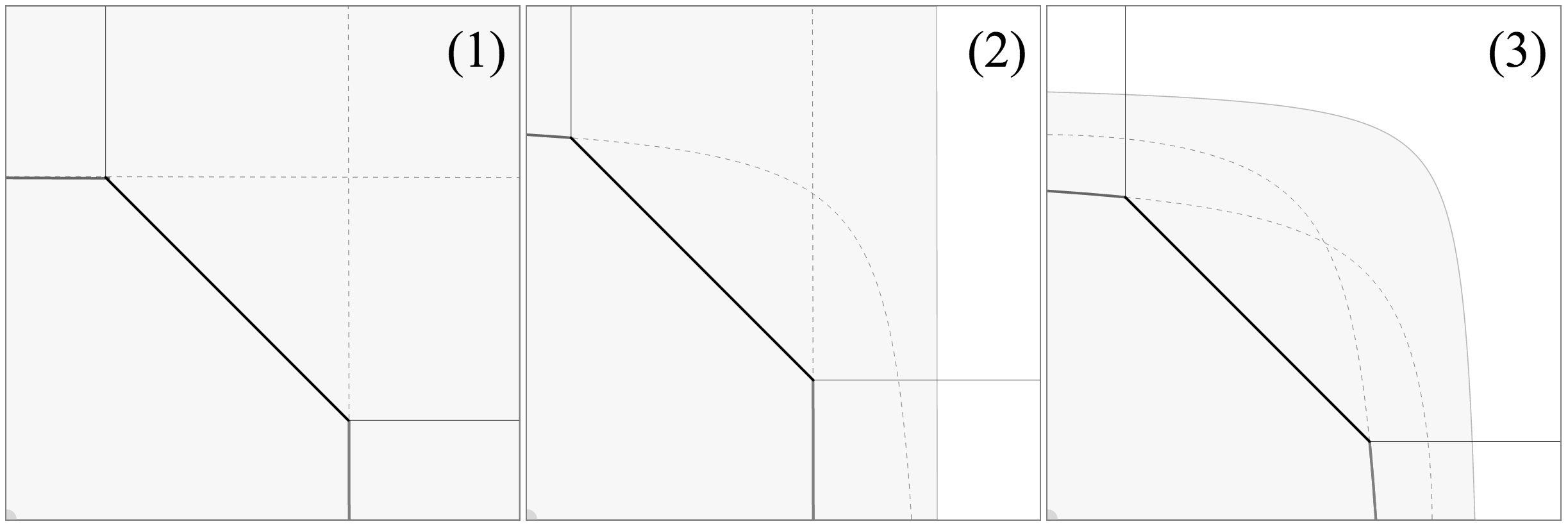}

    \caption{Canonical Partitions for different cases of Beta distributions. The shaded region is where the measure $\mu$ becomes negative. (Note that when the second parameter $b_i$ of the Beta distribution of some item $i$ equals $1$, $\mu$ has positive mass on the outer boundary $x_i=1$.) (1) Beta(1,1) and Beta(1,1), (2) Beta(2,2) and Beta(1,1), (3) Beta(2,2) and Beta(2,2).}
	\label{fig:Betas}
\end{figure}

\subsubsection{Distributions of Unbounded Support: Exponential and Power-Law} \label{sec:examples unbounded support}

So far, this paper has focused on type distributions with bounded support. In this section, we note that Theorem~\ref{setupclaim}, Lemma~\ref{weakduality}, and Theorem~\ref{canonicalpartitiontheorem} can be easily modified to accommodate settings with unbounded type spaces, as long as the type distribution decays sufficiently rapidly towards infinity. On the other hand, we do not know extensions of our strong duality theorem (Theorem~\ref{strongduality}), and the optimal menu conditions (Theorem~\ref{bundlingtheorem}) for unbounded type distributions, due to technical issues.

In the online appendix, we provide a short discussion of the  modifications required to obtain an analog of Theorem~\ref{canonicalpartitiontheorem} for unbounded distributions that are sufficiently fast-decaying, and present below two example settings that can be analyzed using the modified characterization theorem. Both examples are taken from~\cite{DaskalakisDT13}.

In Example~\ref{powerexample}, the optimal mechanism for selling two power-law items is a grand bundling mechanism. The canonical partition induced by the exclusion set of the grand-bundling mechanism is degenerate (regions $\mathcal{A}$ and $\mathcal{B}$ are empty), and establishing the optimality of the mechanism amounts to establishing that the first-order stochastic dominance condition for the induced measure $\mu$ holds in region $\cal W$.

\begin{example}\label{powerexample}
The optimal IC and IR mechanism for selling two items whose values are distributed independently according to the probability densities $f_1(x) = 5/(1+x)^6$ and $f_2(y) = 6/(1+y)^7$ respectively is a take-it-or-leave-it offer of the bundle of the two goods for price $p^* \approx .35725$.
\end{example}

Example~\ref{twoitems} provides a complete solution for the optimal mechanism for two items distributed according to independent exponential distributions. In this case, the canonical partition induced by the exclusion set of the mechanism is missing region $\cal A$, and possibly region $\cal B$ (if $\lambda_1=\lambda_2$).

\begin{example}\label{twoitems}
For all $\lambda_1 \geq \lambda_2 > 0$, the optimal IC and IR mechanism for selling two items  whose values are distributed independently according to exponential distributions $f_1$ and $f_2$ with respective parameters $\lambda_1$ and $\lambda_2$ offers the following menu:
\begin{enumerate}
\item receive nothing, and pay 0;
\item receive the first item with probability 1 and the second item with probability $\lambda_2/\lambda_1$, and pay $2/\lambda_1$; and
\item receive both items, and pay $p^*$;
\end{enumerate}
where $p^*$ is the unique $0<p^* \leq 2/\lambda_2$ such that
$$\mu(\left\{ (x,y) \in \mathbb{R}^2_{\geq 0} :  x+y \leq p^* \textrm{ and }  \lambda_1 x+\lambda_2 y \leq 2 \right\}) = 0,$$
where $\mu$ is the transformed measure of the joint distribution. 
\end{example}

\begin{figure}[!ht]
\centering
\begin{tikzpicture}
\begin{axis}[width=3.1in, height=3.1in, ymin=0, ymax=4, xmin=0, xmax=4, xtick pos=left, xtick={.66666,1,4}, xticklabels={$\frac{2}{\lambda_1}$, $\frac{3}{\lambda_1}$,$x$}, ytick pos=left, ytick={2,3, 1.4,4}, yticklabels={$\frac{2}{\lambda_2}$,$\frac{3}{\lambda_2}$, $p^*$,$y$}]
\addplot+[color=black,  fill=gray!20, mark=none, domain=0:1.5, samples=2]
{3 - 3*x}
\closedcycle;
 \addplot+[color=gray, fill=gray, domain=.3:1.5, mark=none]
 {2-3*x}
 \closedcycle;
  \addplot+[color=gray, fill=gray, domain=0:.3,mark=none]
 {1.4-x}
 \closedcycle;
\addplot+[color=black, domain=0:1,samples=2,  mark=none, dotted]
 {2 - 3*x} 
 \closedcycle;
\addplot[color=black, mark=none] coordinates{
(.3,1.1)
(6,1.1)}; 
\node at (axis cs:.2,.5) {$Z_{p^*}$};
\node (lab1) at (axis cs:1.2,.9){$\mathcal{B} \cap \mathcal{N}$};
\node (dest) at (axis cs:.6,.4){};
\draw[->] (axis cs: 1.2,.75)--(dest);
\node at (axis cs: 2.3,.5){$\mathcal{B} \cap \mathcal{P}$};
\node at (axis cs:2.5,2.8){$\mathcal{W} \cap \mathcal{P}$};
\node at (axis cs:1,2.2){$\mathcal{W} \cap \mathcal{N}$};
\node (lab2) at (axis cs:1.08,2.1){};
\node (dest2) at (axis cs:0,1.35){};
\draw[->] (lab2)--(dest2);
\node (dest3) at (axis cs:.3, 1.3){};
\node (lab3) at (axis cs:1.08,2.13){};
\draw[->] (lab3)--(dest3);
\node (dest4) at (axis cs: .017, 1.68){};
\node (lab4) at (axis cs: 1,3.2){};
\end{axis}
\end{tikzpicture} 
\caption{The canonical partition of $\mathbb{R}^n_{\geq 0}$ for the proof of Example~\ref{twoitems}. In this diagram, $p^* > 2/\lambda_1$. If $p^* \leq 2/\lambda_1$, $\mathcal{B}$ is empty. The positive part $\mu_+$ of $\mu$ is supported inside $\mathcal{P} \cap \{\vec{0}\}$ while the negative part $\mu_-$ is supported within $Z_{p^*} \cup \mathcal{N}$.}\label{fig: exp items first}
\end{figure}
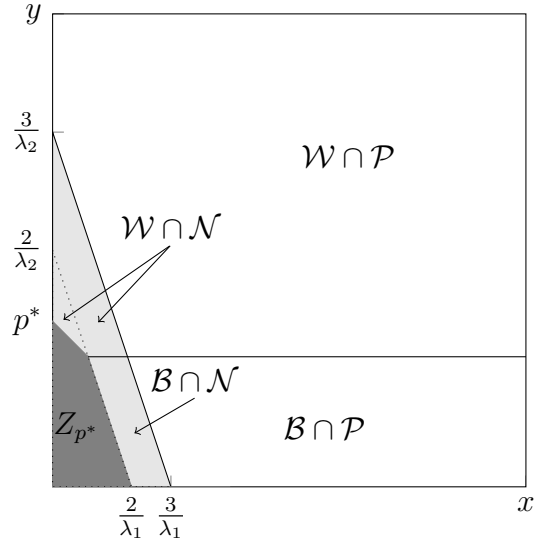

\section{Conclusions}
We provided a duality-based framework for revenue maximization in a multiple-good monopoly. Our framework shows that every optimal mechanism has a certificate of optimality, taking the form of an optimal transportation map between measures. Using this framework, we characterized optimal mechanisms, showing that a mechanism is optimal if and only if certain stochastic dominance conditions are satisfied by a measure induced by the buyer's type distribution. This measure expresses  the marginal change in the seller's revenue under marginal changes in the rent paid to subsets of buyer types.

We also provided several tools for checking the pertinent stochastic dominance conditions in two dimensions. These tools were useful in establishing the optimality of mechanisms in a multitude of two-item examples that we studied. While our characterization holds for an arbitrary number of items, verifying stochastic dominance in higher dimensions becomes significantly harder. An interesting future direction is to develop tools for checking stochastic dominance  in higher dimensions. This will be useful for establishing optimality of mechanisms for three and more items.

Another important research direction is to obtain conditions for the type distribution under which the optimal mechanism has a simple closed-form description. For example, are there broad conditions implying that grand bundling is optimal or that the optimal mechanism takes the form of the mechanisms in Theorem~\ref{canonicalpartitiontheorem}?

Finally a major open problem is to extend our results to multiple bidders. Even for the presumably simple setting of two bidders with independent and identical values for two items that are uniformly distributed in $[0,1]$, the revenue-optimal mechanism is unknown.

\small
\bibliographystyle{alpha}
\bibliography{econometricabib}
\normalsize

\newpage
\appendix
\section*{\centering Strong Duality for a Multiple-Good Monopolist}
\subsection*{\centering Online Appendix}

\section{Strong Mechanism Design Duality - Proof of Theorem~\ref{strongduality}}\label{proofsection}

In this section, we give a formal proof of the strong mechanism duality theorem. To carefully prove the statement, we specify that the proof is for Radon measures. A Radon measure is a locally-finite inner-regular Borel measure. We use $\Gamma(X) = Radon(X)$ (resp. $\Gamma_+(X) = Radon_+(X)$) as the set of signed (resp. unsigned) Radon measures on $X$. The transformed measure of a distribution is always a signed Radon measure as it defines a bounded linear functional on the utility function $u$.\footnote{More formally, this follows from Riesz representation theorem} 

\subsection{A Strong Duality Lemma}

The overall structure of our proof of Theorem~\ref{strongduality} is roughly parallel to the proof of Monge-Kantorovich duality presented in \cite{Villani}, although the technical aspects of our proof are different, mainly due to the added convexity constraint on $u$.
We begin by stating the Legendre-Fenchel transformation and the Fenchel-Rockafellar duality theorem.

\begin{definition}[Legendre-Fenchel Transform]
Let $E$ be a normed vector space and let $\Lambda: E \rightarrow \mathbb{R} \cup \{+\infty\}$ be a convex function. The {\em Legendre-Fenchel transform of $\Lambda$}, denoted $\Lambda^*$, is a map from the topological dual $E^*$ of $E$ to $\mathbb{R} \cup \{\infty\}$ given by
$$\Lambda^*(z^*) = \sup_{z \in E}\left( \langle z^*, z \rangle - \Lambda(z) \right).$$
\end{definition}

\begin{claim}[Fenchel-Rockafellar duality]\label{thm:frdual}
Let $E$ be a normed vector space, $E^*$ its topological dual, and $\Theta, \Xi$ two convex functions on $E$ taking values in $\mathbb{R} \cup \{ + \infty \}$. Let $\Theta^*, \Xi^*$ be the Legendre-Fenchel transforms of $\Theta$ and $\Xi$ respectively. Assume that there exists $z_0 \in E$ such that $\Theta(z_0) < + \infty$, $\Xi(z_0) < + \infty$ and $\Theta$ is continuous at $z_0$. Then
$$\inf_{z \in E} [\Theta(z) + \Xi(z)] = \max_{z^* \in E^*} [-\Theta^*(-z^*)-\Xi^*(z^*)].$$ 
\end{claim}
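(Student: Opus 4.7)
The plan is to prove the two inequalities separately. The easy direction, $\inf_{z}[\Theta(z)+\Xi(z)] \geq \sup_{z^*}[-\Theta^*(-z^*)-\Xi^*(z^*)]$, falls out of the definition of the Legendre--Fenchel transform: for any $z \in E$ and $z^* \in E^*$, the inequalities $\Theta(z) \geq \langle -z^*,z\rangle - \Theta^*(-z^*)$ and $\Xi(z) \geq \langle z^*,z\rangle - \Xi^*(z^*)$ hold trivially, and summing them gives $\Theta(z)+\Xi(z) \geq -\Theta^*(-z^*)-\Xi^*(z^*)$. So the whole content lies in proving the reverse inequality and showing that the supremum on the right is actually attained.

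For the strong duality direction, my plan is to apply a Hahn--Banach separation argument in the product space $E \times \mathbb{R}$. Let $\alpha \triangleq \inf_{z}[\Theta(z)+\Xi(z)]$ and define the two convex sets
\[
C_1 = \{(z,t) \in E\times\mathbb{R} : t > \Theta(z)\}, \qquad C_2 = \{(z,t) \in E\times \mathbb{R} : t \leq \alpha - \Xi(z)\}.
\]
By the definition of $\alpha$, these sets are disjoint: $(z,t) \in C_1 \cap C_2$ would give $\Theta(z)+\Xi(z) < \alpha$. Moreover, the continuity of $\Theta$ at $z_0$ combined with $\Theta(z_0) < \infty$ ensures that $C_1$ contains an open neighborhood of some point $(z_0, \Theta(z_0)+1)$, so $C_1$ has nonempty interior. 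The Hahn--Banach separation theorem then supplies a nonzero continuous linear functional on $E \times \mathbb{R}$, which necessarily has the form $(z,t) \mapsto \langle z^*,z\rangle + \lambda t$ for some $z^* \in E^*$ and $\lambda \in \mathbb{R}$, together with a constant $c$ so that $\langle z^*,z\rangle + \lambda t \geq c$ on $C_1$ and $\langle z^*,z\rangle + \lambda t \leq c$ on $C_2$.

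The main obstacle is ruling out a ``vertical'' separating hyperplane, i.e.\ ensuring $\lambda > 0$. Since $C_1$ is closed upward and $C_2$ closed downward in the $t$-coordinate, the separation forces $\lambda \geq 0$. To exclude $\lambda = 0$ I would use the continuity hypothesis: if $\lambda = 0$, then $\langle z^*,\cdot\rangle \geq c$ on a neighborhood of $z_0$ (from the interior point of $C_1$) while $\langle z^*, z_0\rangle \leq c$ (from $(z_0, \alpha - \Xi(z_0)) \in C_2$, which is nonempty since $\Xi(z_0) < \infty$); this forces $z^* = 0$ and then $c \leq 0 \leq c$, contradicting the fact that the separating functional is nonzero.

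Once $\lambda > 0$, I renormalize to $\lambda = 1$. The inequality on $C_1$ becomes $\Theta(z) \geq c - \langle z^*,z\rangle$ for all $z$, which is exactly $\Theta^*(-z^*) \leq -c$. The inequality on $C_2$ becomes $\alpha - \Xi(z) \leq c - \langle z^*, z\rangle$, i.e.\ $\Xi^*(z^*) \leq c - \alpha$. Adding these yields $-\Theta^*(-z^*) - \Xi^*(z^*) \geq \alpha$, matching the weak duality bound and exhibiting an explicit $z^*$ that attains the supremum, so the right-hand side is in fact a maximum. Combined with the weak duality inequality from the first paragraph, this completes the proof.
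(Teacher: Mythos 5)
The paper never actually proves this claim: it is stated as known background (the classical Fenchel--Rockafellar duality theorem, as in Brezis or Villani) and invoked as a black box in the proof of Lemma~\ref{longlemma}. Your proposal reconstructs the standard textbook proof, and it is essentially correct: weak duality follows from the Fenchel--Young inequality exactly as you say; for the reverse direction, your sets $C_1$ (strict epigraph of $\Theta$) and $C_2=\{(z,t):t\le\alpha-\Xi(z)\}$ are convex and disjoint by definition of $\alpha$, continuity of $\Theta$ at $z_0$ gives $C_1$ nonempty interior, upward-closedness of $C_1$ forces $\lambda\ge 0$, and your use of the point $z_0$ (where both $\Theta$ and $\Xi$ are finite) correctly rules out a vertical separating hyperplane; after normalizing $\lambda=1$, letting $t\downarrow\Theta(z)$ on $C_1$ and $t=\alpha-\Xi(z)$ on $C_2$ gives $\Theta^*(-z^*)\le-c$ and $\Xi^*(z^*)\le c-\alpha$, so the exhibited $z^*$ attains the value $\alpha$ and the right-hand side is a maximum. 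The one case you do not address is $\alpha=-\infty$: then $C_2$ is empty and the separation argument cannot be run, but the statement is immediate there, since weak duality forces $-\Theta^*(-z^*)-\Xi^*(z^*)=-\infty$ for every $z^*$, so both sides equal $-\infty$ and the maximum is trivially attained. (Note also that $\alpha<+\infty$ is automatic from $\Theta(z_0),\Xi(z_0)<\infty$, so $C_2$ is nonempty whenever $\alpha$ is finite, which your argument implicitly uses.) With that one-line addendum your proof is complete and supplies the proof the paper omits.
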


\begin{lemma}\label{longlemma} Let $X$ be a compact convex subset of $\mathbb{R}^n$, and let $\mu \in \Gamma(X)$ be such that $\mu(X) = 0$. Then
$$\inf_{\substack{\gamma \in \Gamma_+(X \times X)\\ \gamma_1 \succeq_{cvx} \mu_+ \\ \gamma_2 \preceq_{cvx} \mu_-}}\int_{X \times X} \Lone{x-y} d\gamma(x,y) = \sup_{\substack{ \phi, \psi \in \mathcal{U}(X) \\ \phi(x)-\psi(y) \leq \Lone{x-y} } } \left( \int_X \phi d \mu_+ - \int_X \psi d\mu_- \right) $$
and the infimum on the left-hand side is achieved.
\end{lemma}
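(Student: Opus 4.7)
The plan is to apply the Fenchel-Rockafellar duality theorem (Claim~\ref{thm:frdual}) following the template of Villani's proof of Monge-Kantorovich duality, with the crucial modification being how the convexity-monotonicity restriction $\phi, \psi \in \mathcal{U}(X)$ is absorbed into the Fenchel conjugates. I take $E = C(X \times X)$ equipped with the sup norm; its topological dual is $M(X \times X)$, the space of finite signed Radon measures on $X \times X$, via the Riesz representation theorem. Then I define two convex functions on $E$:
\begin{align*}
\Theta(u) &= \begin{cases} 0 & \text{if } u(x,y) \ge -\Lone{x-y} \text{ for all } (x,y),\\ +\infty & \text{otherwise,} \end{cases}\\
\Xi(u) &= \begin{cases} -\int_X \phi\, d\mu_+ + \int_X \psi\, d\mu_- & \text{if } u(x,y) = \psi(y) - \phi(x) \text{ with } \phi,\psi \in \mathcal{U}(X),\\ +\infty & \text{otherwise.} \end{cases}
\end{align*}
I would first verify that $\Xi$ is well-defined: any two decompositions $u = \psi - \phi$ differ by a common additive constant, and the identity $\mu_+(X) = \mu_-(X)$ (from $\mu(X) = 0$) makes the value invariant under such shifts. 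Convexity of both $\Theta$ and $\Xi$ is immediate since convex combinations of elements of $\mathcal{U}(X)$ remain in $\mathcal{U}(X)$. The hypothesis of Claim~\ref{thm:frdual} is verified by choosing $z_0 \equiv 1$: since $\Lone{x-y} \ge 0$, $\Theta$ is identically $0$ in a sup-norm neighborhood of $z_0$ (hence continuous there), and $\Xi(z_0) = \mu_-(X) < \infty$ using $\phi = 0$, $\psi = 1$.

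The core calculation is the evaluation of the Legendre-Fenchel transforms. For $\gamma \in M(X \times X)$,
$$\Theta^*(-\gamma) = \sup_{u \ge -\Lone{\cdot}}\left(-\int u\, d\gamma\right)$$
is $+\infty$ unless $\gamma \ge 0$ (otherwise pushing $u \to +\infty$ on the negative part of $\gamma$ blows it up), in which case the supremum is attained at $u = -\Lone{\cdot}$ and equals $\int \Lone{x-y}\, d\gamma$. For $\Xi^*$, substituting $u = \psi - \phi$ and using $\int(\psi(y) - \phi(x))\,d\gamma = \int\psi\,d\gamma_2 - \int\phi\,d\gamma_1$ separates the variables to give
$$\Xi^*(\gamma) = \sup_{\phi \in \mathcal{U}(X)}\int \phi\, d(\mu_+ - \gamma_1)\; +\; \sup_{\psi \in \mathcal{U}(X)}\int \psi\, d(\gamma_2 - \mu_-).$$
Each supremum is either $0$ (attained at $\phi = 0$, resp.\ $\psi = 0$) or $+\infty$, because $\mathcal{U}(X)$ is closed under positive scaling and contains all constants: any $\phi$ producing a strictly positive value can be rescaled to blow it up, and testing against positive and negative constants forces equality of total masses. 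The two suprema collapse to $0$ precisely when $\gamma_1 \succeq_{cvx}\mu_+$ and $\gamma_2 \preceq_{cvx}\mu_-$, so $\Xi^*(\gamma) = 0$ in that case and $+\infty$ otherwise.

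Finally, Claim~\ref{thm:frdual} gives
$$\inf_{u \in E}\bigl[\Theta(u)+\Xi(u)\bigr] \;=\; \max_{\gamma \in M(X\times X)}\bigl[-\Theta^*(-\gamma)-\Xi^*(\gamma)\bigr]$$
with the maximum attained. The left side equals $-\sup\{\int\phi\,d\mu_+ - \int\psi\,d\mu_- : \phi-\psi \le \Lone{\cdot},\ \phi,\psi \in \mathcal{U}(X)\}$ (the negative of the lemma's RHS), obtained by identifying $u = \psi - \phi$ and rewriting $u \ge -\Lone{\cdot}$ as $\phi - \psi \le \Lone{\cdot}$; the right side equals $-\min\{\int \Lone{x-y}\,d\gamma : \gamma \ge 0,\ \gamma_1 \succeq_{cvx}\mu_+,\ \gamma_2 \preceq_{cvx}\mu_-\}$ (the negative of the lemma's LHS), with the attaining $\gamma^*$ providing the minimizer. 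Negating yields both the claimed equality and the attainment of the infimum. I expect the main technical obstacle to be the careful evaluation of $\Xi^*$ and the observation that the convex dominance constraints on the marginals of $\gamma$ arise as the precise dual of the convexity-monotonicity restriction on $\phi,\psi$; this step is the key departure from the classical Monge-Kantorovich argument and relies essentially on the cone structure of $\mathcal{U}(X)$ together with the balanced-mass identity $\mu_+(X) = \mu_-(X)$.
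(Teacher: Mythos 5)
Your proposal is correct and follows essentially the same route as the paper's proof: the identical choice of $E = C(X\times X)$ with dual $\Gamma(X\times X)$, the same definitions of $\Theta$ and $\Xi$ (including the well-definedness check via the additive-constant ambiguity and $\mu_+(X)=\mu_-(X)$), the same continuity point $z_0\equiv 1$, and the same evaluation of the conjugates, where positivity of $\gamma$ emerges from $\Theta^*$ and the convex-dominance constraints on the marginals emerge from $\Xi^*$ by the cone structure of $\mathcal{U}(X)$. The only cosmetic difference is that the paper spells out the non-positive-$\gamma$ case of $\Theta^*$ via a continuous nonnegative test function (Lusin's theorem), which is the rigorous version of your scaling remark.
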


\begin{prevproof}{Lemma}{longlemma}
We will apply Fenchel-Rockafellar duality with $E = CB(X \times X)$, the space of continuous (and bounded) functions on $X \times X$ equipped with the $\Linf{\cdot}$ norm. Since $X$ is compact, by the Riesz representation theorem $E^* = \Gamma(X \times X)$.

We now define functions $\Theta, \Xi$ mapping $CB(X \times X)$ to $\mathbb{R} \cup \{+\infty\}$ by
\begin{align*}
\hspace{-.3in}\Theta(f) &= \left\{
	\begin{array}{ll}
		0  & \mbox{if } f(x,y) \geq - \Lone{x-y} \textrm{ for all } x,y \in X\\
		+ \infty & \text{otherwise}
	\end{array}
\right.\\
\hspace{-.3in}\Xi(f) &= \left\{
	\begin{array}{ll}
		\int_X \psi d\mu_- - \int_X \phi d\mu_+   & \mbox{if } f(x,y) = \psi(y)-  \phi(x) \text{ for some } \psi, \phi \in \mathcal{U}(X)\\
		+ \infty & \text{otherwise.}
	\end{array}
\right.
\end{align*}

We note that $\Xi$ is well-defined: If $\psi(x) - \phi(y) = {\psi}'(x) - {\phi}'(y)$ for all $x, y \in X$, then $\psi(x)  - {\psi}'(x) =  \phi(y)  - {\phi}'(y)$ for all $x,y \in X$. This means that ${\psi}'$ differs from $\psi$ only by an additive constant, and $\phi$ differs from $\phi'$ by the same additive constant, and therefore (since $\mu_+$ and $\mu_-$ have the same total mass) $\int_X \psi d\mu_- - \int_X \phi d\mu_+ = \int_X {\psi}' d\mu_- - \int_X {\phi}' d\mu_+.$

It is clear that $\Theta(f)$ is convex, since any convex combination two functions for which $f(x,y)\geq - \Lone{x-y}$ will yield another function for which the inequality is satisfied. It is furthermore clear that $\Xi$ is convex, since we can take convex combinations of the $\psi$ and $\phi$ functions as appropriate. (Notice that $\mathcal{U}(X)$ is closed under addition and positive scaling of functions.)

Consider the function $z_0 \in CB(X \times X)$ which takes the constant value of $1$. It is clear that $\Theta(z_0) = 0$ and $\Xi(z_0) = \mu_-(X) < \infty$. Furthermore, $\Theta(z) = 0$ for any $z \in CB(X \times X)$ with $\Linf{z - z_0} < 1$, and therefore $\Theta$ is continous at $z_0$. We can thus apply the Fenchel-Rockafellar duality theorem.

We compute, for any $\gamma \in \Gamma(X \times X)$:
\begin{align*}
\Theta^*(-\gamma) &=  \hspace{-.15in}\sup_{f \in CB(X \times X)} \left[ \int_{X\times X} f(x,y) d(-\gamma(x,y)) \right.  \\
&\left. \hspace{2in} -  {\left\{
	\begin{array}{ll}
		0  & \mbox{if } f(x,y) \geq - \Lone{x-y} \ \forall   x,y \in X \\
		+ \infty & \text{otherwise}
	\end{array}
\right.}  \right]\\
&=  \hspace{-.15in}\sup_{\substack{f \in CB(X \times X)\\ f(x,y) \geq - \Lone{x-y}}} \left(- \int_{X \times X} f(x,y) d\gamma(x,y) \right) =\hspace{-.15in}\sup_{\substack{\tilde{f} \in CB(X \times X)\\ \tilde{f}(x,y) \leq \Lone{x-y}}} \left(  \int_{X \times X} \tilde{f}(x,y) d\gamma(x,y) \right).
\end{align*}

We claim therefore that
\begin{align*}
\Theta^*(-\gamma)  &= \begin{cases}
\int_{X \times X} \Lone{x-y} d\gamma(x,y) & \mbox{if } \gamma \in \Gamma_+(X \times X)\\
\infty & \mbox{otherwise.}
\end{cases}
\end{align*}
Indeed, if $\gamma$ is a positive linear functional, then the result follows from monotonicity, since $\Lone{x-y}$ is the pointwise greatest function $\tilde{f}$ satisfying the constraint $\tilde{f}(x,y) \leq \Lone{x-y}$, and $\Lone{x-y}$ is continuous. Suppose instead that $\gamma$ is a signed Radon measure which is not positive everywhere. Then there exists a continuous nonnegative function $g: X \times X \rightarrow \mathbb{R}$ such that $\int g d\gamma = -\epsilon$ for some $\epsilon > 0$.\footnote{Formally, we have used Lusin's theorem to find such a $g$ which is continuous, as opposed to merely measurable.} Since $g(x,y) \geq 0$, it follows that $-kg(x,y) \leq 0 \leq \Lone{x-y}$ for any $k \geq 0$. Therefore 
$$\sup_{\substack{\tilde{f} \in CB(X \times X)\\ \tilde{f}(x,y) \leq \Lone{x-y}}} \left(  \int_{X \times X} \tilde{f}(x,y) d\gamma(x,y) \right) \geq \int -kg(x,y)d\gamma(x,y) = k \epsilon.$$
The claim follows, since $k > 0$ is arbitrary.

We similarly compute, for any $\gamma \in \Gamma(X \times X)$:
\begin{align*}
\Xi^*&(\gamma) = \sup_{f \in CB(X \times X)} \left[ \int_{X\times X} f(x,y) d\gamma(x,y) - \right. \\
&\hspace{10pt} - \left. {\left\{
	\begin{array}{ll}
		\int_X \psi d\mu_- - \int_X \phi d\mu_+  & \mbox{if } f(x,y) = \psi(y)-\phi(x) \text{ and } \psi,\phi \in \mathcal{U}(X)\\
		+ \infty & \text{otherwise}
	\end{array}
\right. } \right] \\
&= \sup_{\psi, \phi \in \mathcal{U}(X)} \left[ \int_{X\times X} (\psi(y)-\phi(x)) d\gamma(x,y) - 
		\int_X \psi d\mu_- + \int_X \phi d\mu_+ \right]
\end{align*}
We notice that $\Xi^*(\gamma) \geq 0$ for all $\gamma \in \Gamma(X \times X)$ by setting $\psi=\phi = 0$ and thus $\Theta^*(-\gamma) + \Xi^*(\gamma) = \infty$ if $\gamma \not \in \Gamma_+(X\times X)$. Moreover, when $\gamma \in \Gamma_+(X\times X)$:
\begin{align*}
\Xi^*(\gamma)  &= \sup_{\psi, \phi \in \mathcal{U}(X)} \left[ \int_{X\times X} (\psi(y)-\phi(x)) d\gamma(x,y) - 
		\int_X \psi d\mu_- + \int_X \phi d\mu_+ \right]\\
&= \sup_{\psi, \phi \in \mathcal{U}(X)} \left[ \int_X \psi d(\gamma_2 - \mu_-) + \int_X \phi d(\mu_+ - \gamma_1) \right]\\
 &=  \begin{cases}
0 & \mbox{if }  \gamma_1 \succeq_{cvx} \mu_+ \textrm{ and } \gamma_2 \preceq_{cvx} \mu_-\\ 
\infty & \mbox{otherwise.}
\end{cases}
\end{align*}

The last equality is true because if $\gamma_1 \succeq_{cvx} \mu_+$ doesn't hold, we can find a function $\phi \in \mathcal{U}(X)$ such that $\int_X \phi d(\mu_+ - \gamma_1) > 0$. Since we are allowed to scale $\phi$ arbitrarily, we can make the inside quantity as large as we want. The same holds when $\mu_- \not \succeq_{cvx} \gamma_2$.

\noindent We now apply Fenchel-Rockafellar duality:
\begin{align*}
\inf_{f \in CB(X \times X)} [\Theta(f) + \Xi(f)] &= 
\hspace{-.1in}\max_{\gamma \in \Gamma(X \times X)} [-\Theta^*(-\gamma)-\Xi^*(\gamma)]\\
\inf_{\substack{f(x,y) \geq - \Lone{x-y} \\ f(x,y) = \psi(y) - \phi(x) \\ \psi,\phi \in \mathcal{U}(X)}} \left( \int_X \psi d\mu_- - \int_X \phi d\mu_+ \right)  &= 
\hspace{-.1in}\max_{\gamma \in \Gamma_+(X \times X)} \left[ - \int_{X \times X} \Lone{x-y} d\gamma(x,y) -\Xi^*(\gamma)\right]\\
\inf_{\substack{\psi, \phi \in \mathcal{U}(X) \\ \phi(x) - \psi(y) \leq \Lone{x-y}}} \left(\int_X \psi d\mu_- - \int_X \phi d\mu_+\right) &= 
\hspace{-.1in}\max_{\substack{\gamma \in \Gamma_+(X \times X)\\ \gamma_1 \succeq_{cvx} \mu_+ \\ \gamma_2 \preceq_{cvx} \mu_-}} \left( - \int_{X \times X} \Lone{x-y} d\gamma(x,y) \right)\\
\sup_{\substack{\psi, \phi \in \mathcal{U}(X) \\ \phi(x) - \psi(y) \leq \Lone{x-y}}} \left( \int_X \phi d\mu_+ - \int_X \psi d\mu_- \right) &=  
\hspace{-.1in}\min_{\substack{\gamma \in \Gamma_+(X \times X)\\ \gamma_1 \succeq_{cvx} \mu_+ \\ \gamma_2 \preceq_{cvx} \mu_-}}\left(\int_{X \times X} \Lone{x-y} d\gamma(x,y) \right).
\end{align*}
\end{prevproof}

\subsection{From Two Convex Functions to One}

\begin{lemma}\label{twotoone}
Let $X = \prod_{i=1}^n [x^{\textrm{low}}_i,x^{\textrm{high}}_i]$ for some $x^{\textrm{low}}_i,x^{\textrm{high}}_i \ge 0$, and let $\mu \in \Gamma(X)$ such that $\mu(X)  =0$.
Then
$$\sup_{\substack{\phi, \psi \in \mathcal{U}(X) \\ \phi(x) - \psi(y) \leq \Lone{x-y} } } \left( \int_X \phi d \mu_+ - \int_X \psi d\mu_- \right) = \sup_{u \in \mathcal{U}(X) \cap \mathcal{L}_1(X)}\left(\int_X u d\mu_+ - \int_X u d \mu_- \right) .$$
Furthermore, if the supremum of one side is achieved, then so is the supremum of the other side.
\end{lemma}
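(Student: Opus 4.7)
The inequality ``$\ge$'' (right-hand side $\le$ left-hand side) is immediate: given any feasible $u\in\mathcal U(X)\cap\mathcal L_1(X)$ for the RHS, set $\phi=\psi=u$. Then $\phi,\psi\in\mathcal U(X)$, and the 1-Lipschitz condition on $u$ yields $\phi(x)-\psi(y)=u(x)-u(y)\leq\Lone{x-y}$, so $(\phi,\psi)$ is feasible for the LHS and attains the same value $\int u\,d\mu_+-\int u\,d\mu_-$. In particular, any optimizer $u^\ast$ of the RHS yields an optimizer $(u^\ast,u^\ast)$ of the LHS.

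For the reverse inequality, I would fix a feasible pair $(\phi,\psi)$ for the LHS and produce a single $u\in\mathcal U(X)\cap\mathcal L_1(X)$ with $\phi\leq u\leq\psi$ pointwise; since $\mu_+,\mu_-\geq 0$, this immediately gives
$$\int_X u\,d\mu_+-\int_X u\,d\mu_-\;\geq\;\int_X \phi\,d\mu_+-\int_X \psi\,d\mu_-,$$
which is what we need. The starting observation is that setting $x=y$ in the constraint $\phi(x)-\psi(y)\leq\Lone{x-y}$ gives $\phi\leq\psi$ on $X$. Combined with the constraint, for any $x,y\in X$ we obtain $\phi(x)-\phi(y)=(\phi(x)-\psi(y))+(\psi(y)-\phi(y))\leq\Lone{x-y}+0$, and by symmetry $|\phi(x)-\phi(y)|\leq\Lone{x-y}$; thus $\phi\in\mathcal U(X)\cap\mathcal L_1(X)$.

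I would then define the \emph{upper envelope}
$$u(x)\;\triangleq\;\sup\bigl\{\,h(x)\,:\, h\in\mathcal U(X)\cap\mathcal L_1(X),\ h\leq\psi\text{ on }X\,\bigr\}.$$
The collection on the right is nonempty because $\phi$ lies in it. Each of the defining properties of $\mathcal U(X)\cap\mathcal L_1(X)$ is preserved by pointwise suprema: monotonicity, convexity, and the 1-Lipschitz bound each pass to $u$ by the standard two-line arguments (e.g.\ for Lipschitzness, $h(x)\leq h(y)+\Lone{x-y}\leq u(y)+\Lone{x-y}$ for every $h$, then take sup in $h$). The 1-Lipschitz property in turn forces $u$ to be continuous, which together with the other properties yields $u\in\mathcal U(X)\cap\mathcal L_1(X)$. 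By construction $u\leq\psi$, and $u\geq\phi$ because $\phi$ itself is one of the admissible $h$'s. This proves LHS $\leq$ RHS and also shows that any optimizer $(\phi^\ast,\psi^\ast)$ of the LHS produces an optimizer $u$ of the RHS via the envelope construction.

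The only step that requires any care is verifying that $u$ lies in $\mathcal U(X)\cap\mathcal L_1(X)$, in particular the continuity clause built into $\mathcal U(X)$; this is the one place where it matters that $\mathcal L_1$ is part of the envelope's defining family, since 1-Lipschitzness promotes the pointwise supremum of continuous functions (which need not be continuous in general) back to a continuous function. Everything else is a routine bookkeeping of the inequalities $\phi\leq u\leq\psi$.
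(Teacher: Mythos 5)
Your easy direction, and the reduction of the hard direction to producing a single $u \in \mathcal{U}(X)\cap\mathcal{L}_1(X)$ with $\phi \le u \le \psi$, are both fine. The gap is in the step that makes the envelope useful, namely $u \ge \phi$. You justify it by claiming $\phi$ itself lies in the defining family, via the estimate $\phi(x)-\phi(y) = (\phi(x)-\psi(y)) + (\psi(y)-\phi(y)) \le \Lone{x-y} + 0$; but setting $x=y$ in the constraint gives $\phi \le \psi$, so $\psi(y)-\phi(y) \ge 0$ and the sign in your estimate is backwards. In fact $\phi$ need not be $1$-Lipschitz at all: on $X=[0,1]^n$ take $\psi \equiv 100$ and $\phi(x) = 100\,x_1$; the pair is feasible for the left-hand side (since $\phi(x)-\psi(y) \le 0 \le \Lone{x-y}$) and $\phi,\psi\in\mathcal{U}(X)$, yet $\phi \notin \mathcal{L}_1(X)$. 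So $\phi$ is not an admissible $h$, and the crucial inequality $u \ge \phi$ — the only thing that lets you compare the two objective values — is unproven. (Your observations that the envelope itself is convex, monotone, $1$-Lipschitz and hence continuous are correct; the problem is solely the lower bound by $\phi$.)

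The statement can be rescued, but only by supplying exactly the step your shortcut was meant to avoid: show that the sup-convolution $\bar{\psi}(y) = \sup_{x\in X}\left(\phi(x) - \Lone{x-y}\right)$ is continuous, non-decreasing, convex and $1$-Lipschitz, hence lies in $\mathcal{U}(X)\cap\mathcal{L}_1(X)$; feasibility gives $\bar\psi \le \psi$, and taking $x=y$ in the supremum gives $\bar\psi \ge \phi$, so $\bar\psi$ is an admissible $h$ and your envelope then does satisfy $u \ge \bar\psi \ge \phi$. Proving those properties of $\bar\psi$ is not a routine pointwise-supremum argument — for each fixed $x$ the map $y \mapsto \phi(x)-\Lone{x-y}$ is concave, so convexity (and monotonicity) of the supremum requires the careful coordinate-wise construction of comparison points; this is precisely the technical core of the paper's proof, which takes $u=\bar\psi$ directly (and then shows $\bar\psi$ coincides with its own inf-convolution) rather than passing through an envelope. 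So your plan is close in spirit, but the decisive lemma is missing and the substitute argument you gave for it is incorrect.
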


\begin{prevproof}{Lemma}{twotoone}
Given any feasible $u$ for the right-hand side of Lemma~\ref{twotoone}, we observe that $\phi = \psi = u$ is feasible for the left-hand side, and therefore the left-hand side is at least as large as the right-hand side. It therefore suffices to prove the reverse direction of the inequality. Let $\phi$ and $\psi$ be feasible for the left-hand side. Given $\phi$, it is clear that $\psi$ must satisfy $\psi(y) \geq \sup_x [\phi(x) - \Lone{x-y}]$.

Set $\bar{\psi}(y) =  \sup_x [\phi(x) - \Lone{x-y}]$. Since $\psi$ exists, this supremum indeed has finite value. Since $\bar{\psi} \leq \psi$ pointwise, it follows that $\int_X \bar{\psi} d\mu_- \leq \int_X \psi d \mu_-$. We must now prove that $\bar{\psi} \in \mathcal{U}(X)$, thereby showing that $\phi, \bar{\psi}$ is feasible for the left-hand side and that replacing $\psi$ by $\bar\psi$ does not decrease the objective value.

\begin{claim}\label{1234}
 $\bar{\psi} \in \mathcal{U}(X)$ and $\bar{\psi} \in \mathcal{L}_1(X)$.
\end{claim}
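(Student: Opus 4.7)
The plan is to verify each of the four defining properties of $\mathcal{U}(X) \cap \mathcal{L}_1(X)$---1-Lipschitz with respect to $\ell_1$, continuity, non-decreasingness, and convexity---in turn for the function $\bar\psi(y) = \sup_{x \in X}[\phi(x) - \Lone{x-y}]$. The first three will follow from essentially direct manipulations of the supremum; the main obstacle will be \emph{convexity}, since $\bar\psi$ is a supremum of functions that are concave in $y$ for each fixed $x$, so convexity is not automatic and must be extracted from the specific structure of $\phi$ (non-decreasing and convex) together with the box shape of $X$.

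For the 1-Lipschitz bound, I would use the triangle inequality: for any $y_1, y_2 \in X$ and any $x \in X$, $\Lone{x-y_1} \le \Lone{x-y_2} + \Lone{y_1-y_2}$ yields $\phi(x) - \Lone{x-y_1} \ge \phi(x) - \Lone{x-y_2} - \Lone{y_1-y_2}$; taking the supremum over $x$ and then swapping the roles of $y_1$ and $y_2$ gives $|\bar\psi(y_1) - \bar\psi(y_2)| \le \Lone{y_1-y_2}$, from which continuity is immediate. For non-decreasingness, given $y \le y'$ componentwise and any $x \in X$, the componentwise maximum $\tilde x := \max(x, y')$ lies in $X$, satisfies $\phi(\tilde x) \ge \phi(x)$ by monotonicity of $\phi$, and satisfies $\Lone{\tilde x - y'} \le \Lone{x - y}$ by a coordinate-by-coordinate check; taking the sup over $x$ gives $\bar\psi(y') \ge \bar\psi(y)$.

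The key observation enabling convexity is that, because $\phi$ is non-decreasing, the sup in $\bar\psi(y)$ may be restricted to $x \ge y$ componentwise: if some coordinate $x_i^*$ of an optimizer were strictly below $y_i$, replacing it by $y_i$ would weakly increase $\phi$ and weakly decrease $\Lone{x-y}$, while keeping the point in $X$. Consequently
$$\bar\psi(y) \;=\; \sup_{x \in \prod_i [y_i, x^{\textrm{high}}_i]} \left[\phi(x) - \sum_i (x_i - y_i)\right] \;=\; G(y) + \sum_i y_i,$$
where $G(y) := \sup_{x \in \prod_i [y_i, x^{\textrm{high}}_i]} F(x)$ and $F(x) := \phi(x) - \sum_i x_i$ is convex.

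Because the supremum of a convex function over a compact box is attained at a vertex, $G(y) = \max_{S \subseteq \{1,\ldots,n\}} F(v^S(y))$, where $v^S(y)$ is the vertex of the box with $v^S_i(y) = y_i$ for $i \in S$ and $v^S_i(y) = x^{\textrm{high}}_i$ for $i \notin S$. Each map $y \mapsto v^S(y)$ is affine in $y$, so each $F \circ v^S$ is convex in $y$ (a convex function composed with an affine map), and $G$, as a maximum of finitely many convex functions, is convex. Adding the linear term $\sum_i y_i$ preserves convexity, so $\bar\psi$ is convex, completing the verification that $\bar\psi \in \mathcal{U}(X) \cap \mathcal{L}_1(X)$.
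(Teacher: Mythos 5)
Your proof is correct, and while the monotonicity step is the same trick the paper uses (replace $x$ by the componentwise maximum with the larger $y'$), your treatment of the remaining properties takes a genuinely different route. The paper gets continuity from the Maximum Theorem, the Lipschitz bound from a separate $\sup$--$\inf$ computation, and convexity by an explicit midpoint argument: for collinear $y,y',y''$ with $y=\tfrac{1}{2}(y'+y'')$ and each test point $x$ it constructs points $x',x''$ through a coordinate-by-coordinate case analysis so that $x'+x''\geq 2x$ and $\Lone{x'-y'}+\Lone{x''-y''}\leq 2\Lone{x-y}$, which yields midpoint convexity (continuity then upgrades this to convexity). You instead exploit monotonicity of $\phi$ to restrict the supremum to $x\geq y$, so that $\bar\psi(y)=G(y)+\sum_i y_i$ with $G(y)=\sup_{x\in\prod_i[y_i,x^{\textrm{high}}_i]}F(x)$ and $F(x)=\phi(x)-\sum_i x_i$ convex; since any point of the sub-box is a convex combination of its vertices, $G(y)=\max_{S}F(v^S(y))$ is a finite maximum of convex functions of $y$ (convex $F$ composed with the affine vertex maps), and convexity of $\bar\psi$ is immediate, with no midpoint-plus-continuity step needed. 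Your route also gets continuity for free from the triangle-inequality Lipschitz bound and finiteness from compactness, so it is arguably shorter and cleaner; what it uses beyond the paper's argument is the vertex-attainment principle for convex functions on boxes, i.e.\ it leans more heavily on the product structure of $X$ and on the cost being $\sum_i|x_i-y_i|$ (the same structural facts the paper's case analysis implicitly uses), whereas the paper's construction is more elementary and verifies the convexity inequality along arbitrary segments directly.
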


\begin{proof}
 We will first show that $\bar{\psi} \in \mathcal{U}(X)$.
We need to show continuity, monotonicity, and convexity.
\begin{itemize}
	\item \textbf{Continuity.} Continuity of $\bar\psi$ follows from the Maximum Theorem since both $\phi$ and $\Lone{\cdot}$ are uniformly continuous.
	\item \textbf{Monotonicity.} Let $y \leq y'$ coordinate-wise and let $x$ be arbitrary. We must show that there exists an $x'$ such that $\phi(x) - \Lone{x-y} \leq \phi(x') - \Lone{x'-y'}$. Set $x_i' = \max\{x_i, y_i' \}$. Since $x \leq x'$, we have $\phi(x) \leq \phi(x')$. We notice that if $x_i \geq y_i'$ then $x_i' =x_i$ and thus $|x_i'-y_i'| \leq |x_i-y_i|$, while if $x_i \leq y_i'$ then $ |x_i' - y_i'| = 0$. Therefore, we have that $\Lone{x-y} \geq \Lone{x'-y'}$ and thus $\phi(x) - \Lone{x-y} \leq \phi(x') - \Lone{x'-y'}$, as desired.
	\item \textbf{Convexity.} Let $y, y', y''$ be collinear points in $X$ such that $y = \frac{y'+y''}{2}$. Then, given any $x$, we must show that there exist $x'$ and $x''$ such that
	$$\phi(x') - \Lone{x'-y'} + \phi(x'')  - \Lone{x''-y''} \geq 2\phi(x) - 2\Lone{x-y}.$$
	We define $x_i'$ and $x_i''$ as follows:
	\begin{itemize}
		\item If $y_i' \geq y_i''$, set $x_i' = \max\{ x_i, y_i'\}$ and $x_i'' = \max\{ 2x_i - x_i', y_i'' \}$.
		\item If $y_i' < y_i''$, set $x_i'' = \max\{x_i,y_i'\}$ and $x_i' = \max\{2x_i-x_i'', y_i'\}$. 
	\end{itemize}
	Notice that $x' + x'' \geq 2x$, and thus (since $\phi$ is convex and monotone) we have $\phi(x') + \phi(x'') \geq 2\phi(x)$.
	
	Suppose without loss of generality that $y_i' \geq y_i''$. We now consider two cases:
\begin{itemize}
		\item $y_i' \geq x_i$. We then have $x_i' = y_i'$ and $x_i'' = \max\{2x_i - y_i', y_i'' \}$. 
Therefore, $|y_i' - x_i'|$ = 0 and $|y_i''-x_i''| \le |y_i'' - 2x_i + y_i'| = 2 |y_i - x_i|$ since $y_i'+y_i'' = 2y_i$.
		\item $y_i' < x_i$. We now have $x_i' = x_i$ and $x_i'' = \max\{x_i, y_i''\}=x_i.$ Therefore $|y_i'' - x_i''| + |y_i' - x_i'|$  is equal to $|y_i' + y_i'' - 2x_i|$, which equals $|2y_i - 2x_i|$.
	\end{itemize}
	Therefore, we have that $|y_i' - x_i'| + |y_i'' - x_i''| \leq |2y_i - 2x_i|$ for all $i$, which implies that $\Lone{x'-y'} + \Lone{x''-y''} \leq 2\Lone{x-y}$. 
\end{itemize}

We have thus shown that $\bar{\psi} \in \mathcal{U}(X)$. 
We will now show that $\bar{\psi} \in \mathcal{L}_1(X)$. We have
\begin{align*}
\bar{\psi}(x) &- \bar{\psi}(y) = \sup_{z}\inf_{w}( \phi(z) - \Lone{z-x} - \phi(w)  + \Lone{w-y}  )\\
&\leq \sup_z(\phi(z) - \Lone{z-x} - \phi(z) + \Lone{z-y})\\
&= \sup_z (\Lone{z-y} - \Lone{z-x}) \leq \Lone{x-y}.
\end{align*}
\end{proof}

Since $\phi, \bar{\psi}$ are a feasible pair of functions for the left-hand side of Lemma~\ref{twotoone}, we know that $\phi$ satisfies the inequality $\phi(x) \leq \inf_y [\bar{\psi}(y) + \Lone{x-y}]$. We now set $\bar{\phi}(x) =  \inf_y [\bar{\psi}(y) + \Lone{x-y}]$. It is clear that the value of the left-hand objective function under $\bar{\phi}, \bar{\psi}$ is at least as large as its value under $\phi, \bar{\psi}$.

We claim that not only is $\bar{\phi}$ continuous, monotonic, and convex, but in fact that $\bar{\phi} = \bar{\psi}$. We notice that $\bar{\phi}(x) \leq \bar{\psi}(x) + \Lone{x-x} = \bar\psi(x).$ To prove the other direction of the inequality, we compute
$$
\bar{\phi}(x) = \inf_y \left[ \bar{\psi}(y) + \Lone{x-y} \right] = \bar\psi(x) + \inf_y \left[ \bar{\psi}(y) - \bar{\psi}(x) + \Lone{x-y} \right] \geq \bar\psi(x)
$$
where the last inequality holds since $\bar{\psi}(x) - \bar{\psi}(y) \leq \Lone{x-y}$. Therefore $\bar\phi = \bar\psi$, and thus $\bar\phi \in \mathcal{U}(X)$. Since $\bar\phi$ satisfies the inequality $\bar\phi(x) - \bar\phi(y) \leq \Lone{x-y}$ it is feasible for the right-hand side of Lemma~\ref{twotoone}, and the value of the right-hand objective under $\bar\phi$ is at least as large the value of the left-hand objective under $\phi, \psi$. We notice finally that if $\phi, \psi$ are optimal for the left-hand side, then $\bar\phi$ is optimal for the right-hand side.
\end{prevproof}

\subsection{Proof of Theorem~\ref{strongduality}}

 By combining  Lemma~\ref{weakduality}, Lemma~\ref{longlemma}, and Lemma~\ref{twotoone}, we have

\begin{align*}
\inf_{\substack{  \gamma \in \Gamma_+(X \times X) \\ \gamma_1 - \gamma_2 \succeq_{1} \mu   }}&\int_{X\times X} \Lone{x-y} d\gamma \geq \sup_{u \in \mathcal{U}(X) \cap \mathcal{L}_1(X)} \int_X u d\mu\\
& \hspace{-.8in}= \hspace{-.1in} \sup_{\substack{\phi, \psi \in \mathcal{U}(X) \\ \phi(x) - \psi(y) \leq \Lone{x-y} } } \left( \int_X \phi d \mu_+ - \int_X \psi d\mu_- \right)  = \hspace{-.1in} \inf_{\substack{\gamma \in \Gamma_+(X \times X)\\ \gamma_1 \succeq_{cvx} \mu_+ \\ \gamma_2 \preceq_{cvx} \mu_-}}\int_{X \times X} \Lone{x-y} d\gamma(x,y).
\end{align*}
By Lemma~\ref{longlemma}, the last minimization problem above achieves its infimum for some $\gamma^*$. We notice that $\gamma^*$ is also feasible for the first minimization problem above, and therefore the inequality is actually an equality and $\gamma^*$ is optimal for the first minimization problem. In addition, since $\gamma^*$ is feasible for the last minimization problem, it satisfies $\gamma_1^*(X) = \gamma_2^*(X) = \mu_+(X)$.  All that remains is to prove that the supremum to the maximization problem is achieved for some $u^*$. A proof of this fact is in Appendix~\ref{existenceappendix}.

\subsection{Existence of Optimal Mechanism}\label{existenceappendix}

We now prove that the supremum of the maximization problem of Theorem~\ref{strongduality} is achieved for some $u^*$. 
Consider a sequence of feasible functions $u_1, u_2, \ldots \in \mathcal{U}(X) \cap \mathcal{L}_1(X)$  such that $\int_X u_id\mu$ converges monotonically to the supremum value $V$, which we have proven is finite.\footnote{Finiteness is also obvious because $X$ is bounded and the infimum problem is feasible.} Since $\mu(X) = 0$, we may without loss of generality assume that $u_i(0^n) = 0$ for all $u_i$.
Since all of the functions are bounded by $\| x^{\textrm{high}} \|_1$ and are $1$-Lipschitz (which implies equicontinuity), the Arzel\`a-Ascoli theorem implies that there exists a uniformly converging subsequence. Let $u^*$ be the limit of that subsequence. Since the convergence is uniform, the function $u^*$ is $1$-Lipschitz, non-decreasing and convex and thus feasible for the mechanism design problem. Moreover, since the objective is linear, the revenue of the mechanism with that utility is equal to $V$ and thus the supremum is achieved.

 \subsection{Omitted Proofs from Section~\ref{examplesection} - Example \ref{unifexample}} \label{app:proof of uniform non 0-1}

It is straightforward to verify that the mechanism is IC and IR. All that remains is to prove that the utility function $u^*$ induced by the mechanism is optimal.

The transformed measure $\mu$ of the type distribution is composed of:
\begin{itemize}
\item A point mass of $+1$ at $(4,4)$.
\item Mass $-3$ distributed  throughout the  rectangle (Density $-\frac{1}{12}$)
\item Mass  $+\frac{7}{3}$ distributed  on  upper edge of  rectangle (Linear density $+\frac{7}{36}$)
\item Mass  $-\frac{4}{3}$ distributed  on  lower edge of  rectangle (Linear density $-\frac{1}{9}$)
\item Mass  $+\frac{4}{3}$ distributed  on  right edge of  rectangle (Linear density $+\frac{4}{9}$)
\item Mass  $-\frac{1}{3}$ distributed  on  left edge of  rectangle (Linear density $-\frac{1}{9}$)
\end{itemize}
We claim that $\mu(Z) = \mu(Y) = \mu(W) = 0$, which is straightforward to verify. 

We will construct an optimal $\gamma^*$  for the dual program of Theorem~\ref{strongduality}, using the intuition of Remark~\ref{geometricremark}. Our $\gamma^*$ will be decomposed into $\gamma^* = \gamma^Z + \gamma^Y + \gamma^W$ with $\gamma^Z \in \Gamma_+(Z \times Z)$, $\gamma^Y \in \Gamma_+(Y \times Y)$, and $\gamma^W \in \Gamma_+(W \times W)$. To ensure that $\gamma^*_1 - \gamma^*_2 \succeq_{cvx} \mu$, we will show that
$$\gamma^Z_1 - \gamma^Z_2 \succeq_{cvx} \mu|_Z; \quad \gamma^Y_1 - \gamma^Y_2 \succeq_{cvx} \mu|_Y; \quad \gamma^W_1 - \gamma^W_2 \succeq_{cvx} \mu|_W.$$
We will also show that the conditions of Corollary~\ref{linearintegral} hold for each of the measures $\gamma^Z$, $\gamma^Y$, and $\gamma^W$ separately, namely $\int u^* d(\gamma^A_1 - \gamma^A_2) = \int_A u^* d  \mu$ and $u^*(x) - u^*(y) = \Lone{x-y}$ hold $\gamma^A$-almost surely
for $A$ = $Z$, $Y$, and $W$.
\begin{itemize}[label={},leftmargin=0pt]
\item {\bf Construction of $\gamma^Z$.} Since $\mu_+|_Z$ is a point-mass at $(4,4)$ and  $\mu_-|_Z$ is distributed throughout a region which is coordinatewise greater than $(4,4)$, we notice that $\mu|_Z \preceq_{cvx} 0$. We therefore set $\gamma^Z$ to be the zero measure, and the relation $\gamma^Z_1 - \gamma^Z_2 = 0 \succeq_{cvx} \mu|_Z$, as well as the two necessary equalities from Corollary~\ref{linearintegral}, are trivially satisfied.

\item {\bf Construction of $\gamma^W$.} We will construct $\gamma^W \in \Gamma(\mu_+|_W, \mu_-|_W)$ such that  $x \geq y$ component-wise holds $\gamma^W(x,y)$ almost surely. Geometrically, we view this as ``transporting'' $\mu_+|_W$ into $\mu_-|_W$ by moving mass downwards and leftwards. Indeed, since both items are allocated with probability 1 in $W$, being able to transport both downwards and leftwards is in line with  our interpretation of the second condition of Corollary~\ref{linearintegral}, as explained in Remark~\ref{geometricremark}.\footnote{To prove the existence of such a map, it is equivalent by Strassen's theorem to prove that $\mu_+|_W$ stochastically dominates $\mu_-|_W$ in the first order, but in this example we will directly define such a map.}  

We notice that $\mu_+|_W$ consists of mass distributed on the top and right edges of $W$, while $\mu_-|_W$ consists of mass on the interior and bottom of $W$. We first  match the $\mu_+$ mass on $[8, 16] \times \{7\}$ with the $\mu_-$ mass on $[8,16] \times [\frac{14}{3},7]$ by moving mass downwards, then we match the $\mu_+$ mass on $\{16\} \times [4,{\frac{14}{3}}]$ with the $\mu_-$ mass on $[\frac{32}{3},16] \times (4, \frac{14}{3}]$ by moving mass to the left, and we finally match the $\mu_+$ mass on $\{16\} \times [\frac{14}{3},7]$ with the remaining negative mass arbitrarily. Noticing that $u^*(x) = \Lone{x} - 12$ for all $x \in W$, it is straightforward to verify the desired properties from Corollary~\ref{linearintegral}.
 
\item {\bf Construction of $\gamma^Y$.} This is the most involved step of the proof. Since item 2 is allocated with 100\% probability in region $Y$, by Remark~\ref{geometricremark} we would like to transport the positive mass $\mu_+|_Y$ into $\mu_-|_Y$ by moving mass straight downwards. However, this is impossible without first ``shuffling'' $\mu|_Y$, due to the negative mass on the left boundary of $Y$. Therefore, we first ``shuffle'' the positive part of $\mu|_Y$ (on the top boundary) to push positive mass onto the point $(4,7)$ (the top-left corner of $Y$), and only then do we transport the positive part of the shuffled measure into the negative part by sending mass downwards. Since the positive and negative parts of $\mu|_Y$ must be matchable by only sending mass downwards, we know how the post-shuffling measure should look. In particular, on every vertical line in region $Y$ the net post-shuffling mass should be zero.

So rather than constructing $\gamma^Y$ with $\gamma^Y_1 - \gamma^Y_2$ equal  to $\mu|_Y$, we will have $\gamma^Y_1 - \gamma^Y_2 = \mu|_Y + \alpha$, where the ``shuffling'' measure
 $\alpha =  \alpha_+ - \alpha_- \succeq_{cvx} 0$. As discussed above, we set $\alpha$ to have density function
 $$f_\alpha(z_1,z_2) = \mathbb{I}_{z_2 = 7} \cdot \left( \frac{1}{9} \mathbb{I}_{z_1 = 4} + \frac{1}{24}\left(z_1 - \frac{20}{3}\right)     \right) \cdot \mathbb{I}_{z \in Y}.$$
The measure $\alpha$ is supported on the line $[4,8] \times \{7\}$ and consists of a point mass of $\frac{1}{9}$ at $(4,7)$ followed by allocating mass along the 1-dimensional upper boundary of $Y$ according to a density function which begins negative and increases linearly. It is straightforward to verify that $\alpha \succeq_{cvx} 0$,\footnote{Since $\alpha$ is supported on a 1-dimensional line, this verification uses a property analogous to the standard characterization of one-dimensional second-order stochastic dominance via the cumulative density function. Informally, we can argue that $\alpha \succeq_{cvx} 0$ by considering integrals of one-dimensional test functions (by restricting our attention to the line $z_2 = 7$) and noticing that, since $\alpha(Y) = 0$, we need only consider test functions $h$ which have value 0 at $z_1 = 4$. We then use the fact that all linear functions integrate to 0 under $\alpha$ and that (ignoring the point mass at $z_1 = 4$, since $h$ is $0$ at this point) the density of $\alpha$ is monotonically increasing.} which we need for feasibility, and that $\int_Y u^* d\alpha = 0$, which we need to satisfy complementary slackness.

We are now ready to define $\gamma^Y \in \Gamma(\mu_+|_Y + \alpha_+, \mu_-|_Y + \alpha_-)$. We construct $\gamma^Y$ so that $x_1 = y_1$ and $x_2 \geq y_2$ hold $\gamma^Y(x,y)$ almost surely. Since $\mu_+|_Y + \alpha_+$ only assigns mass to the upper boundary of $Y$, to show that $\gamma^Y$ can be constructed so that all mass is transported ``vertically downwards'' we need only verify that $\mu_+|_Y + \alpha_+$ and $\mu_-|_Y + \alpha_-$ assign the same density to any vertical ``strip'' in $Y$. Indeed,
\begin{align*}
\hspace{-0.5in}(\mu_-|_Y + \alpha_-)(\{4\} \times [6,7]) &=\mu_-|_Y (\{4\} \times [6,7]) =  \frac{1}{9} =  \alpha_+(\{4\} \times [6,7])  \\
&= (\mu_+|_Y + \alpha_+)(\{4\} \times [6,7])
\end{align*}
and, for all $z_1 \pm \epsilon \in (4,8]$, we compute the following, using the fact that the surface area of $Y \cap \left([z_1 - \epsilon, z_1 + \epsilon] \times [4,7] \right)$ is $2\epsilon \cdot \left(\frac{z_1}{2} - 1 \right)$:
\begin{align*}
(\mu_-|_Y - &\alpha|_Y)([z_1- \epsilon, z_1+\epsilon] \times [4,7])  \\
&= \frac{1}{12}\cdot \left(2\epsilon \cdot \left(\frac{z_1}{2} - 1 \right) \right)   - \frac{1}{24}\int_{z_1-\epsilon}^{z_1+\epsilon}(z - \frac{20}{3})dz\\
&=  \frac{\epsilon z_1}{12} - \frac{\epsilon}{6} - \frac{1}{24}(2\epsilon z_1 - \frac{40\epsilon}{3}) = \frac{7\epsilon}{18}
= \mu_+|_Y ([z_1- \epsilon, z_1+\epsilon] \times [4,7]) .
\end{align*}

Since $u^*$ has the property that $u^*(z_1, a) - u^*(z_1,b) = a-b$ for all $(z_1, a), (z_1,b) \in Y$ (as the second good is received with probability 1), it follows that $\gamma^Y$ satisfies the necessary conditions of Corollary~\ref{linearintegral}.
\end{itemize} \section{Proof of Stochastic Conditions of Section~\ref{bundlingsection}}\label{fullbundlingproofappendix}

Our goal in this section is to prove Theorem~\ref{bundlingtheorem}. We begin by presenting some useful probabilistic tools that will be essential for the proof.

\subsection{Probabilistic Lemmas}\label{probabilisticlemmas}
We first present a useful result about convex dominance of random variables. For more information about this result, see Theorem 7.A.2 of \cite{Shaked}.
 
\begin{lemma}[Strassen's Theorem]\label{strassen}
Let $A$ and $B$ be random vectors. Then $A \preceq_{cvx} B$ if and only if there exist random vectors $\hat{A}$ and $\hat{B}$, defined on the same probability space, such that $\hat{A} =_{st} A$, $\hat{B} =_{st} B$, and $\mathbb{E}[\hat{B}|\hat{A}] \geq \hat{A}$ almost surely, where the final inequality is componentwise and where $=_{st}$ denotes equality in distribution.
\end{lemma}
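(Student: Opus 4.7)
The easy direction is a direct application of conditional Jensen's inequality together with the monotonicity of every $u \in \mathcal{U}(X)$. Given any coupling $(\hat A, \hat B)$ with the prescribed marginal laws and $\mathbb{E}[\hat B \mid \hat A] \geq \hat A$ componentwise almost surely, and any $u \in \mathcal{U}(X)$, conditional Jensen yields $\mathbb{E}[u(\hat B) \mid \hat A] \geq u(\mathbb{E}[\hat B \mid \hat A])$, and monotonicity of $u$ gives $u(\mathbb{E}[\hat B \mid \hat A]) \geq u(\hat A)$. Taking expectations and unwinding the marginals produces $\mathbb{E}[u(B)] \geq \mathbb{E}[u(A)]$, which is exactly $A \preceq_{cvx} B$.

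The reverse direction is the substantive part. The plan is to set up the space $\Pi(A,B) \subseteq \Gamma_+(X \times X)$ of couplings with the prescribed marginals, and inside it the convex sub-set $\mathcal{M}$ of couplings $\pi$ satisfying $\int (y_i - x_i)\,\phi(x)\,d\pi(x,y) \geq 0$ for every coordinate $i$ and every bounded continuous non-negative test function $\phi$ on $X$. A standard measurable selection argument identifies $\mathcal{M}$ with the set of couplings realizing the pointwise conditional inequality $\mathbb{E}[\hat B \mid \hat A] \geq \hat A$ componentwise, so the lemma reduces to showing $\mathcal{M} \neq \emptyset$ whenever $A \preceq_{cvx} B$. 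Supposing for contradiction that $\mathcal{M} = \emptyset$, one applies a Hahn-Banach separation in the space of continuous bounded functions on $X \times X$ to produce a linear functional separating $\mathcal{M}$ from the marginal constraints cutting out $\Pi(A,B)$; disentangling the contributions from the marginal constraints and from the family of martingale-type constraints should yield a candidate function $u$ on $X$ with $\int u\, dA > \int u\, dB$. The non-negativity of the multipliers associated with the test functions $\phi \geq 0$ encodes monotonicity of $u$, and the duality structure with the centering part of the constraint encodes convexity, thereby contradicting $A \preceq_{cvx} B$.

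The main obstacle will be extracting convexity of $u$ from the separating functional: monotonicity falls out naturally from the sign constraint on $\phi$, but convexity requires that the Lagrange multipliers associated with the martingale-type inequality assemble into a valid subgradient structure for $u$, which is delicate when $n > 1$. A cleaner constructive alternative, and the one taken in \cite{Shaked}, is to first establish the one-dimensional case via the classical quantile coupling and the Hardy-Littlewood-P\'olya characterization of the increasing convex order in terms of the tail functionals $t \mapsto \mathbb{E}[(\cdot - t)_+]$, and then extend to $X \subseteq \mathbb{R}^n$ by induction on dimension, conditioning on the first coordinate and invoking the scalar Strassen result together with a measurable selection to glue the coordinate-wise couplings into a single joint coupling satisfying the componentwise martingale-type inequality. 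Given that the result is invoked as a known fact from \cite{Shaked}, pursuing the constructive route rather than the Hahn-Banach route is the more natural way to proceed here.
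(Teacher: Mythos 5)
The paper does not prove this lemma at all: it is imported as a known result, citing Theorem 7.A.2 of \cite{Shaked}, which in turn rests on Strassen's 1965 theorem on the existence of measures with given marginals. Your forward direction (submartingale coupling implies $A \preceq_{cvx} B$ via conditional Jensen plus monotonicity) is correct and is the standard easy half. The issue is the reverse direction, which is the entire content of the theorem, and there your proposal has a genuine gap. The Hahn--Banach route is the right general idea, but the step you yourself flag---showing that the separating functional's multipliers assemble into a single \emph{nondecreasing convex} function $u$ on $X$, so as to contradict $A \preceq_{cvx} B$---is exactly the heart of the proof and is left unargued; as written, nothing rules out that the dual object fails to be realizable by such a $u$.

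More seriously, the ``cleaner constructive alternative'' you propose to fall back on would not work, and it is not the route taken in \cite{Shaked}. The order $\preceq_{cvx}$ here is the \emph{multivariate} increasing convex order: the test functions are convex nondecreasing functions of all $n$ coordinates jointly, and this order does not disintegrate coordinate-wise. If you condition on the first coordinate (even after coupling $A_1$ with $B_1$ by the scalar result), the hypothesis $A \preceq_{cvx} B$ gives no stochastic dominance relation whatsoever between the conditional laws of the remaining coordinates, so there is no scalar Strassen statement to invoke in the inductive step and no way to glue conditional couplings into one satisfying $\mathbb{E}[\hat{B}\mid\hat{A}] \geq \hat{A}$ componentwise. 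The quantile-coupling/Hardy--Littlewood--P\'olya argument is genuinely one-dimensional; the multivariate statement in \cite{Shaked} is obtained from Strassen's general functional-analytic theorem (i.e., essentially your first route, carried out in full), not by induction on dimension. Since the paper itself treats the lemma as a citation, the honest options are either to cite it as the paper does or to carry out the separation argument completely, including the construction of the increasing convex $u$ from the dual variables.
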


It is easy to extend the above result to convex dominance with respect to a vector $\vec v$ as defined in Definition~\ref{firstorderdef}.

\begin{lemma}[Extended Strassen's Theorem]\label{strassenv}
Let $A$ and $B$ be random vectors. Then $A \preceq_{cvx(\vec v)} B$ if and only if there exist random vectors $\hat{A}$ and $\hat{B}$, defined on the same probability space, with $\hat{A} =_{st} A$, $\hat{B} =_{st} B$, such that (almost surely):
\begin{itemize}
\item if $v_i = +1$, then $E[\hat B_i | \hat A] \ge \hat A_i$
\item if $v_i = 0$, then $E[\hat B_i | \hat A] = \hat A_i$
\item if $v_i = -1$, then $E[\hat B_i | \hat A] \le \hat A_i$
\end{itemize}
\end{lemma}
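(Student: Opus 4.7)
The forward direction is straightforward. Given a coupling $(\hat{A}, \hat{B})$ satisfying the three componentwise conditions and any bounded convex $\vec{v}$-monotone function $u$, Jensen's inequality (applied conditionally on $\hat{A}$) yields
$$ E[u(\hat{B}) \mid \hat{A}] \geq u\bigl(E[\hat{B} \mid \hat{A}]\bigr) \quad\text{a.s.}$$
Then, walking from $\hat{A}$ to $E[\hat{B} \mid \hat{A}]$ coordinate by coordinate and using $\vec{v}$-monotonicity of $u$ in each coordinate (non-decreasing where $v_i = +1$ and $E[\hat{B}_i \mid \hat{A}] \geq \hat{A}_i$, non-increasing where $v_i = -1$ and $E[\hat{B}_i \mid \hat{A}] \leq \hat{A}_i$, and unchanged where $v_i = 0$), I conclude $u(E[\hat{B}\mid\hat{A}]) \geq u(\hat{A})$ a.s.\ and take expectations.

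For the reverse direction, the plan is a two-step reduction to classical Strassen (Lemma~\ref{strassen}). First, let $I_{-} = \{i : v_i = -1\}$ and define the linear involution $\sigma : \mathbb{R}^n \to \mathbb{R}^n$ by $\sigma(x)_i = -x_i$ for $i \in I_{-}$ and $\sigma(x)_i = x_i$ otherwise. Setting $A' = \sigma(A)$, $B' = \sigma(B)$, and $\vec{v}' = |\vec{v}| \in \{0,1\}^n$, one checks that $u$ is convex $\vec{v}$-monotone iff $u \circ \sigma$ is convex $\vec{v}'$-monotone (convexity is preserved because $\sigma$ is linear, and the monotonicity flips precisely on $I_{-}$). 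Consequently, $A \preceq_{\mathrm{cvx}(\vec v)} B$ iff $A' \preceq_{\mathrm{cvx}(\vec v')} B'$, and any coupling for $(A', B')$ transports back through $\sigma$ to a coupling for $(A, B)$ with the required properties. This reduces the problem to $\vec{v} \in \{0,1\}^n$.

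The second step handles the $\vec{v} \in \{0,1\}^n$ case by embedding into a higher-dimensional space where standard Strassen applies directly. Let $I_0 = \{i : v_i = 0\}$ and define the affine map $\pi : \mathbb{R}^n \to \mathbb{R}^{n + |I_0|}$ by $\pi(x) = (x, -x_{I_0})$, duplicating the $I_0$-coordinates with a sign flip. For any convex non-decreasing $\tilde{u}$ on $\mathbb{R}^{n + |I_0|}$, the pullback $u = \tilde{u} \circ \pi$ is convex (affine composition) and is non-decreasing in $I_+$-coordinates but has no monotonicity constraint on $I_0$-coordinates (the positive dependence through the first block cancels against the negative dependence through the duplicated block), hence $u$ is convex $\vec{v}$-monotone. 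Therefore $A \preceq_{\mathrm{cvx}(\vec v)} B$ implies $\pi(A) \preceq_{\mathrm{cvx}} \pi(B)$ in the augmented space, and classical Strassen produces $\hat{A}^\ast, \hat{B}^\ast$ on the augmented space with the correct distributions and $E[\hat{B}^\ast \mid \hat{A}^\ast] \geq \hat{A}^\ast$ componentwise. Because the supports of $\pi(A)$ and $\pi(B)$ lie on a linear subspace identifying the duplicated coordinates with the negatives of the original ones, the first $n$ coordinates $\hat{A}, \hat{B}$ of $\hat{A}^\ast, \hat{B}^\ast$ generate the same $\sigma$-algebras as $\hat{A}^\ast, \hat{B}^\ast$, and the componentwise inequality on the duplicated coordinates reads $E[-\hat{B}_i \mid \hat{A}] \geq -\hat{A}_i$ for $i \in I_0$. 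Combined with the original inequality, this forces equality on $I_0$ and preserves the inequality on $I_+$, yielding exactly the coupling claimed by the lemma. The main subtlety is justifying that $\hat{A}^\ast$ indeed generates the same information as $\hat{A}$ (so that conditional expectations coincide); this follows because $\pi$ is deterministic and injective on the support.
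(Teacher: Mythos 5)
Your proof is correct in both directions, and it is worth noting that the paper itself offers no argument for this lemma at all: it simply states it as an ``easy'' extension of Strassen's theorem (Lemma~\ref{strassen}, cited from Shaked--Shanthikumar). Your forward direction (conditional Jensen plus a coordinate-by-coordinate monotonicity walk from $\hat A$ to $\mathbb{E}[\hat B\mid\hat A]$) is the natural one. Your reverse direction is a clean, genuinely worked-out reduction that the paper leaves implicit: first the sign-flip involution $\sigma$ on the $v_i=-1$ coordinates reduces to $\vec v\in\{0,1\}^n$, and then duplicating each $v_i=0$ coordinate with a negated copy encodes the equality constraint $\mathbb{E}[\hat B_i\mid\hat A]=\hat A_i$ as a pair of increasing-convex-order inequalities, so that a single application of classical Strassen in the augmented space yields the coupling; the support of $\pi(A)$, $\pi(B)$ lying on the graph of $x\mapsto -x_{I_0}$ correctly lets you recover $\hat A,\hat B$ and identify the conditioning $\sigma$-algebras. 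The only caveat is one you inherit from the paper rather than introduce: Definition~\ref{firstorderdef} literally asks for convex $\vec v$-monotone functions that are bounded on all of $\mathbb{R}^n$ (a degenerate class), so both your pullbacks $\tilde u\circ\pi$ and the lemma itself must be read with test functions defined and bounded on the (compact) type space $X$; under that reading, restricting $\tilde u\circ\pi$ to $X$ puts it squarely in the hypothesis class, and your argument goes through as written.
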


We now state a  multivariate variant of Jensen's inequality along with the necessary condition for equality to hold. The proof of this result is standard and straightforward, and thus is omitted.
\begin{lemma}[Jensen's inequality]\label{jensen}
Let $V$ be a vector-valued random variable with values in $[0,M]^n$ and let $u$ be a convex Lipschitz-continuous function mapping $[0,M]^n \rightarrow \mathbb{R}$. Then $\mathbb{E}[u(V)] \geq u(\mathbb{E}[V]).$ Furthermore, equality holds if and only if, for every $a$ in the subdifferential of $u$ at $\mathbb{E}[V]$, the equality $u(V) = a\cdot (V - \mathbb{E}[V]) + u(\mathbb{E}[V])$ holds almost surely.
\end{lemma}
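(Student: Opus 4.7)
The plan is to deduce both the inequality and the equality characterization from the single fact that a convex function lies above every one of its supporting hyperplanes. Since $u$ is convex and Lipschitz on the convex set $[0,M]^n$, at every point $x_0 \in [0,M]^n$ the subdifferential $\partial u(x_0)$ is nonempty (Lipschitzness handles boundary points by extending $u$ convexly to a neighborhood), and any $a \in \partial u(x_0)$ satisfies
$$u(x) \geq u(x_0) + a \cdot (x - x_0) \quad \text{for all } x \in [0,M]^n.$$

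First I would take $x_0 = \mathbb{E}[V]$ (which lies in $[0,M]^n$ by convexity of the cube) and choose any $a \in \partial u(\mathbb{E}[V])$. Applying the subgradient inequality pointwise with $x = V$ yields, almost surely,
$$u(V) \geq u(\mathbb{E}[V]) + a \cdot (V - \mathbb{E}[V]).$$
Integrability of $u(V)$ and of $a \cdot V$ is guaranteed by boundedness of $V$ and Lipschitzness of $u$, so taking expectations and using $\mathbb{E}[a \cdot (V - \mathbb{E}[V])] = 0$ gives $\mathbb{E}[u(V)] \geq u(\mathbb{E}[V])$, establishing the main inequality.

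For the equality characterization, set $D_a := u(V) - u(\mathbb{E}[V]) - a \cdot (V - \mathbb{E}[V])$, which is non-negative almost surely by the subgradient inequality, and note that the Jensen bound above is exactly $\mathbb{E}[D_a] \geq 0$. If $\mathbb{E}[u(V)] = u(\mathbb{E}[V])$, then $\mathbb{E}[D_a] = 0$; combined with $D_a \geq 0$ a.s.\ this forces $D_a = 0$ a.s., i.e.\ $u(V) = u(\mathbb{E}[V]) + a \cdot (V - \mathbb{E}[V])$ a.s. Crucially, this argument did not depend on which $a \in \partial u(\mathbb{E}[V])$ was chosen, so the conclusion holds for every $a$ in the subdifferential. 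Conversely, if $u(V) = u(\mathbb{E}[V]) + a \cdot (V - \mathbb{E}[V])$ a.s.\ for even one such $a$, then taking expectations immediately yields $\mathbb{E}[u(V)] = u(\mathbb{E}[V])$.

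I do not anticipate a real obstacle: the only thing to check carefully is nonemptiness of $\partial u(\mathbb{E}[V])$ when $\mathbb{E}[V]$ lies on the boundary of $[0,M]^n$, but this is standard given the Lipschitz hypothesis, which allows a convex Lipschitz extension to a neighborhood on which subgradients exist everywhere.
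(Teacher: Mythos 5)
Your proof is correct, and it is precisely the standard supporting-hyperplane argument that the paper has in mind when it omits the proof as ``standard and straightforward'': apply the subgradient inequality at $\mathbb{E}[V]$, take expectations, and read off the equality case from the a.s.\ nonnegative integrand $D_a$ having zero mean. Your handling of the only delicate point---nonemptiness of the subdifferential when $\mathbb{E}[V]$ lies on the boundary of $[0,M]^n$, via a convex Lipschitz extension---is exactly what is needed, and both directions of the equality characterization are covered (the forward direction for every $a$, the converse already from a single $a$).
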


The following lemma is a conditional variant of Lemma~\ref{jensen}, based on the multivariate conditional Jensen's inequality, as in Theorem 10.2.7 of \cite{Dudley}.  This lemma is used as a tool for Lemma~\ref{conditionallem}, the main result of this subsection.
\begin{lemma}\label{myjensen}
Let $(\Omega, \mathcal{A}, P)$ be a probability space, $V$ be a random variable on $\Omega$ with values in $X$ where $X = \prod_{i=1}^n [x^{\textrm{low}}_i,x^{\textrm{high}}_i]$, and $u : X \rightarrow \mathbb{R}$ be convex and Lipschitz continuous. Let $\mathcal{C}$ be any sub-$\sigma$-algebra of $\mathcal{A}$ and suppose that $\mathbb{E}[u(V) | \mathcal{C}] = u(\mathbb{E}[V | \mathcal{C}])$ almost-surely. Then for almost all $x \in \Omega$ the equality $u(y) = a_{y_x} \cdot (y - y_x) + u(y_x)$ holds almost surely with respect to the law \footnote{The law $P_{V| \mathcal{C}}(\cdot, x)$ allows us to express the conditional distribution of $V$ given $\mathcal{C}$} $P_{V| \mathcal{C}}(\cdot, x)$ , where $y_x$ is the expectation of the random variable with law $P_{V | \mathcal{C}}(\cdot, x)$ and $a_{y_x}$ is any subgradient of $u$ at $y_x$.
\end{lemma}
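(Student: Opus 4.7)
The plan is to reduce the conditional equality to the unconditional equality case of Jensen's inequality (Lemma~\ref{jensen}) by passing to a regular conditional distribution. Since $X = \prod_{i=1}^n [x^{\textrm{low}}_i,x^{\textrm{high}}_i]$ is a compact subset of $\mathbb{R}^n$, hence a Polish space, there exists a regular conditional probability $P_{V|\mathcal{C}}(\cdot, x)$ for $V$ given $\mathcal{C}$. I will write $\nu_x \triangleq P_{V|\mathcal{C}}(\cdot, x)$, a probability measure on $X$ for each $x \in \Omega$.

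The first step is to use the defining property of the regular conditional distribution. Since both $u$ and each coordinate map are bounded Borel functions on $X$ (recall $X$ is compact and $u$ is continuous), we have that for $P$-almost every $x$:
$$\mathbb{E}[u(V)\mid \mathcal{C}](x) \;=\; \int_X u(y)\,d\nu_x(y), \qquad \mathbb{E}[V\mid \mathcal{C}](x) \;=\; \int_X y\,d\nu_x(y) \;=\; y_x.$$
Combining these identities with the hypothesis $\mathbb{E}[u(V)\mid \mathcal{C}] = u(\mathbb{E}[V\mid \mathcal{C}])$ $P$-a.s., we conclude that for $P$-almost every $x$,
$$\int_X u(y)\,d\nu_x(y) \;=\; u\!\left(\int_X y\,d\nu_x(y)\right) \;=\; u(y_x).$$

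The second step is to apply the (unconditional) Jensen's inequality equality condition, Lemma~\ref{jensen}, separately for each such $x$. Fix any $x$ in the full-measure set where the preceding identity holds, and let $Y$ be a random vector (on an auxiliary probability space) with law $\nu_x$; then $\mathbb{E}[u(Y)] = u(\mathbb{E}[Y])$, so Jensen's inequality is tight. By the equality clause of Lemma~\ref{jensen}, for any subgradient $a_{y_x}$ of $u$ at $y_x = \mathbb{E}[Y]$, we have
$$u(Y) \;=\; a_{y_x}\cdot (Y - y_x) + u(y_x) \quad \text{almost surely.}$$
Translating back to the measure $\nu_x$ gives the equality $u(y) = a_{y_x}\cdot (y - y_x) + u(y_x)$ for $\nu_x$-almost every $y$, which is the desired conclusion.

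The only delicate point is the invocation of a regular conditional distribution, which is where Polishness of $X$ is used; once the regular version is in hand the argument is essentially a pointwise (in $x$) application of Lemma~\ref{jensen}. No uniformity in $x$ is required in the conclusion, so the $P$-null exceptional sets from the regular-conditional-distribution identities and from the hypothesis can simply be unioned. Hence I expect this proof to be short, with the measure-theoretic setup being the only nontrivial step.
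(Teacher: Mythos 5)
Your proposal is correct and follows essentially the same route as the paper: both pass to the regular conditional distribution $P_{V|\mathcal{C}}(\cdot,x)$, identify the conditional expectations with integrals against these laws (the paper cites Dudley's Theorem 10.2.7 for this, you invoke Polishness of $X$ directly), and then apply the equality case of the unconditional Jensen inequality (Lemma~\ref{jensen}) pointwise in $x$.
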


\begin{prevproof}{Lemma}{myjensen}
The proof is based on the proof of the multivariate conditional Jensen's inequality, as in Theorem 10.2.7 of \cite{Dudley}. This theorem requires $|V|$ and $u \circ V$ to be integrable, which is true in our setting. We note that the theorem applies when $u$ is defined in an open convex set, but because $u$ is Lipschitz continuous we can extend it to a function with domain an open set containing $X$.
The multivariate conditional Jensen's inequality states that, almost surely, $\mathbb{E}[V | \mathcal{C}] \in \mathcal{C}$ and
$\mathbb{E}[u(V) | \mathcal{C}] \geq u(\mathbb{E}[V | \mathcal{C}]).$ The proof of Theorem 10.2.7 in \cite{Dudley} furthermore shows that the following two equalities hold:
$$
\mathbb{E}[V | \mathcal{C}](x) = \int_{X} y P_{V | \mathcal{C}}(dy,x); \qquad
\mathbb{E}[u(V) | \mathcal{C}](x) = \int_{X} u(y) P_{V| \mathcal{C}}(dy,x).
$$
Since $\mathbb{E}[u(V) | \mathcal{C}](x) = u(\mathbb{E}[V | \mathcal{C}])(x)$ for almost all $x$, we apply the unconditional Jensen inequality (Lemma~\ref{jensen}) to the laws $P_{V | \mathcal{C}}(\cdot, x)$ to prove the lemma.
\end{prevproof}

We now present Lemma~\ref{conditionallem}. This lemma states that for random variables $A$ and $B$ with $A \preceq_{cvx} B$ if it holds that $u(A) = u(B)$ for some convex function $u$, then there exists a coupling between $A$ and $B$ with several desirable properties, including that points are only matched if $u$ shares a subgradient at these points.

\begin{lemma}\label{conditionallem}
Let $A$ and $B$ be vector random variables with values in $X$, where $X = \prod_{i=1}^n [x^{\textrm{low}}_i,x^{\textrm{high}}_i]$, such that $A \preceq_{cvx} B$. Let $u: X \rightarrow \mathbb{R}$ be 1-Lipschitz with respect to the $\ell_1$ norm, convex, and monotonically non-decreasing. Suppose that $\mathbb{E}[u(A)] = \mathbb{E}[u(B)]$ and that $g : X \rightarrow [0,1]^n$ is a measurable function such that for all $z \in X$, $g(z)$ is a subgradient of $u$ at $z$.

Then there exist random variables $\hat{A} =_{st} A$ and $\hat{B} =_{st} B$ such that, almost surely:
\begin{itemize}
	\item $u(\hat{B}) = u(\hat{A}) + g(\hat{A}) \cdot (\hat{B} - \hat{A})$
	\item $g(\hat{A})$ is a subgradient of $u$ at $\hat{B}$.
	\item $\mathbb{E}[\hat{B}|\hat{A}]$ is componentwise greater or equal to $\hat{A}$
	\item $u(\mathbb{E}[\hat{B} | \hat{A}]) = u(\hat{A})$.
\end{itemize}
\end{lemma}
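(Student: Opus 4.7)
\textbf{Proof plan for Lemma~\ref{conditionallem}.}

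The plan is to start from Strassen's theorem (Lemma~\ref{strassen}) to produce a coupling with the correct conditional expectation structure, and then bootstrap the hypothesis $\mathbb{E}[u(A)]=\mathbb{E}[u(B)]$ via conditional Jensen (Lemma~\ref{myjensen}) to upgrade it to the two pointwise identities involving $u$ and the subgradient $g$. Concretely, since $A\preceq_{cvx}B$, Strassen produces $\hat A=_{st}A$ and $\hat B=_{st}B$ on a common probability space with $\mathbb{E}[\hat B\mid \hat A]\ge \hat A$ componentwise, which is exactly the third bullet. Writing $Y:=\mathbb{E}[\hat B\mid \hat A]$, monotonicity of $u$ gives $u(Y)\ge u(\hat A)$ a.s., while conditional Jensen gives $\mathbb{E}[u(\hat B)\mid \hat A]\ge u(Y)$ a.s. Taking total expectations and using $\mathbb{E}[u(\hat A)]=\mathbb{E}[u(A)]=\mathbb{E}[u(B)]=\mathbb{E}[u(\hat B)]$ forces both inequalities to be a.s. equalities, yielding the fourth bullet $u(Y)=u(\hat A)$ and the equality case of Jensen $\mathbb{E}[u(\hat B)\mid \hat A]=u(Y)$.

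Next I would show that $g(\hat A)$ is itself a subgradient of $u$ at $Y$, so that I may invoke Lemma~\ref{myjensen} with $a_{y_{\hat A}}:=g(\hat A)$. Because $u$ is non-decreasing, subgradients of $u$ are coordinatewise non-negative, and $Y-\hat A\ge 0$ componentwise, so $g(\hat A)\cdot(Y-\hat A)\ge 0$. On the other hand, the subgradient inequality at $\hat A$ applied at the point $Y$ gives $u(Y)\ge u(\hat A)+g(\hat A)\cdot(Y-\hat A)$, and combined with $u(Y)=u(\hat A)$ this forces $g(\hat A)\cdot(Y-\hat A)=0$. For any $z\in X$, then,
\[
u(z)\ \ge\ u(\hat A)+g(\hat A)\cdot(z-\hat A)\ =\ u(Y)+g(\hat A)\cdot(z-Y),
\]
which exhibits $g(\hat A)$ as a subgradient of $u$ at $Y$, as desired.

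Now I would apply Lemma~\ref{myjensen} with the subgradient selection $a_{Y}=g(\hat A)$: since the hypothesis $\mathbb{E}[u(\hat B)\mid \hat A]=u(Y)$ has been established, the lemma yields that, almost surely with respect to the conditional law of $\hat B$ given $\hat A$,
\[
u(\hat B)\ =\ u(Y)+g(\hat A)\cdot(\hat B-Y).
\]
Substituting $u(Y)=u(\hat A)$ and $g(\hat A)\cdot(Y-\hat A)=0$ collapses the right-hand side to $u(\hat A)+g(\hat A)\cdot(\hat B-\hat A)$, giving the first bullet. The second bullet then follows immediately: for any $z\in X$, the subgradient inequality at $\hat A$ combined with the first bullet gives
\[
u(z)\ \ge\ u(\hat A)+g(\hat A)\cdot(z-\hat A)\ =\ u(\hat B)+g(\hat A)\cdot(z-\hat B),
\]
so $g(\hat A)\in\partial u(\hat B)$ a.s.

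The main technical obstacle is the step that identifies $g(\hat A)$ as a subgradient of $u$ at $Y$; this requires combining the monotonicity of $u$, the non-negativity of its subgradients, and the equality $u(Y)=u(\hat A)$ in just the right way to conclude $g(\hat A)\cdot(Y-\hat A)=0$. Once that identification is available, Lemma~\ref{myjensen} does the remaining heavy lifting, and the other bullets drop out by straightforward manipulation.
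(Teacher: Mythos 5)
Your proposal is correct and follows essentially the same route as the paper's own proof: Strassen's theorem for the coupling, the monotonicity/conditional-Jensen sandwich forced to equality by $\mathbb{E}[u(A)]=\mathbb{E}[u(B)]$, the observation that non-negativity of subgradients of a non-decreasing $u$ plus $u(\mathbb{E}[\hat B\mid\hat A])=u(\hat A)$ yields $g(\hat A)\cdot(\mathbb{E}[\hat B\mid\hat A]-\hat A)=0$ and hence that $g(\hat A)$ is a subgradient at the conditional mean, and finally the conditional Jensen equality case (Lemma~\ref{myjensen}) to get the pointwise linear identity and the subgradient property at $\hat B$. No gaps; the argument matches the paper's step for step.
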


\begin{prevproof}{Lemma}{conditionallem}
By Lemma~\ref{strassen}, there exist random variables $\hat{A} =_{st} A$ and $\hat{B} =_{st} B$ such that $\mathbb{E}[\hat{B}|\hat{A}]$ is componentwise greater than or equal to $\hat{A}$ almost surely. We have
$$
0 = \mathbb{E}[u(\hat{B}) - u(\hat{A}) ]\geq \mathbb{E}[u(\hat{B}) - u(\mathbb{E}[\hat{B}|\hat{A}])]
= \mathbb{E}[\mathbb{E}[u(\hat{B})|\hat{A}] - u(\mathbb{E}[\hat{B}|\hat{A}])] \geq 0
$$
and therefore
$\mathbb{E}[\mathbb{E}[u(\hat{B})|\hat{A}]] = \mathbb{E}[u(\mathbb{E}[\hat{B}|\hat{A}])] = \mathbb{E}[u(\hat{B})] = \mathbb{E}[u(\hat{A})].$

Since $u$ is monotonic, $u(\hat{A}) \leq u(\mathbb{E}[\hat{B}|\hat{A}])$ almost surely. Since $\mathbb{E}[u(\hat{A})] = \mathbb{E}[u(\mathbb{E}[\hat{B}|\hat{A}])]]$, it follows that $u(\hat{A}) = u(\mathbb{E}[\hat{B}|\hat{A}])$ almost surely.

Select any collection of random variables $\{\hat{B}|_{\hat{A} =x}\}$ corresponding to the laws $P_{\hat{B}|\hat{A}}(\cdot, x)$.
For almost all values $x$ of $\hat{A}$, $\mathbb{E}[\hat{B}|_{\hat{A} = x}]$ is componentwise greater than $x$ and $u(x) = u(\mathbb{E}[\hat{B}|_{\hat{A} = x}])$. We claim now that any subgradient $a_x$ of $u$ at $x$ is also a subgradient of $u$ at $\mathbb{E}[\hat{B}|_{\hat{A} = x}]$. Indeed, choose such a subgradient $a_x$. We compute
$$
	u(\mathbb{E}[\hat{B}|_{\hat{A} = x}]) \geq u(x) + a_x \cdot (\mathbb{E}[\hat{B}|_{\hat{A} = x}] - x) = u(\mathbb{E}[\hat{B}|_{\hat{A} = x}]) + a_x \cdot (\mathbb{E}[\hat{B}|_{\hat{A} = x}]-x)
$$
and therefore $a_x \cdot \mathbb{E}[\hat{B}|_{\hat{A} = x}] = a_x \cdot x$, by non-negativity of the subgradient. Furthermore, for any point $z \in X$,
\begin{align*}
u(z) &\geq u(x) + a_x \cdot (z - x) = u(\mathbb{E}[\hat{B}|_{\hat{A} = x}]) + a_x \cdot (z - x)\\
&= u(\mathbb{E}[\hat{B}|_{\hat{A} = x}]) + a_x \cdot (z - \mathbb{E}[\hat{B}|_{\hat{A} = x}])
\end{align*} 
and thus $a_x$ is a subgradient of $u$ at $\mathbb{E}[\hat{B}|_{\hat{A} = x}]$.

Since $\mathbb{E}[\mathbb{E}[u(\hat{B})|\hat{A}]] = \mathbb{E}[u(\mathbb{E}[\hat{B}|\hat{A}])]$, by Jensen's inequality it follows that $\mathbb{E}[u(\hat{B})|\hat{A}] =  u(\mathbb{E}[\hat{B}|\hat{A}])$ almost surely. By Lemma~\ref{myjensen}, it therefore holds for almost all values $x$ of $\hat{A}$ that the equality 
\begin{align*}
u(y) &= a_{x} \cdot (y - \mathbb{E}[\hat{B}|_{\hat{A} =x}]) + u(\mathbb{E}[\hat{B}|_{\hat{A} =x}]) = a_x \cdot (y - x) + u(\mathbb{E}[\hat{B}|_{\hat{A} =x}])\\
&= a_x \cdot (y-x) + u(x)
\end{align*}
 holds $\hat{B}|_{\hat{A} =x}$ almost surely.
 
Lastly, we will show that, almost surely, $a_x$ is a subgradient of $u$ at $\hat{B}|_{\hat{A} = x}$. Indeed, for any $p \in X$, and almost all values of $x$ we have
\begin{align*}
u(p) &\geq u(x) + a_x \cdot (p - x) = u(x) + a_x \cdot (\hat{B}|_{\hat{A} = x}  - x) + a_x \cdot(p - \hat{B}|_{\hat{A} = x})\\
&= u(\hat{B}|_{\hat{A} = x}) + a_x \cdot(p - \hat{B}|_{\hat{A} = x}).
\end{align*}
\end{prevproof}

\subsection{Proof of the Optimal Menu Theorem (Theorem~\ref{bundlingtheorem})}

To prove the equivalence we prove both implications of the theorem separately.

\subsubsection{Sufficiency Conditions}

We will show that the Optimal Menu Conditions of Definition~\ref{optconditions} imply that a mechanism $\cM$ is optimal. To show the theorem, we construct a measure $\gamma$ such that the conditions of Corollary~\ref{linearintegral} are satisfied. We will construct this measure separately for every region that corresponds to a menu choice of mechanism $\cM$. 

Consider a menu choice $(p,t) \in \textrm{Menu}_{\cM}$, the corresponding region $R$ and the corresponding vector $\vec v$ as in Definition~\ref{optconditions}. Let $A$ and $B$ be random vectors distributed according to the (normalized) measures $\mu_+|R$ and $\mu_-|R$.
From the Optimal Menu Conditions, we have that $A|_{R} \preceq_{cvx(\vec v)} B|_{R}$ (almost surely). By the extended version of Strassen's theorem (Lemma~\ref{strassenv}), it holds that there exist random vectors $\hat A, \hat B$ with $\hat A =_{st} A|_R$ and $\hat B =_{st} B|_R$, such that (almost surely):
\begin{itemize}
\item if $v_i = +1$, then $E[\hat B_i | \hat A] \ge \hat A_i$
\item if $v_i = 0$, then $E[\hat B_i | \hat A] = \hat A_i$
\item if $v_i = -1$, then $E[\hat B_i | \hat A] \le \hat A_i$
\end{itemize}

Now define the random variable $\hat C = \min ( E[\hat B | \hat A] , \hat A)$ where we take the coordinate-wise minimum. We now have that (almost surely):
\begin{itemize}
\item if $v_i = +1$, then $E[\hat B_i | \hat A] \ge \hat A_i = \hat C_i$
\item if $v_i = 0$, then $E[\hat B_i | \hat A] = \hat A_i  = \hat C_i$
\item if $v_i = -1$, then $\hat C_i = E[\hat B_i | \hat A] \le \hat A_i$
\end{itemize}

Let $\gamma_R$ be the measure according to which the vector $(\hat A, \hat C)$ is distributed. 
By construction, 
$\gamma_{R1} = \mu_+|_R$ and 
$\gamma_{R2} \preceq_{cvx} \mu_-|_R$, and thus $\gamma_{R1} - \gamma_{R2} \succeq_{cvx} \mu|_R$. 
Moreover, the conditions of Corollary~\ref{linearintegral} are satisfied:
\begin{itemize}
	\item $u(x) - u(y) = \Lone{x-y}$, is satisfied $\gamma_R(x,y)$-almost surely since $\hat A$ is larger than $\hat C$ only in coordinates for which $v_i = -1$ and thus $p_i = 1$.
	\item $\int u d(\gamma_{R1} - \gamma_{R2}) = \int u d(\mu_+|_R - \mu_-|_R)$ is satisfied: By definition we have that $\int u d\gamma_{R1} = \int u d\mu_+|_R$. Moreover, we can also show that $\int u d\gamma_{R2} = \int u d\mu_-|_R$ by noting that
$\int u d\mu_-|_R = \mu_-(R) E[ u(\hat B) ] = \mu_-(R) E[ p \cdot \hat B - t ]
= \mu_-(R)  E[ p \cdot E[\hat B | \hat A] - t ]$ and that 
$ \mu_-(R)  E[ p \cdot E[\hat B | \hat A] - t ]$ is equal to $\mu_-(R)  E[ p \cdot \hat C - t ] = \int u d\gamma_{R2}$ since $\hat C_i \neq E[\hat B_i | \hat A]$ only when $E[\hat B_i | \hat A] $ is strictly larger than $\hat A_i$ which only happens only in coordinates $i$ where $v_i = +1$ and thus $p_i = 0$.
\end{itemize}
This completes the proof that the Optimal Menu Conditions imply optimality of the mechanism since we can construct a feasible measure $\gamma$ satisfying the conditions of Corollary~\ref{linearintegral} by considering the sum of the constructed measures for each region.

\subsubsection{Optimality implies Stochastic Conditions}

We will now prove the other direction of the result. Consider an optimal mechanism $\cM = (\cP,\cT)$ with a finite menu size over type space $X = \prod_{i=1}^n [x^{\textrm{low}}_i,x^{\textrm{high}}_i]$. Since $\cM$ is given in essential form, in the menu of $\cM$ there is no dominated option. So for all options on the menu there is a set of buyer types that strictly prefer it from any other option, and that set of types occurs with positive probability. 

Now, define the set $Z = \{x \in X : p \cdot x - t = \cP(x) \cdot x - \cT(x) \textrm{ for } (p,t) \in \textrm{Menu}_{\cM} \textrm{ with }  (p,t) \neq (\cP(x),\cT(x)) \}$. This is the set of types where there is no single option that is the best and it is where the utility function of the mechanism is not differentiable. We show the following lemma.

\begin{lemma}
  \label{lem:zero-mass}
$\mu_-(Z) = 0$
\end{lemma}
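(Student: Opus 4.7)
The plan is to decompose $Z$ as a finite union of indifference hyperplanes and to identify the pieces on which $\mu_-$ can live, then show that no hyperplane contributes. Since $\cM$ has a finite menu, $Z$ can be written as
$$Z = \bigcup_{(p,t)\neq (p',t')} X \cap \{x \in \bR^n : (p-p')\cdot x = t-t'\},$$
where the union is over distinct unordered pairs of menu options. Under essential form, two options cannot share the same allocation vector $p$ (otherwise the more expensive one would be strictly dominated everywhere and chosen by no type), so $p \neq p'$ in every pair, and each set in the union is the intersection of $X$ with a genuine affine hyperplane of $\bR^n$.

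By the definition~(\ref{transformed}), $\mu$ splits into three pieces: a $+1$ point mass at $x^{\textrm{low}}$ (contributing only to $\mu_+$), an $n$-dimensional absolutely continuous part on $\mathrm{int}(X)$ with density bounded by $|\nabla f(x)\cdot x + (n+1)f(x)|$, and an $(n-1)$-dimensional boundary part with density $f(x)(x\cdot \hat n)$ on $\partial X$. On each upper face $\{x_i = x_i^{\textrm{high}}\}$ this boundary density is $f(x)x_i^{\textrm{high}} \geq 0$ and contributes only to $\mu_+$; hence $\mu_-$ is supported in $\mathrm{int}(X)$ and on the lower faces $F_i := \{x \in X : x_i = x_i^{\textrm{low}}\}$ (the latter only when $x_i^{\textrm{low}}>0$), and is absolutely continuous with respect to $n$-dimensional Lebesgue measure on $\mathrm{int}(X)$ together with $(n-1)$-dimensional surface measure on each $F_i$.

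Every hyperplane of $\bR^n$ has vanishing $n$-dimensional Lebesgue measure, so $\mu_-(Z\cap\mathrm{int}(X)) = 0$ at once. For a fixed lower face $F_i$, the intersection of a hyperplane $\{(p-p')\cdot x = t-t'\}$ with $F_i$ has vanishing $(n-1)$-dimensional surface measure on $F_i$ \emph{unless} the hyperplane coincides with the affine hull of $F_i$, which happens exactly when $p-p' = \lambda e_i$ and $t-t' = \lambda x_i^{\textrm{low}}$ for some $\lambda \neq 0$. The key step is ruling out this coincidence using essential form: if (WLOG) $\lambda > 0$, then for every $x \in X$,
$$(p'\cdot x - t') - (p\cdot x - t) = -\lambda(x_i - x_i^{\textrm{low}}) \leq 0,$$
with equality only on $F_i$. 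Thus option $(p',t')$ is weakly dominated by $(p,t)$ on $X$ and strictly dominated off $F_i$, so the set of types assigned $(p',t')$ by $\cM$ is contained in $F_i$, which has Lebesgue (and hence $f$-) measure zero, contradicting essential form. Therefore no indifference hyperplane coincides with any lower face, giving $\mu_-(Z\cap F_i) = 0$ for each $i$. Combining with the interior bound yields $\mu_-(Z) = 0$.

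The main subtle point I expect is exactly the degenerate coincidence handled above: if an indifference hyperplane were aligned with a lower face, then $\mu_-$ could load onto $Z$ in a non-trivial way through the $(n-1)$-dimensional boundary density. The essential form hypothesis is precisely what rules out this pathology.
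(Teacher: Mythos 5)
Your proof is correct and follows essentially the same route as the paper's: bound $Z$ by the finite union of pairwise indifference hyperplanes, observe that $\mu_-$ lives only on the interior (absolutely continuous) and the lower faces, and rule out a hyperplane coinciding with a lower face because that would force one menu option to be (weakly) dominated and hence chosen with probability zero, contradicting essential form. Your handling of the degenerate face case is in fact slightly more explicit than the paper's about why domination contradicts essential form, but it is the same argument.
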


\begin{proof}
 Note that, by its construction, $\mu_-$ assigns zero mass to any $k$-dimensional surface for $k\le n-2$. Moreover, it only assigns mass to $(n-1)$-dimensional surfaces which lie along the boundary of $X$.
  
  Every pair of distinct choices $(p,t), (p',t') \in \textrm{Menu}_{\cM}$ defines a hyperplane $p \cdot x - t = p' \cdot x - t'$ containing the types who derive the same utility from these two choices. As the menu is finite, there exist a finite number of such pairs, hence a finite number of hyperplanes. The set $Z$ contains a subset of types in the finite union of these hyperplanes, so $\mu_-$ assigns no mass to the subset of $Z$ which lies on the interior of $X$.

Regarding the $\mu_-$-measure of $Z$ on the boundaries, notice that the intersection of each of the aforementioned hyperplanes $p \cdot x - t = p' \cdot x - t'$ with each boundary $x_i = x_i^{\textrm{low}}$ is $(n-2)$-dimensional, unless the hyperplane coincides with $x_i = x_i^{\textrm{low}}$. If it is $(n-2)$-dimensional then its measure under $\mu_-$ is $0$. Otherwise, it must be that $p_j=p_j'$, for all $j \neq i$, and $p_i \neq p'_i$; say $p_i > p_i'$ without loss of generality . This implies that $(p,t)$ must dominate $(p',t')$, for all types $x \in X$. This contradicts our assumption that no menu choices are dominated.
\end{proof}

Let $u$ be the utility function of the optimal mechanism $\cM = (\cP,\cT)$ and $\gamma$ be the optimal measure of Theorem~\ref{strongduality}. Then, $\gamma$ satisfies the properties of Corollary~\ref{linearintegral}. In particular, it holds that:
\begin{enumerate}[leftmargin=*]
\item 
\begin{equation} \label{item:expectation condition}
  \int u d(\gamma_1 + \mu_-) = \int u d(\mu_+ + \gamma_2)\end{equation}
\item $u(x) - u(y) = \Lone{x-y}$, $\gamma(x,y)$ almost surely. Since this can happen only if $x$ is coordinate-wise greater than $y$, it holds (almost surely with respect to $\gamma$) that $\Lone{x-y} = \sum_i x_i - \sum_i y_i$ which implies that (almost surely) $u(x) - \sum_i x_i = u(y) - \sum_i y_i$ and thus
\begin{align}\label{eq:u-minus-l1}
  \int (u(x) - \sum_i x_i) d\gamma_1 = \int (u(y) - \sum_i y_i) d\gamma_2
\end{align}
Moreover, again since  $x$ is coordinate-wise greater than $y$ almost surely with respect to $\gamma$, it follows that $\gamma_2 \succeq_{cvx(-\vec 1)} \gamma_1$.
\end{enumerate}

We are now ready to use Lemma~\ref{conditionallem} which follows from Jensen's inequality. We will apply it in two different steps, which we will then combine to show that $\mu_+ |_{R} \preceq_{cvx(\vec v)} \mu_- |_{R}.$

\begin{itemize}[label={},leftmargin=0pt]
\item {\bf Step (ia):} We will first apply Lemma~\ref{conditionallem} to random variables $A,B$ distributed according to the measures $ \gamma_2 + \mu_+$ and $\gamma_1 + \mu_-$ respectively. Since $\mu_+ - \mu_- \preceq_{cvx} \gamma_1 - \gamma_2$, by the feasibility of $\gamma$, we have that $A \preceq_{cvx} B$. Moreover, $\mathbb{E}[u(A)] = \mathbb{E}[u(B)]$, from Equation~(\ref{item:expectation condition}) above, and $u$ is convex and non-decreasing, from the feasibility of $u$.

To apply Lemma~\ref{conditionallem}, we choose the function $g(x)$, which is a subgradient functions of u, as follows:
\begin{itemize}
\item For all $x \in X \setminus Z$ the best choice from the menu of $\cM$ is unique, hence the subgradient of  $u$ is uniquely defined. For all such $x$, we set $g(x) = \cP(x)$.
\item For all other $x$, $u$ has a continuum of different subgradients at $x$. In particular, any vector in the convex hull of $\{ p: p \cdot x - t = u(x), (p,t) \in \textrm{Menu}_{\cM}\}$ is a valid subgradient. Thus, we can always choose $g(x)$ to equal a vector of probabilities that doesn't appear as an allocation of any choice in menu $\cM$.
\end{itemize}

{\bf Step (ib):} it follows from Lemma~\ref{conditionallem} that there exist random variables $\hat{A} =_{st} A$ and $\hat{B} =_{st} B$ such that, almost surely, $g(\hat{A})$ is a subgradient of $u$ at $\hat{B}$. Fixing some $(p,t) \in \textrm{Menu}_{\cM}$ and its corresponding region $R = \{x: p = \cP(x)\}$, we denote by ${\rm cl}(R) = R \cup \partial R$ the closure of $R$ and by ${\rm int}(R) = {\rm cl}(R) \setminus Z$ the set of types which strictly prefer $(p,t)$ to any other option in the menu. Note in particular that ${\rm int}(R)$ may contain points on the boundary of $X$. With this notation, we have that almost surely:
\begin{align}
  \hat{B} \in {\rm int}(R) \implies \hat{A} \in {\rm int}(R); \label{eq:implication from B to A}\\
  \hat{A} \in {\rm int}(R) \implies \hat{B} \in {\rm cl}(R). \label{eq:implication from A to B}
\end{align}

This is because, from Lemma~\ref{conditionallem}, we know that $g(\hat{A})$ is a subgradient of $u$ at $\hat{B}$ almost surely, and we know by definition of ${\rm int}(R)$ that the subgradient is unique whenever $\hat{B} \in {\rm int}(R)$. Thus, it holds almost surely that whenever $\hat{B} \in {\rm int}(R)$ we have $g(\hat{A}) = g(\hat{B})$. Since $g$ is chosen to have differing values on ${\rm int}(R)$ and on $Z$, it follows that whenever $\hat{B} \in {\rm int}(R)$, $\hat{A} \in int(R)$ almost surely. The implication $\hat{A} \in {\rm int}(R) \implies \hat{B} \in {\rm cl}(R)$ follows from the fact that the subgradient at any point $x \in {\rm int}(R)$ can only serve as a subgradient for points $y \in {\rm cl}(R)$.

From Lemma~\ref{conditionallem}, we also have that $u(\mathbb{E}[\hat{B}|\hat{A}]) = u(\hat{A})$ almost surely. It follows that, almost surely,
$$u(\mathbb{E}[\hat{B}|\hat{A}]) \cdot \mathbb{I}_{\hat{A} \in {\rm int}(R)} = u(\hat{A}) \cdot \mathbb{I}_{\hat{A} \in {\rm int}(R)}$$
Given~\eqref{eq:implication from A to B} and since $u$ is linear restricted to ${\rm cl}(R)$, it follows that the left hand side equals:
$$\mathbb{E}[u(\hat{B})|\hat{A}] \cdot \mathbb{I}_{\hat{A} \in {\rm int}(R)}$$
We also have from Lemma~\ref{conditionallem} that, almost surely, it holds componentwise
\begin{align}
  \mathbb{E}[\hat{B}|\hat{A}] \ge \hat{A}. \label{eq: Strassen condition 1}
\end{align}
The above imply that, almost surely:
\begin{align}
  p_i >0 \implies \mathbb{E}[\hat{B}_i|\hat{A}] \cdot \mathbb{I}_{\hat{A} \in {\rm int}(R)} = \hat{A}_i \cdot \mathbb{I}_{\hat{A} \in {\rm int}(R)} \label{eq: Strassen condition}
  \end{align}
as otherwise we cannot have $\mathbb{E}[u(\hat{B})|\hat{A}] \cdot \mathbb{I}_{\hat{A} \in {\rm int}(R)} = u(\hat{A}) \cdot \mathbb{I}_{\hat{A} \in {\rm int}(R)}$, given that $u$ is linear and non-decreasing in ${\rm cl}(R)$.

Equations~\eqref{eq: Strassen condition 1}, \eqref{eq: Strassen condition} and Lemma~\ref{strassenv} imply that
\begin{align}
  \hat{A} \cdot \mathbb{I}_{\hat{A} \in {\rm int}(R)} \preceq_{cvx(\vec v)} \hat{B} \cdot \mathbb{I}_{\hat{A} \in  {\rm int}(R)} \label{eq: region stochastic dom}
  \end{align}
for the $\vec v$ defined in Definition~\ref{optconditions} for the menu choice $(p,t)$. Note that:
\begin{align*}
\hat{B} \cdot \mathbb{I}_{\hat{A} \in  {\rm int}(R)} &= 
\hat{B} \cdot \mathbb{I}_{\hat{A}, \hat{B} \in  {\rm int}(R)} + 
\hat{B} \cdot  \mathbb{I}_{ \hat{A} \in {\rm int}(R) \wedge \hat{B} \notin  {\rm int}(R) }\\
&= \hat{B} \cdot \mathbb{I}_{\hat{B} \in  {\rm int}(R)} + 
\hat{B} \cdot  \mathbb{I}_{ \hat{A} \in {\rm int}(R) \wedge \hat{B} \notin  {\rm int}(R) }
\end{align*}
where for the second equality we used~\eqref{eq:implication from B to A}. Hence, \eqref{eq: region stochastic dom} implies:
\begin{align}\gamma_2 |_{{\rm int}(R)} + \mu_+ |_{{\rm int}(R)} \preceq_{cvx(\vec v)} \mu_- |_{{\rm int}(R)} + \gamma_1|_{{\rm int}(R)} + \xi_R \label{eq: stochastic dom wow}
\end{align}
where $\xi_R$ is the non-negative measure corresponding to $\hat{B} \cdot  \mathbb{I}_{ \hat{A} \in {\rm int}(R) \wedge \hat{B} \notin  {\rm int}(R) }$ (scaled back appropriately by $\mu_+(X) = \mu_-(X)$).

\item {\bf Step (iia):} We will now apply a flipped version of Lemma~\ref{conditionallem}, for convex non-increasing functions,\footnote{It is easy to verify that the guarantees of the lemma remain the same except the third guarantee changes to ``componentwise smaller than.''} to the convex function $u(x) - \sum_i x_i$.\footnote{Notice that the partial derivatives are non-positive.}
We set random variables $A',B'$ distributed according to the measures $\gamma_1$ and $\gamma_2$.
Since $\gamma_2 \succeq_{cvx(-\vec 1)} \gamma_1$, we have that $B' \succeq_{cvx(-\vec 1)} A'$. Moreover, $\mathbb{E}[u(A') - \sum_i A_i'] = \mathbb{E}[u(B') - \sum_i B_i']$ from Equation~(\ref{eq:u-minus-l1}) shown above.

We choose the function $g(x) - \vec 1$ as the subgradient of $u(x) - \sum_i x_i$.

\smallskip {\bf Step (iib):} Fixing any region $R$ and the corresponding ${\rm int}(R)$, ${\rm cl}(R)$ and $\vec{v}$ as above, we mirror the arguments of Step (i). Now, the version of Lemma~\ref{conditionallem} for non-increasing functions implies that there exist random variables $\hat{A}' =_{st} A'$ and $\hat{B}' =_{st} B'$ such that, almost surely:
\begin{align}
  &\mathbb{E}[\hat{B}'|\hat{A}'] \le \hat{A}'; \label{eq: Strassen condition 1'}\\
  &p_i < 1 \implies \mathbb{E}[\hat{B}_i'|\hat{A}'] \cdot \mathbb{I}_{\hat{A}' \in {\rm int}(R)} = \hat{A}_i' \cdot \mathbb{I}_{\hat{A}' \in {\rm int}(R)}. \label{eq: Strassen condition'}
\end{align}
Equations~\eqref{eq: Strassen condition 1'}, \eqref{eq: Strassen condition'} and Lemma~\ref{strassenv} imply that
\begin{align}
  \hat{A}' \cdot \mathbb{I}_{\hat{A}' \in {\rm int}(R)} \preceq_{cvx(\vec v)} \hat{B}' \cdot \mathbb{I}_{\hat{A}' \in  {\rm int}(R)} \label{eq: region stochastic dom'}
  \end{align}
and, hence,
\begin{align}
  \gamma_1 |_{{\rm int}(R)}  \preceq_{cvx(\vec v)}  \gamma_2|_{{\rm int}(R)} + \xi_R', \label{eq: stochastic dom wow2}
\end{align}
where similarly to our derivation above $\xi_R'$ is the non-negative measure corresponding to $\hat{B}' \cdot  \mathbb{I}_{ \hat{A}' \in {\rm int}(R) \wedge \hat{B}' \notin  {\rm int}(R) }$.
\end{itemize}

\noindent We now combine the results of Steps (i) and (ii) to finish the proof. Combining~\eqref{eq: stochastic dom wow} and~\eqref{eq: stochastic dom wow2}, we get that:
\begin{align} \mu_+ |_{{\rm int}(R)} \preceq_{cvx(\vec v)} \mu_- |_{{\rm int}(R)}  + \xi_R + \xi_R'. \label{eq:sochastic dom wow final}
\end{align}
From Proposition~\ref{prop:equal-mass}, it must hold that
\begin{align*} \mu_+ |_{{\rm int}(R)} (X) = \mu_- |_{{\rm int}(R)}(X)  + \xi_R(X) + \xi_R'(X). 
\end{align*}
Summing over all regions and noticing that $\sum_R \mu_- |_{{\rm int}(R)}(X) = \mu_-(X)$, from Lemma~\ref{lem:zero-mass}, we get that
\begin{align*} \mu_+(X)-\mu_+(Z) = \mu_-(X) + \sum_R(\xi_R (X) + \xi_R'(X)). 
\end{align*}
But $\mu_+(X) = \mu_-(X)$, hence $\mu_+(Z) = \sum_R(\xi_R (X) + \xi_R'(X)) = 0$, as all of $\mu_+$, $\xi_R$ and $\xi_R'$ are non-negative. Therefore, we can rewrite the property~\eqref{eq:sochastic dom wow final} as:
$$\mu_+ |_{R} \preceq_{cvx(\vec v)} \mu_- |_{R}.$$

 \section{Missing Proofs of Section~\ref{bundlingsection} - Theorem~\ref{nuniform}}
\label{hypercubeappendix}

In this appendix we complete the proof of Theorem~\ref{nuniform}.

\begin{prevproof}{Lemma}{matching}
We define the mapping $\varphi: A \rightarrow B$ by $\varphi(x) = y$, where
$$y_1 = \left[1- \rho \left(1 - (1-x_n)^{n-1}\right) \right]^{1/(n-1)}; \qquad y_i = \frac{x_i-x_n}{1-x_n} \cdot y_1   \,\,\, \text{ for $i>1$ }.$$
We first claim that $\varphi$ is a bijection. As $x_n$ ranges from $0$ to $1-   \left(\frac{\rho-1}{\rho} \right)^{1/(n-1)}$, we see that $y_1$ ranges from 1 to 0, and thus there is a bijection between valid $y_1$ values and valid $x_n$ values. Furthermore, for any fixed $y_1$ and $x_n$, there is a bijection between $x_i$ and $y_i$ for $i = 2,\ldots, n-1$. (By varying $x_i$ between $x_n$ and $1$ we can achieve all values of $y_i$ between 0 and $y_1$.) Furthermore, for any fixed $y_1$ and $x_n$ the mapping from $x_i$ to $y_i$ is an increasing function of $x_i$, and therefore for all $x \in A$ we have $y_1 \in [0,1]$ and $y_1 \geq y_2 \geq \cdots \geq y_n = 0$. Thus, $\varphi$ is a bijection between $A$ and $B$.
Next, we claim that for any $x \in A$, it holds that $x$ is componentwise at least as large as $\varphi(x)$. Since $x_1 = 1$, it trivially holds that $x_1 \geq \varphi_1(x)$. Fix a value of $x_n$ (and hence of $y_1$), and consider the bijection $g: [x_n, 1] \rightarrow [0, y_1]$ given by $g(z) = y_1(z-x_n)/(1-x_n)$. We must show that $z - g(z) \geq 0$ for all $z \in [x_n, 1]$. This follows from noticing that $z-g(z)$ is a linear function of $z$ and both $x_n - g(x_n) = x_n$ and $1 - g(1) = 1 - y_1$ are nonnegative. 

We now show that $\varphi$ scales surface measure of every measurable $S \subset A$ by a factor of $1/\rho$. Instead of directly analyzing surface measures, it suffices to prove that the function $\varphi' : W \rightarrow W$ scales volumes by $\rho$, where $W \subset \mathbb{R}^{n-1}$ is the set $\{w : 1 \geq w_1 \geq \cdots \geq w_{n-1} \geq 0 \}$ and $\varphi'(w)$ drops the last (constant) coordinate of $\varphi(1,w_1,\ldots,w_{n-1})$ and then (for notational convenience) permutes the first coordinate to the end.  That is,
$$\varphi'(w_1,\ldots, w_{n-1}) = \left( \frac{w_1 - w_{n-1}}{1-w_{n-1}}  z(w_{n-1}), \ldots,    \frac{w_{n-2} - w_{n-1}}{1-w_{n-1}}  z(w_{n-1}) ,   z(w_{n-1})  \right)$$
where $z(w_{n-1}) = \left[1- \rho \left(1 - (1-w_{n-1})^{n-1}\right) \right]^{1/(n-1)}$.

We now analyze the determinant of the Jacobian matrix $J$ of $\varphi'$. We notice that the only non-zero entries of $J$ are the diagonals and the rightmost column. In particular, $J$ is upper triangular, and therefore its determinant is the product of its diagonal entries. We therefore compute
\begin{align*}
\hspace{-0.7in} det(J) &= \left(\frac{z(w_{n-1})}{1-w_{n-1}}\right)^{n-2} \cdot \frac{\partial}{\partial w_{n-1}}\left[1- \rho \left(1 - (1-w_{n-1})^{n-1}\right) \right]^{1/(n-1)}\\
&= \left(\frac{z(w_{n-1})}{1-w_{n-1}}\right)^{n-2} \cdot \frac{-1}{n-1}\left(z(w_{n-1})^{-(n-2)}\cdot \rho\cdot (n-1)(1-w_{n-1})^{n-2} \right)= -\rho
\end{align*}
as desired.

Lastly, suppose $y_1 \leq \epsilon$. Then $ \left[1- \rho \left(1 - (1-x_n)^{n-1}\right) \right]^{1/(n-1)} \leq \epsilon$
 and thus
 $x_n \geq 1- \left( \frac{\epsilon^{n-1} +\rho-1}{\rho}  \right)^{1/(n-1)}.$
\end{prevproof}

\begin{prevproof}{Theorem}{nuniform}
We now complete the proof of Theorem~\ref{nuniform}. Fix the dimension $n$. For any value of $c$, the transformed measure on the hypercube $(c,c+1)^n$ we obtain  is as follows:
\begin{itemize}
	\item A point mass of $+1$ at $(c,c,\ldots, c)$.
	\item Mass of $-(n+1)$ uniformly distributed throughout the interior.
	\item Mass of $-c$ distributed on each surface $x_i = c$ of the hypercube.
	\item Mass of $c+1$ distributed on each surface $x_i = c+1$ of the hypercube.
\end{itemize}
For notational convenience when checking the stochastic dominance properties of Theorem~\ref{bundlingtheorem}, we will shift the hypercube to the origin. That is, we will consider instead the measure $\mu^c$ on $[0,1]^n$ which has mass $+1$ at the origin, mass of $-c$ on each each surface $x_i = 0$, et cetera. It is important to notice that the mass that $\mu$ assigns to the interior of $[0,1]^n$ and to the origin do not depend on $c$, while the mass on each surface is a function of $c$.

For any $h \in (0,1)$, define the region  $Z(h) = \{x \in [0,1]^n : \Lone{x} \leq h \}$. For any fixed $c_0$, it holds that $\mu^{c_0}_+(Z(h)) = 1$ for all $h \in (0,1)$ and there exists a small enough $h'>0$ such that  $\mu^{c_0}_-(Z(h')) < 1$.    Since for this fixed $h'$ it holds that $\mu_-^c(Z(h'))$ increases with $c$ (and becomes arbitrarily large as $c$ becomes large), there must exist a $c' > c_0$ such that $\mu_-^{c'}(Z(h')) = 1$, and thus $\mu^{c'}(Z(h')) = 0$. We can therefore pick a decreasing function $p^* : \mathbb{R}_{\geq 0} \rightarrow (0,1)$ such that, for all sufficiently large $c$, $\mu^c(Z(p^*(c))) = 0$.\footnote{Our intention is to argue that for $c$ large enough, the optimal mechanism will be grand bundling for a price of $p^*(c) + c$, where the additive $+c$ term comes from our shift of the hypercube to the origin.} As argued above, for any small enough $h'>0$ there exists a $c'$ such that $\mu_-^{c'}(Z(h')) = 1$ and thus $p^*(c') = h'$. It follows that $p^*(c) \rightarrow 0$ as $c \rightarrow \infty$.

For all $c$, define the following subsets of $[0,1]^n$:
$$Z_{c} = \left\{x : \Lone{x} \leq p^*(c) \right\}; \qquad W_{c} = \left\{x : \Lone{x} \geq p^*(c) \right\}.$$
We notice that $\mu^c_+(Z_c \cap W_c) = \mu^c_-(Z_c \cap W_c) = 0$. By construction, for large enough $c$ we have $\mu^c(Z_c) = 0$. In addition, the only positive mass in $Z_c$ is at the origin, and thus $\mu^c_-|_{Z_c} \succeq_{cvx} \mu^c_+|_{Z_c}$.

To apply Theorem~\ref{bundlingtheorem}, it remains to show that, for sufficiently large $c$, $\mu^c_+|_{W_c} \preceq_{cvx(- \vec 1)}  \mu^c_-|_{W_c}$. To prove this, we partition $W_c$ into $2(n! + 1)$ disjoint\footnote{For notational simplicity, our regions overlap slightly, although the overlap always has zero mass under both $\mu^c_+$ and $\mu^c_-$.} regions, $P_0, P_{\sigma_1}, \ldots, P_{\sigma_{n!}}$ and $N_0, N_{\sigma_1}, \ldots, N_{\sigma_{n!}}$, where $\sigma_j$ is a permutation of $1, \ldots, n$. This partition will be such that $\cup_j P_j$ contains the entire support of $\mu^c_+|_{W_c}$ and $\cup_j N_j$ contains the entire support of $\mu^c_-|_{W_c}$. We will show that $\mu^c_+|_{P_j} \preceq_{cvx(- \vec 1)} \mu^c_-|_{N_j}$ for all $j$, thereby proving $\mu^c_+|_{W_c} \preceq_{cvx(- \vec 1)} \mu^c_-|_{W_c}$.

For every permutation $\sigma$ of $1, \ldots, n$, define:
\begin{align*}
P'_\sigma &= \left\{x : 1 = x_{\sigma(1)} \geq x_{\sigma(2)} \geq \cdots \geq x_{\sigma(n)} \geq 0 \textrm{ and } x_{\sigma(n)} \leq 1 - \left(\frac{1}{c+1} \right)^{1/(n-1)}  \right\}\\
N'_\sigma &= \left\{y : 1 \geq y_{\sigma(1)} \geq \cdots \geq y_{\sigma(n-1)} \geq y_{\sigma(n)} = 0 \right\}
\end{align*}
Denote by $\rho  \triangleq (c+1)/c$  the ratio between the surface densities of $\mu^c_+$ and $\mu^c_-$ on $P'_\sigma$ and $N'_\sigma$, respectively, and let $\varphi_\sigma : P'_\sigma \rightarrow N'_\sigma$ be the bijection given by Lemma~\ref{matching}. By construction, $\mu^c_+(S) = \mu^c_-(\varphi_\sigma(S))$ for all measurable $S \subseteq P'_\sigma$.

Denote $N_\sigma \triangleq N'_\sigma \setminus Z_c$ and $P_\sigma \triangleq \varphi^{-1}(N_\sigma)$. By construction, $\varphi$ is a bijection between $P_\sigma$ and $N_\sigma$, preserving the respective the measures $\mu^c_+$ and $\mu^c_-$, such that for all $x \in P_\sigma$, $x$ is componentwise at least as large as $\varphi(x)$. Therefore, by Strassen's theorem, $\mu^c_+|_{P_\sigma} \preceq_{cvx(- \vec 1)} \mu^c_-|_{N_\sigma}$.
Lastly, we define
$$
P_0 = \left\{x \in [0,1]^n : x_i = 1 \textrm{ for some } i \right\} \setminus \left(\bigcup_\sigma P_\sigma \right); \qquad N_0 = (0,1)^n \setminus Z_c.
$$
$P_0$ consists of all points on the outer surface of the hypercube which have not yet been matched to any $N_\sigma$, and $N_0$ consists of all points on which $\mu^c_-$ is nontrivial which have not yet been matched.\footnote{All other points on which $\mu^c_-$ is nontrivial have been matched either to the origin (if the point lies in $Z_c$), or to some point in $P_\sigma$ (if the point lies in $N'_\sigma \setminus Z_c$).} It therefore remains only to show that $\mu^c_+|_{P_0} \preceq_{cvx(- \vec 1)} \mu^c_-|_{N_0}$.

 We claim that, for large enough $c$, $P_0$ only contains points with all coordinates greater than $3/4$. Indeed:
\begin{itemize}
\item  Every $x$ with $x_i = 1$ but some  $x_j < 1 - \left(\frac{1}{c+1} \right)^{1/(n-1)}$ is in some $P'_{\sigma}$.
\item For large $c$, every $x$ with $x_i = 1$ but some $x_j \leq 3/4$ is in some $P'_\sigma$.
\item We claim that for large $c$, every $x \in P'_\sigma \setminus P_\sigma$ has all coordinates at least $3/4$. Indeed, for every $x \in P'_\sigma \setminus P_\sigma$, it must be that $\varphi(x) \in Z_c$, and thus $\Lone{\varphi(x)} \leq p^*(c)$. By Lemma~\ref{matching}, we have $x_{\sigma(n)} \geq 1- \left( \frac{p^*(c)^{n-1} +\rho-1}{\rho}  \right)^{1/(n-1)}$. As $c$ gets large, $\rho \rightarrow 1$ and $p^*(c) \rightarrow 0$. Thus, for sufficiently large $c$, we have $x \in P'_\sigma \setminus P_\sigma$ implies $x_{\sigma(n)} \geq 3/4$. Since $x_{\sigma(n)}$ is the smallest coordinate of $x$, it follows that all coordinates of any $x \in P'_\sigma \setminus P_\sigma$ are greater than $3/4$.
\item Thus, for sufficiently large $c$, every $x$ with $x_i = 1$ but some $x_j < 3/4$ lies in some $P_\sigma$, and hence does not lie in $P_0$.
\end{itemize}
By construction,  $\mu^c_-|_{N_0}$ and $ \mu^c_+|_{P_0}$ have the same total mass. Consider independent random variables $X$ and $Y$ corresponding to $\mu^c_-|_{N_0}$ and $ \mu^c_+|_{P_0}$, respectively, where we scale both measures so that they are probability distributions. By Lemma~\ref{strassen}, it suffices to show that for sufficiently large $c$, $Y \geq \mathbb{E}[X]$ almost surely.\footnote{In general, to prove second order dominance we might need to nontrivially couple $X$ and $Y$. In this case, however, choosing independent random variables suffices.} Since $\mu^c_+|_{P_0}$ is supported on $P_0$, we need only show that all coordinates of $\mathbb{E}[X]$ are less than 3/4. We recall that $\mu^c_-$ assigns a total mass of $n+1$, distributed uniformly, to the interior of the hypercube. As $c$ gets large, $p^*(c)$ approaches 0, and thus 
$$\frac {\mu^c_-(Z_c \cap (0,1)^n )}{\mu^c_-((0,1)^n)} \rightarrow 0$$ 
For large $c$, therefore, $\mathbb{E}[X]$ becomes arbitrarily close to the center of the hypercube, which is the point with all coordinates equal to 1/2. Therefore we have $$\mu^c_+|_{P_0} \preceq_{cvx(- \vec 1)} \mu^c_-|_{N_0}$$
\end{prevproof}

\section{Supplementary Material for Section~\ref{weakstructural}}\label{weakstructuralappendix}

\begin{prevproof}{Claim}{zerosetICIR}
It is obvious that $u_Z$ is non-negative. 
To show that $u_Z$ is non-decreasing, it  suffices to prove that $u_Z(x) \geq u_Z(y)$ for $x, y \in X \setminus Z$ with $x$ component-wise greater than or equal to $y$.  Let $z_x \in Z$ be the closest point to $x$. Denote by $z_{y}$ the point with each coordinate being the component-wise minimum of $z_x$ and $y$. Since $Z$ is decreasing, $z_{y} \in Z$. We now compute
$$u_{Z}(x) = \Lone{z_x - x} = \sum_i |(z_x)_i - x_i| \geq \sum_i | \min\{(z_x)_i, y_i\} - y_i | = \Lone{z_{y}  - y } \geq u_Z(y)$$
and thus $u_Z$ is non-decreasing.

We will now show that $u_Z$ is convex. Pick arbitrary $x, y \in X$. Denote by $z_x$ and $z_y$ points in $Z$ such that $u_Z(x) = \Lone{x-z_x}$ and $u_Z(y) = \Lone{y-z_y}$. Since $Z$ is convex, the point $(z_x+z_y)/2$ is in $Z$. Thus
$$
u_Z\left( \frac{x+y}{2} \right) \leq \left\| \frac{x+y}{2} - \frac{z_x+z_y}{2} \right\|_1 \leq \frac{\Lone{x-z_x} + \Lone{y-z_y}}{2}  = \frac{u_Z(x)+u_Z(y)}{2}
$$
and therefore $u_Z$ is convex.

Lastly, we verify that $u_Z$ has Lipschitz constant at most 1. Indeed,
$$
u_Z(x) - u_Z(y) \leq \Lone{x - z_y} - u_Z(y) = \Lone{x-z_y} - \Lone{y-z_y} \leq \Lone{x-y}.
$$
\end{prevproof}

\section{Supplementary Material for Sections~\ref{canonicalpartitiontheorem} and~\ref{sec:further examples}}

\subsection{Verifying Stochastic Dominance - Proof of Lemma~\ref{regionthm}}\label{twoitemappendix}

We begin with the standard result that a sufficient condition for first-order stochastic dominance is that one measure assigns more mass than the other to all increasing sets.

\begin{claim}\label{increasingsetssuffice}
Let $\alpha, \beta$ be positive finite Radon measures on $\mathbb{R}^n_{\ge 0}$ with $\alpha(\mathbb{R}^n_{\geq 0}) = \beta(\mathbb{R}^n_{\geq 0})$. A necessary and sufficient condition for $\alpha \succeq_1 \beta$ is that for all increasing\footnote{An increasing set $A \subset \mathbb{R}^n_{\geq 0}$ satisfies the property that for all $a, b \in \mathbb{R}^n_{\geq 0}$ such that $a$ is component-wise greater than or equal to $b$, if $b \in A$ then $a \in A$ as well.} measurable sets $A$, $\alpha(A) \geq \beta(A)$.
\end{claim}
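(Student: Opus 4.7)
The plan is to handle the two directions separately, with the easy direction first. For the necessary direction, I would simply observe that if $A$ is an increasing measurable set, then its indicator $\mathbb{I}_A$ is a bounded non-decreasing measurable function; hence the defining inequality of $\succeq_1$ applied to $\mathbb{I}_A$ gives $\alpha(A) = \int \mathbb{I}_A\,d\alpha \geq \int \mathbb{I}_A\,d\beta = \beta(A)$.

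For the sufficient direction, the idea is a layer-cake (horizontal slicing) argument. Let $u \colon \mathbb{R}^n_{\geq 0} \to \mathbb{R}$ be an arbitrary bounded non-decreasing measurable function, and let $M$ be a constant with $|u| \leq M$. The shifted function $u + M$ is nonnegative and bounded, so I would write
\[
(u+M)(x) \;=\; \int_0^{2M} \mathbb{I}_{\{y : u(y)+M > t\}}(x)\, dt.
\]
For each $t$, the upper level set $L_t := \{y : u(y)+M > t\}$ is an increasing subset of $\mathbb{R}^n_{\geq 0}$ because $u$ is non-decreasing, and it is measurable because $u$ is measurable. By Tonelli's theorem (applicable since the integrand is nonnegative and both measures are finite),
\[
\int (u+M)\, d\alpha \;=\; \int_0^{2M} \alpha(L_t)\, dt
\quad \text{and} \quad
\int (u+M)\, d\beta \;=\; \int_0^{2M} \beta(L_t)\, dt.
\]
By the hypothesis, $\alpha(L_t) \geq \beta(L_t)$ for every $t$, so integrating yields $\int (u+M)\,d\alpha \geq \int (u+M)\,d\beta$. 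Finally, since $\alpha$ and $\beta$ have the same total mass, the constant $M$ contributes equally to both integrals and can be cancelled, giving $\int u\,d\alpha \geq \int u\,d\beta$, as required for $\alpha \succeq_1 \beta$.

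There is no real obstacle here; the only technical points worth checking carefully are (i) that the upper level sets $L_t$ are genuinely measurable (immediate from measurability of $u$), (ii) that Tonelli applies (immediate from finiteness of the measures and nonnegativity of the integrand), and (iii) that one can restrict attention to bounded $u$ in the definition of $\succeq_1$, which is how first-order dominance has been used elsewhere in the paper (cf.\ Definition~\ref{firstorderdef}, where convex dominance is defined through bounded test functions). If one wished to handle unbounded non-decreasing test functions, a straightforward truncation argument $u_k := (u \wedge k) \vee (-k)$ together with monotone convergence would extend the result, but this is not needed for the applications in this section.
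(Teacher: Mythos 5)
Your proof is correct. Both your argument and the paper's rest on the same key observation -- that the superlevel sets of a bounded non-decreasing measurable function are increasing measurable sets -- but you execute it differently. The paper argues by contradiction: it rounds the test function up to the nearest multiple of $2^{-k}$, writes the rounded function as a \emph{finite} weighted sum of indicators of increasing sets, and derives a contradiction after controlling the $2^{-k}$ rounding error. You instead use the exact layer-cake representation $(u+M)(x)=\int_0^{2M}\mathbb{I}_{\{u+M>t\}}(x)\,dt$ together with Tonelli, which proves the inequality directly, with no approximation bookkeeping and no contradiction; the only extra ingredient is joint measurability of $\{(x,t): u(x)+M>t\}$, which is immediate from measurability of $u$, and the cancellation of the constant $M$ using the equal-total-mass hypothesis (which the paper's proof also uses implicitly when it normalizes and rounds). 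The paper's discrete version buys freedom from any product-measure machinery by staying with finite sums; yours buys brevity and avoids the $2^{-k+1}$ gap argument. Your remark that boundedness of test functions suffices is consistent with how the paper itself handles first-order dominance (its proof likewise works with bounded increasing measurable $f$), so there is no gap there.
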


\begin{prevproof}{Claim}{increasingsetssuffice}
{Without loss of generality assume that $\alpha({\mathbb{R}^n_{\ge 0}}) = \beta(\mathbb{R}^n_{\geq 0}) =1$.} 

It is obvious that the condition is necessary by considering the indicator function of any increasing set $A$. To prove sufficiency, suppose that the condition holds and that on the contrary, $\alpha$ does not stochastically dominate $\beta$. Then there exists an increasing, bounded, measurable function $f$ such that
$$\int f d \beta - \int f d \alpha > 2^{-k+1}$$
for some positive integer $k$. Without loss of generality, we may assume that $f$ is nonnegative, by adding the constant of $-f(0)$ to all values. We now define the function $\tilde{f}$ by point-wise rounding $f$ upwards to the nearest multiple of $2^{-k}$. Clearly $\tilde{f}$ is increasing, measurable, and bounded. Furthermore, we have
$$\int \tilde{f} d\beta- \int \tilde{f}d\alpha \geq \int f d \beta- \int f d \alpha - 2^{-k} > 2^{-k+1} - 2^{-k} > 0.$$

We notice, however, that $\tilde{f}$ can be decomposed into the weighted sum of indicator functions of increasing sets. Indeed, let $\{r_1,\ldots, r_m\}$ be the set of all values taken by $\tilde{f}$, where $r_1 > r_2 > \cdots > r_m$. We notice that, for any $s\in \{1,\ldots,m\}$, the set $A_s = \{z : \tilde{f}(z) \geq r_s\}$ is increasing and measurable. Therefore, we may write
$$\tilde{f} = \sum_{s=1}^m (r_s-r_{s-1}) I_s$$
where $I_s$ is the indicator function for $A_s$  and where we set $r_0 = 0$. We now compute
$$\int \tilde{f} d\beta= \sum_{s=1}^m (r_s-r_{s-1})\beta(A_s) \leq \sum_{s=1}^m(r_s - r_{s-1}) \alpha(A_s) = \int \tilde{f} d\alpha,$$
contradicting the fact that $\int \tilde{f} d \beta > \int \tilde{f} d \alpha$.
\end{prevproof}

Due to Claim~\ref{increasingsetssuffice}, to verify that  a measure $\alpha$ stochastically dominates  $\beta$ in the first order, we must ensure that $\alpha(A) \geq \beta(A)$ for all increasing measurable sets $A$. This verification might still be difficult, since an increasing set can have fairly unconstrained structure. In Lemma~\ref{finiteunions} we simplify this task by showing that we need not verify the inequality for all increasing $A$, but rather only for a special class of increasing subsets.

\begin{definition}
For any $z \in \mathbb{R}^n_{\geq 0}$, we define the \emph{base rooted at $z$} to be
$$B_z \triangleq \{ z': z \preceq z'\},$$
the minimal increasing set containing $z$, where the notation $z \preceq z'$ denotes that every component of $z$ is at most the corresponding component of $z'$. 
\end{definition}
We denote by $Q_k$ to be the set of points in $\mathbb{R}^n_{\geq 0}$ with all coordinates multiples of $2^{-k}$.

\begin{definition}
An increasing set $S$ is \emph{$k$-discretized} if $S = \bigcup_{z \in S \cap Q_k} B_z$. A \emph{corner} $c$ of a $k$-discretized set $S$ is a point $c \in S \cap Q_k$ such that there does not exist  $z \in S\setminus \{c\}$ with $z \preceq c$.
\end{definition}

\begin{lemma}\label{finitelymanycorners}
Every $k$-discretized set $S$ has only finitely many corners. Furthermore, $S = \cup_{c \in \mathcal{C}}B_c$, where $\mathcal{C}$ is the collection of corners of $S$.
\end{lemma}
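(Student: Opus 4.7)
The plan is to derive finiteness from the combinatorial fact that antichains in the non-negative integer lattice are finite (Dickson's lemma), applied to the scaled lattice $Q_k \cap \mathbb{R}^n_{\ge 0}$, and then to obtain the second claim by a minimization argument.

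First I would observe that any two distinct corners $c_1 \ne c_2$ are incomparable under $\preceq$: if $c_1 \preceq c_2$ held, then $c_1 \in S \setminus \{c_2\}$ would witness a violation of the corner property of $c_2$. Hence the set $\mathcal{C}$ of corners is an antichain in $S \cap Q_k \subseteq 2^{-k}\,\mathbb{N}^n$. Since the partial order on $2^{-k}\,\mathbb{N}^n$ is order-isomorphic to the coordinatewise order on $\mathbb{N}^n$, Dickson's lemma (every antichain in $\mathbb{N}^n$ is finite) gives $|\mathcal{C}| < \infty$.

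For the identity $S = \bigcup_{c\in\mathcal{C}} B_c$, the inclusion $\supseteq$ is immediate since each $c \in \mathcal{C} \subset S$ and $S$ is increasing. For the reverse inclusion, fix any $z \in S$. By the $k$-discretized hypothesis there exists $w \in S \cap Q_k$ with $w \preceq z$. Consider the set
\[
T \;\triangleq\; \{\, y \in S \cap Q_k : y \preceq w \,\},
\]
which is non-empty (it contains $w$) and finite (it is contained in the bounded lattice region $\prod_i [0,w_i] \cap Q_k$). Let $c$ be a $\preceq$-minimal element of $T$. I claim $c$ is a corner of $S$. Indeed, suppose for contradiction some $z' \in S \setminus \{c\}$ satisfied $z' \preceq c$. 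Using the $k$-discretized property applied to $z'$, there would exist $w' \in S \cap Q_k$ with $w' \preceq z' \preceq c$. If $w' = c$ then $c \preceq z' \preceq c$ forces $z' = c$, a contradiction; so $w' \ne c$ and $w' \in T$ (since $w' \preceq c \preceq w$) strictly below $c$, contradicting minimality. Hence $c \in \mathcal{C}$, and $c \preceq w \preceq z$ gives $z \in B_c$, completing the proof.

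The main subtlety I expect is the last step: arguing that a $\preceq$-minimal element of the lattice set $T$ is a corner of all of $S$ (not merely of $S \cap Q_k$). The resolution is precisely to invoke the $k$-discretized hypothesis to \emph{lift} any hypothetical non-lattice witness below $c$ down to a strictly smaller lattice point in $T$, which the minimality of $c$ forbids. Everything else (Dickson's lemma, finiteness of $T$, the trivial inclusion) is routine.
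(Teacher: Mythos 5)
Your proof is correct, but it packages the two halves somewhat differently from the paper. For finiteness, you observe that the corners form an antichain in $S \cap Q_k$, identify this lattice with $\mathbb{N}^n$ by scaling, and invoke Dickson's lemma; the paper instead gives a self-contained induction on the dimension (fix one corner $\hat c$, note every other corner is strictly smaller in some coordinate, and slice into finitely many $(n-1)$-dimensional subproblems), which is in effect a proof of exactly the special case of Dickson's lemma you cite. The two arguments are equivalent in substance; yours is shorter by outsourcing the combinatorics to a standard lemma, while the paper's is elementary and self-contained. For the covering identity $S = \bigcup_{c \in \mathcal{C}} B_c$, the paper runs a descent: start from a lattice point $b \preceq z$, and as long as it is not a corner pass to a smaller lattice point, terminating after at most $2^k \sum_j b_j$ steps; you instead take a $\preceq$-minimal element of the finite set $T = \{y \in S \cap Q_k : y \preceq w\}$, which is the same idea without the explicit step count. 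One genuine improvement on your side: the definition of a corner allows the witness $z' \preceq c$ to lie outside $Q_k$, and you explicitly lift such a witness to a lattice point $w' \in S \cap Q_k$ with $w' \preceq z' \preceq c$ before invoking minimality; the paper's descent tacitly assumes the smaller point can be taken in $S \cap Q_k$ and leaves this lifting implicit, so your write-up closes a small gap in the exposition.
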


\begin{prevproof}{Lemma}{finitelymanycorners}
We prove that there are finitely many corners by induction on the dimension, $n$. In the case $n=1$ the result is obvious, since if $S$ is nonempty it has exactly one corner. Now suppose $S$ has dimension $n$. Pick some corner $\hat{c} = (c_1, \ldots, c_n) \in S$. We know that any other corner must be strictly less than $\hat{c}$ in some coordinate. Therefore,
$$|\mathcal{C}| \leq {1 +} \sum_{i=1}^n \left| \left\{c \in \mathcal{C} \textrm{ s.t. } c_i < \hat{c}_i \right\}  \right| = {1 +} \sum_{i=1}^n \sum_{j = 1}^{2^k\hat{c}_i} \left| c \in \mathcal{C} \textrm{ s.t. } c_i = \hat{c}_i-2^{-k}j  \right|.$$
By the inductive hypothesis, we know that each set $\left\{ c \in \mathcal{C} \textrm{ s.t. } c_i = \hat{c}_i-2^{-k}j  \right\}$ is finite, since it is contained in the set of corners of the $(n-1)$-dimensional {subset of $S$ whose points have $i^{th}$ coordinate $\hat{c}_i - 2^{-k}j$.} Therefore, $|\mathcal{C}| $ is finite.

To show that $S =  \bigcup_{c \in \mathcal{C}}B_c$, pick any $z \in S$. Since $S$ is $k$-discretized, there exists a $b \in S \cap Q_k$ such that $z \in B_b$. If $b$ is a corner, then $z$ is clearly contained in $\bigcup_{c \in \mathcal{C}}B_c$. If $b$ is not a corner, then there is some other point $b' \in S \cap Q_k$ with $b' \preceq b$. If $b'$ is a corner, we're done. Otherwise, we repeat this process at most $2^k \sum_j b_j$ times, after which time we will have reached a corner $c$ of $S$. By construction, we have $z \in B_c$, as desired.\end{prevproof}

We now show that, to verify that one measure dominates another on all increasing sets, it suffices to verify that this holds for all sets that are the union of finitely many bases. 

\begin{lemma}\label{finiteunions}
Let $g, h : \mathbb{R}^n_{\geq 0} \rightarrow \mathbb{R}_{\geq 0}$ be bounded integrable functions such that $\int_{\mathbb{R}^n_{\geq 0}} g(x) d x$ and $\int_{\mathbb{R}^n_{\geq 0}} h(x)d x$ are finite. Suppose that, for all finite collections $Z$ of points in $\mathbb{R}^n_{\geq 0}$, we have
$$\int_{\bigcup_{z \in Z}B_z}g({x})d{x} \geq \int_{\bigcup_{z \in Z}B_z}h({x})d{x}.$$
Then for all increasing sets $A \subseteq \mathbb{R}^n_{\geq 0}$,
$$\int_A g({x})d{x} \geq \int_A h({x})d{x}.$$
\end{lemma}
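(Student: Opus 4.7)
The plan is to approximate the general increasing set $A$ from inside by a nested sequence of finite unions of bases, transfer the hypothesized inequality along the sequence, and then pass to the limit. Set $Q_k = 2^{-k}\mathbb{Z}^n_{\geq 0}$ and define the $k$-th discretization
\[
A_k \;\triangleq\; \bigcup_{z \in A \cap Q_k} B_z.
\]
Since $A$ is increasing, $B_z \subseteq A$ whenever $z \in A$, so $A_k \subseteq A$, and $A_k$ is $k$-discretized by construction; the inclusion $Q_k \subseteq Q_{k+1}$ further gives $A_k \subseteq A_{k+1}$. Lemma~\ref{finitelymanycorners} shows each $A_k$ has a finite set of corners $\mathcal{C}_k$ and equals $\bigcup_{c \in \mathcal{C}_k} B_c$, so applying the hypothesis of the lemma with $Z = \mathcal{C}_k$ yields
\[
\int_{A_k} g(x)\,dx \;\geq\; \int_{A_k} h(x)\,dx \quad\text{for every } k.
\]

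The heart of the argument is showing that $A_k$ exhausts $A$ up to a Lebesgue null set. Let $A' = \{x \in A : x - \delta\vec{1} \in A \text{ for some } \delta > 0\}$. If $x \in A'$ with witness $\delta$, choose $k$ so that $2^{-k} < \delta$; the coordinate-wise floor $z^{(k)}_i = \lfloor 2^k x_i\rfloor/2^k$ satisfies $z^{(k)} \preceq x$ and $z^{(k)} \succeq x - 2^{-k}\vec{1} \succeq x - \delta\vec{1}$. Since $x - \delta\vec{1} \in A$ and $A$ is increasing, $z^{(k)} \in A \cap Q_k$, hence $x \in B_{z^{(k)}} \subseteq A_k$. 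Thus $A' \subseteq \bigcup_k A_k \subseteq A$. To check that $|A \setminus A'| = 0$, foliate $\mathbb{R}^n_{\geq 0}$ by the diagonal rays $\ell_y = \{y + t\vec{1} : t \geq 0\}$ indexed by $y$ in the transversal $\{y \in \mathbb{R}^n_{\geq 0} : \min_i y_i = 0\}$. Since $A$ is increasing and $\ell_y$ is oriented along $\vec{1}$, the slice $A \cap \ell_y$ is a tail of $\ell_y$ and $A' \cap \ell_y$ is the corresponding open tail, so their difference contains at most the single endpoint and therefore has one-dimensional Lebesgue measure zero. Fubini applied to the Lipschitz parametrization $(y, t) \mapsto y + t\vec{1}$ yields $|A \setminus A'| = 0$.

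Finally, since the nested sequence $A_k$ exhausts $A$ up to a null set, and $g, h$ are nonnegative, bounded, and Lebesgue-integrable on $\mathbb{R}^n_{\geq 0}$, dominated convergence (with dominators $g$ and $h$) gives $\int_{A_k} g \to \int_A g$ and $\int_{A_k} h \to \int_A h$. Passing to the limit in the inequality displayed above yields $\int_A g \geq \int_A h$, as required. The main obstacle is the measure-zero claim for $A \setminus A'$: for a merely measurable increasing set $A$ one must check that each $A_k$ is itself measurable (it is a countable union of closed bases, since $A \cap Q_k$ is countable) and that the diagonal foliation gives a legitimate Fubini decomposition, both of which are handled by the Lipschitz parametrization above.
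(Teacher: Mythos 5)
Your proof is correct, but it takes a genuinely different route from the paper's in the limiting step. Both arguments discretize $A$ from inside as $A_k^l=\bigcup_{z\in A\cap Q_k}B_z$ and use Lemma~\ref{finitelymanycorners} so that the hypothesis, applied to the finite set of corners, gives $\int_{A_k^l}g\,dx\ge\int_{A_k^l}h\,dx$. The paper then also builds an \emph{outer} discretization $A_k^u$ (shifting each base point down by $2^{-k}$ in every coordinate), sandwiches $A_k^l\subseteq A\subseteq A_k^u$, and shows $\int_{A_k^u}g\,dx-\int_{A_k^l}g\,dx\to 0$ by an explicit estimate: a tail bound outside a box of side $k$ plus the volume bound $n\,|g|_{\sup}\,2^{-k}k^{n-1}$ inside it; this is where boundedness of $g$ and $h$ is used. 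You dispense with the outer approximation entirely: you show the nested inner sets exhaust $A$ up to the set $A\setminus A'$ of diagonally minimal points, argue that this set is Lebesgue-null by slicing along the direction $\vec{1}$, and conclude by monotone/dominated convergence. This is shorter and, notably, needs only integrability of $g,h$, not boundedness. The one step requiring care is the Fubini argument: having singleton slices does not by itself force outer measure zero for a possibly non-measurable set, so you should either observe that $A'$ is precisely the interior of $A$ (hence $A\setminus A'\subseteq\partial A$, a closed set whose diagonal slices are singletons, so it is null and completeness of Lebesgue measure finishes the job) or that $A'$ is open, making $A\setminus A'$ measurable whenever $A$ is; you flag exactly this issue and it is easily repaired, so your argument stands.
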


\begin{prevproof}{Lemma}{finiteunions}
Let $A$ be an increasing set. We clearly have $A = \bigcup_{z \in A} B_z$. For any point $z \in \mathbb{R}^n_{\geq 0}$, denote by $z^{n,k}$ the point in $\mathbb{R}^n_{\geq 0}$ such that for each component $i$, the $i^{th}$ component of $z^{n,k}$ is the maximum of 0 and $z_i - 2^{-k}$.

We define the following two sets, which we think of as approximations of $A$:
$$A_k^l \triangleq \bigcup_{z \in A \cap Q_k} B_z; \qquad A_k^u \triangleq \bigcup_{z \in A \cap Q_k} B_{z^{n,k}}.$$
It is clear that both $A_k^l$ and $A_k^u$ are $k$-discretized. Furthermore, for any $z \in A$ there exists a $z' \in A \cap Q_k$ such that each component of $z'$ is at most $2^{-k}$ more than the corresponding component of $z$. Therefore $A_k^l \subseteq A \subseteq A_k^u.$

We now will bound
$$\int_{A_k^u} g(x) d{x} - \int_{A_k^l} g(x) d{x}.$$
Let
$$W_k = \left\{z \in \mathbb{R}^n_{\geq 0} : z_i > k \textrm{ for some } i   \right\}; \qquad W^c_k = \left\{z \in \mathbb{R}^n_{\geq 0} : z_i \leq k \textrm{ for all } i \right\}.$$
The set $W^c_{k}$ contains all points which are lie inside in a box of side length $k$ rooted at the origin, and $W_k$ contains all points outside of this box. We have the immediate (loose) bound that
$$\int_{A_k^u \cap W_k} g d{x} - \int_{A_k^l \cap W_k}g d{x} \leq \int_{W_k} g d{x}.$$
Furthermore, since $\lim_{k \rightarrow \infty}\int_{W^c_k}g d{x} = \int_{\mathbb{R}^n_{\geq 0}}g d{x}$,  we know that $\lim_{k \rightarrow \infty} \int_{W_k}g d{x} = 0$. Therefore, 
$$\lim_{k \rightarrow \infty}\left(\int_{A_k^u \cap W_k} g d{x} - \int_{A_k^l \cap W_k}g d{x}\right) = 0.$$
Next, we bound
$$\int_{A_k^u \cap W^c_k} g d{x} - \int_{A_k^l \cap W^c_k}g d{x} \leq |g|_{\sup} \left(V(A_k^u \cap W^c_k) - V(A_k^l \cap W^c_k)  \right)$$
where $|g|_{\sup} < \infty$ is the supremum of $g$, and $V(\cdot)$ denotes the Lebesgue measure.

For each $m \in \{1,\ldots, n+1\}$ and  $z \in \mathbb{R}^n_{\geq 0}$, we define the point $z^{m,k}$ by:
$$z^{m,k}_i =
\begin{cases}
\max\{0, z_i - 2^{-k} \} & \mbox{ if } i < m\\
z_i & \mbox{ otherwise}
\end{cases}$$
and set
$$A_k^m \triangleq \bigcup_{z \in A \cap Q_k}B_{z^{m,k}}.$$
We have, by construction, $A_k^l = A_k^1$ and $A_k^u = A_k^{n+1}$. Therefore,
$$V(A_k^u \cap W^c_k) - V(A_k^l \cap W^c_k) = \sum_{m=1}^n \left(V(A_k^{m+1} \cap W_k^c)-V(A_k^{m} \cap W_k^c) \right).$$
We notice that, for any point $ (z_1,z_2, \ldots, z_{m-1},z_{m+1},\ldots, z_n) \in [0,k]^{n-1}$, there is an interval $I$ of length at most $2^{-k}$ such that $$(z_1,z_2,\ldots,z_{m-1},w,z_{m-2},\ldots,z_n) \in (A_k^{m+1} \setminus A_k^{m}) \cap W_k^c$$
if and only if $w \in I$. Therefore,
\begin{align*}V(A_k^{m+1} \cap W_k^c)&-V(A_k^{m} \cap W_k^c)\\
 &\leq \int_0^k \cdots \int_0^k \int_0^k \cdots \int_0^k 2^{-k}dz_1\cdots dz_{m-1}dz_{m+1}\cdots dz_n = 2^{-k}k^{n-1}.
\end{align*}
We thus have the bound
$$
|g|_{\sup} \left(V(A_k^u \cap W^c_k) - V(A_k^l \cap W^c_k)  \right) \leq
|g|_{\sup}\sum_{m=1}^n 2^{-k}k^{n-1} = n|g|_{\sup}2^{-k}k^{n-1}
$$
and therefore
\begin{align*}
\int_{A_k^u} g  d{x} - \int_{A_k^l} gd{x}  &= \int_{A_k^u \cap W_k}g d{x} - \int_{A_k^l \cap W_k} gd{x}  + \int_{A_k^u \cap W^c_k} g d{x} - \int_{A_k^l \cap W^c_k} g d{x} \\
&\leq  \left( \int_{A_k^u \cap W_k}g d{x} - \int_{A_k^l \cap W_k} gd{x}  \right) + n|g|_{\sup}2^{-k}k^{n-1}.
\end{align*}
In particular, we have
$$\lim_{k \rightarrow \infty} \left(\int_{A_k^u} g d{x} - \int_{A_k^l} g d{x}\right) = 0.$$
Since $\int_{A_k^u} g d{x} \geq \int_A g d{x} \geq  \int_{A_k^l}g d{x}$, we have
$$\lim_{k \rightarrow \infty} \int_{A_k^u}g d{x} = \int_A g d{x} = \lim_{k \rightarrow \infty} \int_{A_k^l}g d{x}.$$
Similarly, we have
$$\int_A h d{x} = \lim_{k \rightarrow \infty}\int_{A_k^l} h d{x}$$
and thus
$$\int_A (g-h) d{x} = \lim_{k \rightarrow \infty}\left(\int_{A_k^l} g d{x} - \int_{A_k^l} h d{x} \right).$$
Since $A_k^l$ is $k$-discretized, it has finitely many corners. Letting $Z_k$ denote the corners of $A_k^l$, we have $A_k^l = \bigcup_{z \in Z_k}B_z$, and thus by our assumption $\int_{A_k^l}gd{x} - \int_{A_k^l}hd{x} \geq 0$
for all $k$. Therefore $\int_A(g-h)d{x} \geq 0$, as desired.
\end{prevproof}

We are now ready to prove Lemma~\ref{regionthm}.

\begin{prevproof}{Lemma}{regionthm}

We begin by defining, for any $a$ and $b$ with  $p_1 \leq a \leq b \leq q_1$, the function $\zeta_a^b : [p_2,q_2] \rightarrow \mathbb{R}$ by
$$\zeta_a^b(w_2) \triangleq \int_a^b(g(z_1,w_2)-h(z_1,w_2))dz_1.$$
This function $\zeta_a^b(w_2)$ represents the integral of $g - h$ along the vertical line from $(a,w_2)$ to $(b,w_2)$.
\begin{claim}\label{claim11}
If $(a,w_2) \in R$, then $\zeta_a^b(w_2) \leq 0$.
\end{claim}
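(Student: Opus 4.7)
The plan is to reduce $\zeta_a^b(w_2)$ to an integral starting from the ``boundary'' of $R$ along the horizontal line $y = w_2$, and then exploit the factored form $g-h = \alpha(x)\beta(y)\eta(x,y)$ together with the monotonicity of $\eta$ in the first coordinate.

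First, since $(a,w_2) \in R$ and $R$ is a decreasing set, define
$$c \triangleq \sup\{z_1 \in [a,b]: (z_1,w_2) \in R\}.$$
Because $R$ is decreasing, for every $z_1' < c$ one can find some $z_1 \in [a,b]$ with $z_1 \geq z_1'$ and $(z_1,w_2) \in R$, which forces $(z_1',w_2) \in R$ as well. Hence $[a,c) \times \{w_2\} \subseteq R$, and since $g = h = 0$ on $R$, we get $\zeta_a^b(w_2) = \int_c^b (g-h)(z_1,w_2)\,dz_1$. If $c = b$, then $\zeta_a^b(w_2) = 0$ and we are done; otherwise $c < b$ and the remainder of the argument handles this case.

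Next, I will take a sequence $z_1^{(n)} \nearrow c$ with $(z_1^{(n)}, w_2) \in R$. By the integral hypothesis applied at $z = (z_1^{(n)},w_2)$ with basis vector $e_1$,
$$\int_{z_1^{(n)}}^{q_1} (g-h)(\tau,w_2)\,d\tau \leq 0.$$
Since $g-h$ vanishes on $[z_1^{(n)}, c) \times \{w_2\} \subseteq R$, this is actually $\int_c^{q_1}(g-h)(\tau,w_2)\,d\tau \leq 0$. On $[c,q_1) \times \{w_2\}$ the factorization gives $(g-h)(\tau,w_2) = \alpha(\tau)\beta(w_2)\eta(\tau,w_2)$. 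If $\beta(w_2)=0$ we are done immediately, so assume $\beta(w_2)>0$, yielding $\int_c^{q_1} \alpha(\tau)\eta(\tau,w_2)\,d\tau \leq 0$.

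The main obstacle is now the following: we know the full integral on $[c,q_1]$ is non-positive, but we need every partial integral on $[c,b]$ to be non-positive. This is where the monotonicity of $\eta$ enters decisively. Let $F(t) = \int_c^t \alpha(\tau)\eta(\tau,w_2)\,d\tau$, so $F(c)=0$ and $F(q_1)\leq 0$. Because $\eta(\cdot,w_2)$ is non-decreasing and $\alpha \geq 0$, the integrand $\alpha\eta$ changes sign at most once as $\tau$ increases, from non-positive to non-negative. Thus $F$ is first non-increasing (where $\eta \leq 0$) and then non-decreasing (where $\eta \geq 0$). On the decreasing portion, $F \leq F(c) = 0$; on the increasing portion, $F \leq F(q_1) \leq 0$. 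In either sub-interval we conclude $F(b) \leq 0$, and multiplying by $\beta(w_2) \geq 0$ yields $\zeta_a^b(w_2) \leq 0$, completing the proof.
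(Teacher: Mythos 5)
Your proof is correct and takes essentially the same route as the paper's: both rest on the line-integral hypothesis applied at a point of $R$ (giving $\int_a^{q_1}(g-h)(\tau,w_2)\,d\tau \le 0$) together with the observation that, since $\alpha,\beta \ge 0$ and $\eta(\cdot,w_2)$ is increasing, $g-h$ changes sign at most once along the line (from non-positive to non-negative), so the partial integral up to $b$ is bounded by the larger of $0$ and the full integral. The only cosmetic differences are your detour through $c$ and the sequence $z_1^{(n)}$, which is unnecessary because the hypothesis can be applied directly at $(a,w_2)\in R$, and your packaging of the paper's two-case comparison as unimodality of $F$.
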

\begin{prevproof}{Claim}{claim11}
The inequality trivially holds unless there exists a $z_1 \in [a, b]$ such that $g(z_1,w_2) > h(z_1,w_2)$, so suppose such a $z_1$ exists. It must be that $(z_1,w_2) \notin R$, since  both $g$ and $h$ are $0$ in $R$. Indeed, because $R$ is a decreasing set it is also true that $(\tilde{z}_1,w_2) \notin R$ for all $\tilde{z}_1 \geq z_1$. This implies by our assumption that
$$g(\tilde{z}_1,w_2) - h(\tilde{z}_1,w_2) = \alpha(\tilde{z}_1)\cdot \beta(w_2) \cdot \eta(\tilde{z}_1,w_2),$$
for all $\tilde{z}_1 \geq z_1$. Given that $g(z_1,w_2) > h(z_1,w_2)$ and that $\eta(\cdot,w_2)$ is an increasing function, we know that $g(\tilde{z}_1,w_2) \geq h(\tilde{z}_1,w_2)$ for all $\tilde{z}_1 \geq z_1$. Therefore, we have
$$\zeta_a^{z_1}(w_2)\leq \zeta_a^b(w_2) \leq \zeta_a^{q_1}(w_2).$$
We notice, however, that $\zeta_a^{q_1}(w_2) \leq 0$ by assumption, and thus the claim is proven.\end{prevproof}

We now claim the following:
\begin{claim}\label{signflip}
Suppose that $\zeta_a^b(w^*_2) > 0$ for some $w^*_2 \in [c_2, q_2)$. Then $\zeta_a^b(w_2) \geq 0$ for all $w_2 \in [w^*_2, q_2)$. 
\end{claim}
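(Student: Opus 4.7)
The plan is to leverage the contrapositive of Claim~\ref{claim11} together with the downward-closed structure of $R$ and the monotonicity of $\eta$. The hypothesis $\zeta_a^b(w^*_2) > 0$, combined with Claim~\ref{claim11}, immediately implies $(a, w^*_2) \notin R$. Since $R$ is a decreasing subset of $\mathcal{C}$, its complement in $\mathcal{C}$ is upward-closed in the coordinate-wise order, so $(a, w_2) \notin R$ for every $w_2 \ge w^*_2$. Applying the decreasing property of $R$ once more: if any $(z_1, w_2)$ with $z_1 \ge a$ were in $R$, then $(a, w_2) \le (z_1, w_2)$ would also be in $R$, a contradiction. Thus the entire horizontal segment $[a, b] \times \{w_2\}$ lies in $\mathcal{C} \setminus R$ for each $w_2 \ge w^*_2$.

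On this segment the factorization $g - h = \alpha \cdot \beta \cdot \eta$ is valid, so
$$\zeta_a^b(w_2) = \beta(w_2)\, \int_a^b \alpha(z_1)\, \eta(z_1, w_2)\, dz_1 \quad \text{for all } w_2 \ge w^*_2.$$
At $w_2 = w^*_2$ the assumed positivity of $\zeta_a^b(w^*_2)$, together with $\alpha, \beta \ge 0$, forces both $\beta(w^*_2) > 0$ and $\int_a^b \alpha(z_1) \eta(z_1, w^*_2)\, dz_1 > 0$. For arbitrary $w_2 \ge w^*_2$, monotonicity of $\eta$ in its second argument and non-negativity of $\alpha$ give $\int_a^b \alpha(z_1) \eta(z_1, w_2)\, dz_1 \ge \int_a^b \alpha(z_1) \eta(z_1, w^*_2)\, dz_1 > 0$, and together with $\beta(w_2) \ge 0$ this yields $\zeta_a^b(w_2) \ge 0$, as desired.

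The only mildly subtle point is the opening geometric observation: one must combine the contrapositive of Claim~\ref{claim11} with the decreasing property of $R$ in two different ways to conclude that positivity of $\zeta_a^b$ at a single height $w^*_2$ forces the entire horizontal strip $[a,b] \times [w^*_2, q_2)$ to avoid $R$. Once that is established, the product form of $g-h$ together with the monotonicity of $\eta$ in its second argument makes the remainder essentially mechanical, and no further use of the integral condition on $R$ is required.
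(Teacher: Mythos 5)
Your proof is correct and follows essentially the same route as the paper's: the contrapositive of Claim~\ref{claim11} plus the decreasing property of $R$ to place the whole strip $[a,b]\times[w^*_2,q_2)$ outside $R$, then the factorization $g-h=\alpha\cdot\beta\cdot\eta$ and monotonicity of $\eta$ in the second coordinate. The only (immaterial) difference is that you argue positivity of $\int_a^b \alpha(z_1)\eta(z_1,w^*_2)\,dz_1$ directly instead of dividing by $\beta(w^*_2)$ as the paper does.
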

\begin{prevproof}{Claim}{signflip}
Given that $\zeta_a^b(w^*_2) > 0$, our previous claim implies that $(a,w^*_2)\not\in R$. Furthermore, since $R$ is a decreasing set and $w_2 \geq w^*_2$, follows that $(a,w_2) \not\in R$, and furthermore that $(c,w_2) \not\in R$ for any $c \geq a$ in $[c_1, q_1)$. Therefore, we may write
$$\zeta_a^b(w_2) = \int_a^b  (g(z_1,w_2) - h(z_1,w_2))dz_1 = \int_a^b (\alpha(z_1)\cdot \beta(w_2)\cdot \eta(z_1,w_2))dz_1.$$
Similarly, $(c,w^*_2) \not\in R$ for any $c \geq a$, so
$$\zeta_a^b(w^*_2) = \int_a^b (\alpha(z_1)\cdot \beta(w^*_2)\cdot \eta(z_1,w^*_2))dz_1.$$
Note that, since $\zeta_a^b(w^*_2) > 0$, we have $\beta(w^*_2) > 0$.
Thus, since $\eta$ is increasing,
\begin{align*}
\zeta_a^b(w_2) &\geq \int_a^b (\alpha(z_1)\cdot \beta(w_2) \cdot \eta(z_1, w^*_2))dz_1= \frac{\beta(w_2)}{\beta(w^*_2)} \zeta_a^b(w^*_2) \geq 0,
\end{align*}
as desired.\end{prevproof}

We extend $g$ and $h$ to all of $\mathbb{R}^2_{\geq 0}$ by setting them to be 0 outside of $\mathcal{C}$. By Claim~\ref{finiteunions}, to prove that $g \succeq_1 h$ it suffices to prove that $\int_A g dxdy \geq \int_A h dxdy$ for all sets $A$ which are the union of finitely many bases. Since $g$ and $h$ are 0 outside of $\mathcal{C}$, it  suffices to consider only bases $B_{z'}$ where $z' \in \mathcal{C}$, since otherwise we can either remove the base (if it is disjoint from $\mathcal{C}$) or can increase the coordinates of $z'$ moving it to $\cal C$ without affecting the value of either integral.

We now complete the proof of Lemma~\ref{regionthm} by induction on the number of bases in the union.
\begin{itemize}[label={},leftmargin=0pt]
	\item \textbf{Base Case.} We aim to show $\int_{B_r} (g-h)dxdy \geq 0$ for any $r = (r_1,r_2) \in \mathcal{C}$.  We have
	\begin{align*}
	\int_{B_r} (g-h)dxdy = \int_{r_2}^{q_2} \int_{r_1}^{q_1} (g-h)dz_1dz_2 = \int_{r_2}^{q_2} \zeta_{r_1}^{q_1}(z_2)dz_2.
	\end{align*}
	By Claim~\ref{signflip}, we know that either $\zeta_{r_1}^{q_1}(z_2) \geq 0$ for all $z_2 \geq r_2$, or $\zeta_{r_1}^{q_1}(z_2) \leq 0$ for all $z_2$ between $p_2$ and $r_2$. In the first case, the integral is clearly nonnegative, so we may assume that we are in the second case. We then have
	\begin{align*}
	\int_{r_2}^{q_2} \zeta_{r_1}^{q_1}(z_2)dz_2 \geq \int_{p_2}^{q_2} \zeta_{r_1}^{q_1}(z_2)dz_2 &=  \int_{p_2}^{q_2} \int_{r_1}^{q_1} (g-h)dz_1dz_2
	\\&= \int_{r_1}^{q_1} \int_{p_2}^{q_2} (g-h)dz_2dz_1.
	\end{align*}
By an analogous argument to that above, we know that either $\int_{p_2}^{q_2} (g-h)(z_1,z_2)dz_2$ is nonnegative for all $z_1 \geq r_1$ (in which case the desired inequality holds trivially) or is nonpositive for all $z_1$ between $p_1$ and $r_1$. We assume therefore that we are in the second case, and thus
\begin{align*}
\int_{r_1}^{q_1} \int_{p_2}^{q_2} (g-h)dz_2dz_1 \geq \int_{p_1}^{q_1} \int_{p_2}^{q_2} (g-h)dz_2dz_1 = \int_{\mathcal{C}}(g-h)dxdy,
\end{align*}
which is nonnegative by assumption.	
	\item \textbf{Inductive Step.}
	Suppose that we have proven the result for all sets which are finite unions of at most $k$ bases. Consider now a set
$$A =  \bigcup_{i=1}^{k+1}  B_{z^{(i)}}.$$
We may assume that all $z^{(i)}$ are distinct and that there do not exist distinct $z^{(i)}$, $z^{(j)}$ with $z^{(i)}$ component-wise less than $z^{(j)}$, since otherwise we could remove one such $B_{z^{(i)}}$ from the union without affecting the set $A$ and the desired inequality would follow from the inductive hypothesis.

We may therefore order the $z^{(i)}$ such that 
$$p_1\leq z^{(k+1)}_1 < z^{(k)}_1< z^{(k-1)}_1 < \cdots < z^{(1)}_1$$ 
$$p_2 \leq z^{(1)}_2 < z^{(2)}_2 < z^{(3)}_2 < \cdots < z^{(k+1)}_2.$$ 

\begin{figure}[!ht]
\begin{center}
\begin{tikzpicture}

\begin{axis}[ymin=1.1, ymax=3.3, xmin=0, xmax=1, xlabel=$z_1$, ylabel=$z_2$,  ytick pos=left, ytick={1.1}, yticklabels={$p_2$}, xtick={0},xticklabels={$p_1$}]

  \addplot+[color=gray, fill=gray!50, domain=0:2,mark=none]
 {1.9-2*x*x}
 \closedcycle;

\addplot[color=black, mark=*] coordinates{
(.1,2.3)
(.1,7)
};

\addplot[color=black, mark=none] coordinates{
(.1,2.3)
(.3,2.3)
};

\addplot[color=black, mark=none] coordinates{
(.3,2.3)
(.3,1.6)
};

\addplot[color=black, mark=*] coordinates{
(.3,1.6)
};

\addplot[color=black, mark=none] coordinates{
(.3,1.6)
(.5,1.6)
};

\addplot[color=black, mark=none] coordinates{
(.5,1.6)
(.5,1.4)
};

\addplot[color=black, mark=*] coordinates{
(.5,1.4)};

\addplot[color=black, mark=none] coordinates{
(.5,1.4)
(7,1.4)
};

\node at (axis cs:.1,2.2){$z^{(k+1)}$};
\node at (axis cs:.3,1.5){$z^{(k)}$};
\node at (axis cs:.5,1.3){$z^{(k-1)}$};
\node at (axis cs:0.15,1.4){$R$};

\end{axis}
\end{tikzpicture}
\end{center}
\caption{We show that either decreasing $z^{(k+1)}_2$ to $z^{(k)}_2$ or removing $z^{(k+1)}$ entirely decreases the value of $\int_A (f-g)$. In either case, we can apply our inductive hypothesis.}
\end{figure}
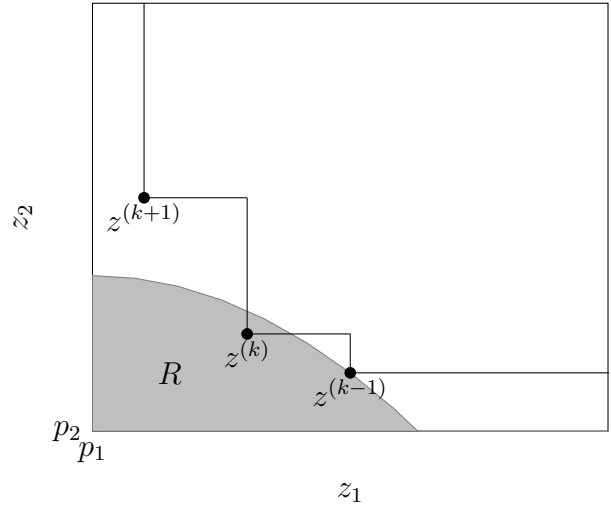

By Claim~\ref{signflip}, we know that one of the two following cases must hold:
\begin{itemize}[label={},leftmargin=0pt]
	\item \textbf{Case 1:} $\zeta_{z_1^{(k+1)}}^{z_1^{(k)}}(w_2) \leq 0$ for all $p_2 \leq w_2 \leq z_2^{(k+1)}$.
	
	In this case, we see that 
	$$\int_{z_2^{(k)}}^{z_2^{(k+1)}} \int_{z_1^{(k+1)}}^{z_1^{(k)}} (f-g)dz_1dz_2 = \int_{z_2^{(k)}}^{z_2^{(k+1)}}  \zeta_{z_1^{(k+1)}}^{z_1^{(k)}}(w)dw \leq 0.$$
	For notational purposes, we denote here by $(f-g)(S)$ the integral $\int_S (f-g)dz_1dz_2$ for any set $S$. We compute
	\begin{align*}
	\hspace{-1in}
	 (f-g)(A) & \geq (f-g)(A)\\
	 &\hspace{.2in}+ (f-g)\left(\left\{z: z_1^{(k+1)}\leq z_1 \leq z_1^{(k)} \textrm{ and } z_2^{(k)}\leq z_2 \leq z_2^{(k+1)} \right\} \right)\\
	&= (f-g)\left({ \bigcup_{i=1}^{k}  B_{z^{(i)}}  \cup B_{(z^{(k+1)}_1,z_2^{(k)})} }\right)\\
	&= (f-g)\left({ \bigcup_{i=1}^{k-1}  B_{z^{(i)}}  \cup B_{(z^{(k+1)}_1,z_2^{(k)})} }\right)
	\end{align*}
	where the last equality follows from $(z_1^{(k)},z_2^{(k)})$ being component-wise greater than or equal to $(z^{(k+1)}_1,z_2^{(k)})$. The inductive hypothesis implies that the quantity in the last line of the above derivation is $\ge 0$.

	\item \textbf{Case 2:} $\zeta_{z_1^{(k+1)}}^{z_1^{(k)}}(w_2) \geq 0$ for all $w_2 \geq z_2^{(k+1)}$.
	
	In this case, we have
	$$\int_{z_2^{(k+1)}}^{q_2} \int_{z_1^{(k+1)}}^{z_1^{(k)}}(f-g)dz_1dz_2 = \int_{z_2^{(k+1)}}^{q_2}  \zeta_{z_1^{(k+1)}}^{z_1^{(k)}}(w)dw \geq 0.$$
	Therefore, it follows that
	\begin{align*}
	\hspace{-1in}
	(f-g)(A) &= (f-g)\left( \bigcup_{i=1}^{k}  B_{z^{(i)}}  \right)\\
	 &\hspace{.2in}+ (f-g)\left( \left\{z: z_1^{(k+1)}\leq z_1 \leq z_1^{(k)} \textrm{ and }   z_2^{(k+1)} \leq z_2 \right\}  \right)\\
	&\geq (f-g)\left( \bigcup_{i=1}^{k}  B_{z^{(i)}}  \right) \geq 0,
	\end{align*}
where the final inequality follows from the inductive hypothesis.
\end{itemize}
\end{itemize}

\end{prevproof}

\subsection{Verifying Stochastic Dominance in Example~\ref{betaexample}}\label{betaappendix}

We sketch the application of Lemma~\ref{regionthm} for verifying that $\mu_+|_\mathcal{W} \succeq_1 \mu_-|_\mathcal{W}$ in Example~\ref{betaexample}.
We set $\mathcal{C} = [x_{\textrm{crit}},1]\times [y_{\textrm{crit}},1]$ and $\mathcal{R} = Z \cap \mathcal{C}$, so that $\mathcal{W} = \mathcal{C} \setminus R$.  We let $g$ and $h$ being the positive and negative parts of the density function of $\mu|_\mathcal{W}$, respectively, so that the density of $\mu|_{\mathcal{W}}$ is given by $g-h$. Since $Z$ lies below \textit{both} curves $S_{\textrm{top}}$ and $S_{\textrm{right}}$, we know that integrating the density of $\mu$ along any horizontal or vertical line outwards starting anywhere on the boundary of $Z$ yields a non-positive quantity, verifying the second condition of Lemma~\ref{regionthm}. In addition, on $\mathcal{W} = \mathcal{C} \setminus R$, we have
$$g(z_1,z_2) -h(z_1,z_2) = f_1(z_1)f_2(z_2)\left(\frac{1}{1-z_1}+\frac{1}{1-z_2} - 5 \right)$$
which satisfies the third condition of Lemma~\ref{regionthm}, as $1/(1-z_1)+1/(1-z_2) - 5 $ is increasing. Finally, we verify the first condition of Lemma~\ref{regionthm}  by integrating~$g-h$~over~$\cal C$. This integral is equal to $\mu(\mathcal{W}) = 0$ and thus all conditions of Lemma~\ref{regionthm} are satisfied.

\subsection{Uniqueness of Mechanism in Example~\ref{betaexample}}\label{app:uniqueness W}

To argue that the utility $u(x)$ is shared by all optimal mechanisms, we start by constructing an optimal solution $\gamma^*$ to the RHS of \eqref{eq:strong duality}. $\gamma^*$ needs to satisfy the complementary slackness conditions of Corollary~\ref{linearintegral} against any optimal solution $u^*$ to the LHS of~\eqref{eq:strong duality}. We will choose our solution $\gamma^*$ so that the complementary slackness conditions will imply $u^*=u$. Let us proceed with the choice of $\gamma^*$. Recall the canonical partition $Z \cup \mathcal{A} \cup \mathcal{B} \cup \mathcal{W}$ of the type space, identified above, and illustrated in Figure~\ref{betafig}. We define a solution $\gamma^*$ to the RHS of~\eqref{eq:strong duality} that separates into the four regions as follows (the optimality of this $\gamma^*$ follows easily by checking that it satisfies the complementary slackness conditions of Corollary~\ref{linearintegral} against $u$):

\paragraph{Region $Z$} Recall that, in region $Z$, we have $\mu|_Z \preceq_{cvx} 0$. Our solution $\gamma^*$ matches the $+1$ unit of mass sitting at the origin to the negative mass spread throughout region $Z$, by moving positive mass to coordinate-wise larger points and performing mean preserving spreads. By the complementary slackness conditions of Corollary~\ref{linearintegral} (see Remark~\ref{geometricremark} for intuition), it follows that $u^*(x)=0$, for any optimal solution $u^*$ to the LHS of~\eqref{eq:strong duality}. 

\paragraph{Regions $\cal A$ and $\cal B$} In regions $\cal A$ and $\cal B$ our solution $\gamma^*$ transports mass vertically and, respectively, horizontally. The complementary slackness conditions imply then that any optimal solution $u^*$ to the LHS of~\eqref{eq:strong duality} $u^*$ must change linearly in the second coordinate in region $\cal A$ and linearly in the first coordinate in region $\cal B$. 

\paragraph{Region $\cal W$} Finally, in region $\cal W$ we want to show that any optimal $u$ satisfies $|u(\vec x) - u(\vec y)| = \Lone{\vec x- \vec y}$ if $\vec x \ge \vec y$ coordinate-wise. This is not as straightforward as the previous 2 cases as we don't have an explicit description of the optimal dual solution. However, we can use Lemma~\ref{regionthm} to show that there exists a measure $\gamma^*$ which is optimal for the dual and matches types on the top right corner (with values $\approx (1,1)$) to types close to the bundling line (with values $x_1+x_2 \approx p^*$) which implies that any optimal function $u$ must be linear in $\cal W$.

By continuity, any optimal $u$ must be equal to $z_1+z_2-p^* = 0$ when $z_1+z_2 = p^*$. Moreover, it holds that $u(z) \le z_1+z_2-p^*$, because $u$ is $1$-Lipschitz. We will now show the reverse inequality by showing that $u(1,1) = 2-p^*$. Recall that the density of measure $\mu$ in region $\mathcal{W}$ is equal to:
$$\mu(z_1,z_2) = f_1(z_1)f_2(z_2)\left(\frac{1}{1-z_1}+\frac{1}{1-z_2} - 5 \right)$$
where $f_1(x) = f_2(x) = (1-x)$.
Lemma~\ref{regionthm} implied that $\mu_+|_\mathcal{W} \succeq_1 \mu_-|_\mathcal{W}$ but didn't give a transport map $\gamma$ constructively. To partially specify a transport map $\gamma$ that is optimal for the dual, we define for sufficiently small $\epsilon >0$ the measure $\mu'$ which has density
$$\mu'(z_1,z_2) = f_1(z_1)f_2(z_2)\left( \frac{1}{\epsilon} + \max \left(\frac{1}{1-z_2}, \frac{1}{1-z_1}\right) - 5 \right)$$
when $(z_1,z_2) \in [1-\epsilon,1]^2$ and $\mu'(z_1,z_2) =\mu(z_1,z_2)$ otherwise. In particular, $\mu'$ is obtained by removing some positive mass from $\mu$ in $[1-\epsilon,1]^2$ and thus $\mu'(\mathcal{W}) < \mu(\mathcal{W}) = 0$. Moreover, notice that we defined $\mu'$ so that $\frac {\mu'(z_1,z_2)}{f_1(z_1)f_2(z_2)}$ is still an increasing function. Now, let $R'$ be the region enclosed within the curves $s_1(x)$, $s_2(y)$, $x+y=p^*$ and $x+y=p'$ for $p'>p^*$ so that $\mu'(\mathcal{W} \setminus R') = 0$. This defines a decomposition of measure $\mu|_{\mathcal{W}}$ into two measures $\mu'|_{\mathcal{W}\setminus R'}$ and $\mu|_{\mathcal{W}} - \mu'|_{\mathcal{W}\setminus R'}$ of zero total mass (Figure~\ref{betafig2}).

\begin{figure}[!ht]
\centering
\begin{tikzpicture}
\begin{axis}[height=3in, width=3in, ymin=0, ymax=1, xmin=0, xmax=1,
  xtick pos=left, xtick={0.5,0.6666666}, xticklabels={$1/2$, $2/3$}, ytick pos=left, ytick={0.5,0.6666666666}, yticklabels={$1/2$,$2/3$}]
  
\addplot+[color=black, mark=none, style=densely dashed, domain=0:0.66666666666] {(2-3*x)/(4-5*x)};
\addplot+[color=black, mark=none, style=dotted, domain=0:0.5] {(2-4*x)/(3-5*x)};
\addplot[color=black, fill=gray!90,mark=none, samples=200, domain=0:0.5] 
{min( (2-3*x)/(4-5*x), (2-4*x)/(3-5*x), 0.7-x)}\closedcycle;
\addplot[color=black, fill=gray!50, mark=none, samples=200, domain=0:0.5] 
{min( (2-3*x)/(4-5*x), (2-4*x)/(3-5*x), 0.5534938-x)}\closedcycle;

\addplot[color=gray, mark=none]coordinates{
(.5534938-0.06187679,0.06187679)
(1,0.06187679)
};
\addplot[color=gray, mark=none]coordinates{
(0.06187679,.5534938-0.06187679)
(0.06187679,1)
};
\draw [color=black, fill=gray!90] (axis cs:.9,.9) rectangle (axis cs:1,1);

\node at (axis cs:0.03,0.85){$\mathcal{A}$};
\node at (axis cs:0.85,0.03){$\mathcal{B}$};
\node at (axis cs:0.6,0.65){$\mathcal{W}$};
\node at (axis cs:0.2,0.2){$Z$};
\node at (axis cs:0.31,0.31){$R'$};
\node at (axis cs:0.95,0.95){$\mathcal{H}$};
\end{axis}
\end{tikzpicture}
\caption{Decomposition of measure $\mu|_{\mathcal{W}}$ into measures $\mu'|_{\mathcal{W}\setminus R'}$ and $\mu|_{\mathcal{W}} - \mu'|_{\mathcal{W}\setminus R'}$. The dark shaded regions $R'$ and $\mathcal{H} = [1-\epsilon,1]^2$ show the support of $\mu|_{\mathcal{W}} - \mu'|_{\mathcal{W}\setminus R'}$.}\label{betafig2}
\end{figure}
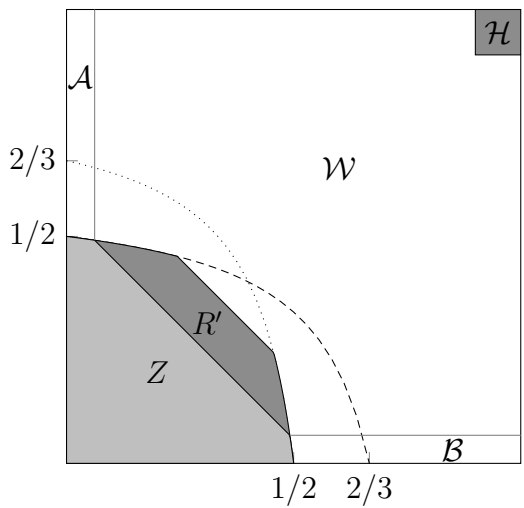

We apply Lemma~\ref{regionthm} for $\mu'$ in region $\mathcal{W} \setminus R'$ to get that $\mu'|_{\mathcal{W} \setminus R'} \succeq_{1} 0$. 
We also have that $(\mu - \mu')|_{\mathcal{W}} \succeq_{1} \mu|_{R'}$ since $(\mu - \mu')|_{\mathcal{W}}$ contains only positive mass supported on $[1-\epsilon,1]^2$ and every point in the support pointwise dominates every point in the support of $\mu|_{R'}$. Thus, there exists an optimal transport map $\gamma^*$ in region $\mathcal{W}$ such that
$\gamma^* = \gamma^{(i)} + \gamma^{(ii)}$ and $\gamma^{(i)}$ transports the mass $\mu'|_{\mathcal{W} \setminus R'}$ while $\gamma^{(ii)}$ transports mass arbitrarily from $(\mu - \mu')|_{\mathcal{W}}$ to $\mu|_{R'}$. Given such an optimal $\gamma^*$, the complementary slackness conditions of Corollary~\ref{linearintegral} imply that any feasible $u$ must satisfy $|u(\vec z) - u(\vec z')| = \Lone{\vec z- \vec z'}$ whenever mass is transfered from $\vec z$ to $\vec z'$. This can only happen if $u(1,1) = 2-p^*$ and implies that $u(\vec z) = z_1+z_2-p^*$ everywhere on $\mathcal{W}$.

 \section{Extending to Unbounded Distributions}\label{infiniteappendix}

Several results of this paper extend to unbounded type spaces, although such extensions impose additional technical difficulties. Here we briefly discuss how some of our results generalize.

We can often obtain a ``transformed measure'' (analogous to Theorem~\ref{setupclaim} even when type spaces are unbounded) using integration by parts. We wish to ensure, however, that the density function $f$ decays sufficiently quickly so that there is no ``surface term at infinity.'' For example, we may require that $\lim_{z_i \rightarrow \infty} f_i(z_i)z_i^2 \rightarrow 0$, as in \cite{DaskalakisDT13}. We note that without some conditions on the decay rate of $f$, it is possible that the supremum revenue achievable is infinite and thus no optimal mechanism exists.

Similar issues arise when integrating with respect to an unbounded measure $\mu$. It is helpful therefore to consider only measures $\mu$ such that $\int \Lone{x} d|\mu| < \infty$, to ensure that $\int u d\mu$ is finite for any utility function $u$. The measures in our examples satisfy this property. We can (informally speaking) attempt to extend this definition to unbounded measures (with regularity conditions such as $\int \Lone{x} d|\mu| < \infty$) by ensuring that whenever the ``smaller'' side has infinite value, so does the larger side.

Importantly, the calculations of  Lemma~\ref{weakduality} (weak duality) hold for unbounded $\mu$, provided $\int \Lone{x} d|\mu| < \infty$. Thus, tight certificates still certify optimality, even in the unbounded case. However, our strong duality proof relies on technical tools which require compact spaces, and thus these proofs do not immediately apply when $\mu$ is unbounded.

To summarize our discussion so far, we can often transform measures and obtain an analogue of Theorem~\ref{setupclaim} for unbounded distributions (provided the distributions decay sufficiently quickly), and can easily obtain a weak duality result for such unbounded measures, but additional work is required to prove whether strong duality holds. 
\end{document}